\newtheorem{theorem}{Theorem}
\newtheorem{proposition}{Proposition}
\newtheorem{lem}{Lemma}
\newtheorem{cor}{Corollary}
\def\calI{{\cal I}}
\def\calT{{\cal T}}
\def\bone{\mathbf{1}}
\def\Ind{{\calI}}
\def\be{\begin{eqnarray}}
\def\ee{\end{eqnarray}}
\def\IdioComp{{\cal W}}
\def\Esp{{\mathbb{E}}}
\def\EM#1{{\left<#1\right>}}
\DeclareMathOperator*{\argmin}{arg\,min}
\author{Emmanuel Bacry\footnote{CMAP CNRS-UMR 7641 and Ecole Polytechnique, 91128 Palaiseau}, Adrian Iuga\footnote{CREST and LAMA CNRS-UMR 8050, 92245 Malakoff}, Matthieu Lasnier\footnote{Kepler-Cheuvreux}, Charles-Albert Lehalle\footnote{Capital Fund Management, Paris and Imperial College, London}}
\date{Printed \today}
\begin{document}

\title{Market impacts and the life cycle of investors orders}
\maketitle

\begin{abstract}
  In this paper, we use a database of around 400,000 metaorders issued by investors and electronically traded on European markets in 2010 in order to study market impact at different scales.

  At the intraday scale we confirm a square root temporary impact in the daily participation, and we shed light on a duration factor in $1/T^{\gamma}$ with $\gamma \simeq 0.25$. Including this factor in the fits reinforces the square root shape of impact.
We observe a power-law for the transient impact with an exponent between $0.5$ (for long metaorders) and $0.8$ (for shorter ones). Moreover we show that the market does not anticipate the size of the meta-orders.
The intraday decay seems to exhibit two regimes (though hard to identify precisely): a ``slow'' regime right after the execution of the meta-order followed by a faster one.
At the daily time scale, we show price moves after a metaorder can be split between realizations of expected returns that have triggered the investing decision and an idiosynchratic impact that slowly decays to zero. 

  Moreover we propose a class of toy models based on Hawkes processes (the Hawkes Impact Models, HIM) to illustrate our reasoning. 
  We show how the Impulsive-HIM model, despite its simplicity, embeds appealing features like transience and decay of impact. The latter is parametrized by a parameter $C$ having a macroscopic interpretation: the ratio of contrarian reaction (i.e. impact decay) and of the "herding" reaction (i.e. impact amplification).
\end{abstract}



\section{Introduction}
\label{sec:definitions}

Interactions between the Price Formation Process (PFP) and market microstructure are of paramount importance: 
since regulatory evolution (Reg NMS in the US and MiFID 1 and now MiFID 2 in Europe, promoting competition among exchanges via an extensive use of electronic orderbooks) and the 2008 liquidity crisis, academics and market participants try to understand how to align the orderbook dynamics with the interests of final investors.

The market impact of large {\em metaorders}\footnote{A {\em metaorder} refers to a large investor buy or sell order which is executed as a succession of smaller orders.} is at the heart of market-microstructure regulation discussions.
Thus, quantifying this market impact has become a major issue. 
On the one hand, it seems reasonable to assume information processed by investors, driving their decisions, has a predictive content on future prices. Hence the price should move during the execution of their metaorders, in a detrimental direction (i.e. up if they buy, or down if they sell).
On the other hand, large orders lead to liquidity imbalance, mechanically moving the price in the same direction.
Part of this mechanical move generates trading costs. The important point is that trading costs prevent investors to initiate some trades, or at least decreases the performance of their trades; 
thus, better understanding of market impact should lead to  changes of  market microstructure regulations 
in order to lower the trading costs and consequently  improve the allocation of investors in listed firms.

The gap between informational move and mechanical reaction to trading pressure can be empirically investigated. However, available investors' metaorder databases do not include long track records.  Metaorders have been recorded in a systematic way only recently. 
In order to give an empirical answer to the nature of market impact,
accurate market data are not enough, a clear identification and time stamping of metaorders is needed too.
Few studies had access to this kind of database: \cite{citeulike:4325901}, \cite{citeulike:12838207}, \cite{doi:10.1080/14697680903373692}, \cite{citeulike:9771410}, \cite{TLDLKB}, \cite{bouchaud09}, \cite{FarmerLillo2006PriceImpact}, \cite{citeulike:13266538} and  \cite{lastlillo}. 

In this paper we report measures made on a database coming from the brokerage arm of a large European Investment Bank, whose trading flow was around 5\% of European investors' flow on equity markets at the time the metaorders have been recorded.
As it is detailed in Section \ref{sec:database}, this database is made of very coherent metaorders in terms of trading style, trading universe and market context.
This paper provides information on the impact of large orders on the PFP at many scales:
\begin{itemize}
\item at the scale of each metaorder: the \emph{temporary impact} and its relationship with explanatory variables are documented;
\item at a lower time scale: the price reaction to trading pressure (i.e. the \emph{transient impact}) is measured, 
  and the relaxation of prices once metaorders end (i.e. the \emph{impact decay}) is also studied.
\item at a larger scale: we zoom out days after the metaorder ends to document the \emph{permanent impact} in an attempt to disentangle it from the informational price move.
\end{itemize}
Besides, this paper provides a \emph{toy model} to discuss stylized facts discovered in the database. It is based on Hawkes processes, since they have demonstrated in previous papers to be well suited for orderbook dynamics modelling (\cite{citeulike:9217792}, \cite{Hewlett1}, \cite{BM}, \cite{HBB}).

Our paper is organized as follows.
Section 2 introduces metaorder and market impact related definitions. It describes the main database and the various filters that will be applied to it in order to obtain subset databases used all along the papers. It finally present the main market impact estimation principles and a first estimation of the whole market impact curve. Sections 3, 4, and 5  presents the results we obtained respectively on the temporary, the transient and the decay market impact. In Section 6, we introduce a Hawkes Impact Model (HIM) which is able to reproduce most of the stylized facts previously observed on both transient and decay market impacts. Permanent impact results and discussions can be found in Section 7 and we conclude in Section 8.


\section{Definitions, Database and Market impact estimation principles}
\label{sec:defs}

%

%
\subsection{Basic definitions}
\label{sec:defimp}
Let $\Omega$ be the set of all the available meta-orders. Let us point out that we will only deal with metaorders that are fully executed within a single day. Let $\omega \in\Omega$ such a metaorder executed on an instrument $S$ and on a given day $d$. We consider that its execution starts at (intraday) time 
$t_0$ and ends the same day at time $t_0 + T$.
The metaorder $\omega$ has a size $v$ which corresponds to the number of shares effectively executed. We note 
$v_t$ the number of shares that have been executed up to time  $t\in [t_0,t_0+T]$.
The sign of $\omega$ will be noted $\epsilon$ (i.e. $\epsilon=1$ for a buy order and -1 for a sell order).
Let us point out that all the quantities introduced in this section depends implicitely on $\omega$ (we should write $t_0(\omega)$, $T(\omega)$, \ldots). For the sake of simplicity, we chose to ommit this dependence whenever there is no ambiguity. 

Thanks to external market data, we also have access to the daily volume $V$  traded the same day $d$ on the instrument $S$ (summed over all the European trading venues), and to the volume traded $\bar v$, by the whole market (including the metaorder) between time $t_0$ and $t_0+T$.  We use ${\bar v}_t$ for the market volume traded between $t_0$ and an arbitrary $t>t_0$. The daily volatility (estimated using the Garman-Klass approach \cite{GAR80} on intraday 5 minutes bars) is $\sigma$.
The average bid-ask spread (on the aggregated venues) the same day is $\psi$.

These quantities can be combined to obtain:
\begin{itemize}
\item \emph{the trading rate} $r=v/{\bar v}$;
\item and \emph{the daily participation} $R=v/V$.
\end{itemize}
In order to be less dependent on the market activity during the execution of the metaorder, we will sometimes use the \emph{average participation rate} $\dot\nu=R/T$ as a replacement for $r$.

\subsection{Market impact definitions}
\label{sec:defimp}
The \textit{market impact curve} of a meta-order $\omega$ quantifies the magnitude of the (relative) price variation which is {\bf due to the meta-order} $\omega$ between the starting time of the meta-order $t_0$ and the current time $t>t_0$.
Like all other authors, we assume that the impact of the execution of $\omega$ on the price is additive.
Let $\Delta P_{t}(\omega)$ be a proxy for the realized price variation between time $t_0$  and time $t_0+t$, we thus write
\begin{equation}
  \label{eq:cam:MIdef}
  \epsilon(\omega) \Delta P_t(\omega)= \eta_t(\omega) + \epsilon(\omega) \Delta W_t(\omega),
\end{equation}
where $\eta_t(\omega)$ represents the market impact curve and $\Delta W_t$ the exogeneous variation of the price : it corresponds to the relative price move that would have occurred if the metaorder was sent to the market (and consequently not executed).

Let us note that the choice for the proxy $\Delta P_{t}$ does not influence the results we obtained all along the paper. We decided to follow Almgren \cite{citeulike:4325901} and Bershova $\&$ Rakhlin  \cite{citeulike:12838207} and use the {\em return proxy} defined by
$$
\Delta P_{t} = \frac {P_t-P_{t_0}}{P_{t_0}},
$$
where $P_t$ represents the mid-price of the instrument at time $t$.
Let us point out that, in order to obtain more heterogeneous prices, some authors (Lillo and Farmer \cite{citeulike:9771410}) prefer to combine the log-return and the spread relative proxies:
they choose to divide the log-return proxy  by $\psi(\omega)$.

\vskip .3cm
\noindent
Finally, we introduce the following terminology in order to address the different parts of the market impact curve :
\begin{itemize}
\item $\eta_{t_0+T}(\omega)$ corresponds to the \textit{temporary market impact}, i.e., the impact at the end of the meta-order execution,
\item $\{\eta_t(\omega)\}_{t_0\le t \le t_0+T}$ corresponds to the \textit{transient market impact curve}, the impact curve during the execution of the meta-order,
\item $\{\eta_t(\omega)\}_{t_0+T\le t}$ corresponds to the \textit{decay market impact curve}, the impact curve after the execution of the meta-order and
\item $\eta_{t}(\omega)$, where $t \gg t_0+T$ corresponds to the \textit{permanent impact}. What we exactly mean by the limit $t \gg t_0+T$ will be make clearer in Section \ref{sec:permanent}.
For now, it is sufficient for the reader to know that it refers to an extraday time limit, sufficiently far from the ending-time $t_0+T$ of $\omega$, so that the impact of $\omega$ can be considered as no more influenced by trading mechanics.
\end{itemize}

\subsection{The main metaorder database}
\label{sec:database}

The database $\Omega$ is made of nearly 400,000 metaorders.
The selected metaorders have been traded electronically by a single large broker during year 2010 on European markets.
We built different databases from the original 400,000 database (see Table \ref{tab:filters} for complete list of all database and associated filters)  in order to adapt to the part of the market impact curve under study and to have as much orders as possible for each time scale :
\begin{itemize}
\item For intraday studies, we only kept orders traded by trading algorithms whose trading rate is as much as possible independent from the market conditions. Note in \cite{lastlillo}, authors underline the potential influence of the trading rate on the market impact components. It allows to avoid sudden accelerated trading rates having an hidden influence on the price moves. Hence we kept VWAP, TWAP, PoV and only few Implementation Shortfall instances (see Appendix \ref{sec:algonames} for definitions of these acronyms).
\item We kept orders large enough to protect our estimation from noise.
\item For the decay market impact curve, we needed the metaorder execution to stop halfway to the market close in order to observe prices relaxation long enough between the end of the metaorder and the closing time.
\end{itemize}
We ended up with five main databases: $\Omega^{(te)}$ to study temporary impact, $\Omega^{(tr)}$ to study transient impact and impact decay, and $\Omega^{(de)}$ to study decay impact and $\Omega^{(day)}$ to study daily effects. 
The filter \emph{consistency with market data} means that we reject metaorders such that we cannot identify their child orders in the market data within the same second (i.e. when a child order is sent or is partially fill or fill, we try to find the correposning modification in the market data: same price, same quantity, same second; if we cannot, we reject the metaorder).
See Appendix \ref{sec:datatables} for detailed statistics on these databases.

\begin{table}[!h]
\centering
	\begin{tabular}{|c|c|c|}\hline
Databases $\Omega$ & Filters & Number of meta-orders\\\hline
Original database $\Omega$ & $\emptyset$ & \textbf{398.812} \\\hline
Daily database $\Omega^{(day)}$ & smooth execution condition & \textbf{299.824} \\\hline
\multirow{2}{*}{Database for temporary impact $\Omega^{(te)}$} & 10 atomic orders minimum & 191.324\\
& consistency with HF market data & \textbf{157.061}\\\hline
\multirow{4}{*}{Database for transient impact $\Omega^{(tr)}$} & number of daily trades on the stock $\geq 500$ & 150.100\\
& $T > 3$ minutes & 134.529\\
& $r\in[3\%,40\%]$ & 94.818\\
& $R\in[0.1\%,20\%]$ & \textbf{92.100}\\\hline
Database for decay impact $\Omega^{(de)}$ & $t_0(\omega)+2T(\omega)<$ closing time & \textbf{61.671}\\\hline
\end{tabular}
\caption{Different databases and filters used to obtain them. See Tables \ref{tab:data:dist} and \ref{tab:stat:all} in Appendix \ref{sec:datatables} for more statistics on those databases.}
  \label{tab:filters}
\end{table}

\subsection{Market impact estimation principles and first estimation}
\label{sec:principles}
The market impact curve $\eta_t$ as defined in Section \ref{sec:defimp}
is clearly not directly observable.
As already explained, the exogeneous price move $\Delta W_t$ in Eq. \eqref{eq:cam:MIdef} corresponds to the price move that would have occurred if the metaorder was not placed at all.
If $\Delta W_t$ was entirely independent of the decision process which lead the agent to place the metaorder $\omega$, then it would correspond to an independent noise term that could be reduced by averaging on a large number of metaorders. However, 
the investor shares information at the root of his decision process with other investors, and consequently $\Delta W_t$ is often not independent of the \emph{decision} that generated $\omega$. Indeed, consider as an example the case of a trend-following agent: he will place a buying meta-order only if he forecasts an increase in the price, and his forecast is generally not independant of the realized price move $\Delta W_t$.

From a market impact estimation point of view, this term corresponds to a non-independent additive ``noise term''.
This ``informational pollution'' of market impact measurement is eluded in some papers (like \cite{citeulike:4325901}), and discussed in others (like \cite{citeulike:9771410}, \cite{citeulike:12825932}, \cite{citeulike:13266538}).
In our paper, we shall neglect it when ``only'' studying the relative dependence of the impact curve on some explanatory variables, i.e.,
for intraday measurements (see Sections \ref{sec:temporary}, \ref{sec:transient} and \ref{sec:decay}), but we shall take it into account when estimating ``absolute'' levels of impact, i.e., 
for daily estimates (Section \ref{sec:permanent}).

In Fig. \ref{fig:prox}, we show a first estimation of the market impact curve using the database $\Omega^{(de)}$. This estimation corresponds to the quantity
\begin{equation}
  \label{eq:cal:avg1}
  {\hat \eta_s}=\EM{\epsilon(\omega)\Delta P_{t}(\omega)}_{T=1},
\end{equation}
where $\EM{\ldots}_{T}$ stands for a three-step procedure : 
\begin{itemize}
\item Compute  the quantity in between bracket for each $\omega$
\item For each $\omega$, perform a time-rescaling on this quantity such that the duration $T(\omega)$ is rescaled to the same fixed value $T$ \item Average on all these time-rescaled quantities. The so-obtained rescaled time is referred to as $s$ whereas $t$ denotes the physical time. Thus $s=T$ (in Fig. \ref{fig:prox} we chose $T=1$ so $s=1$)
refers to the time the executions of the meta-orders are over.
\end{itemize}
As already explained, this estimation relies on the assumption that the ``exogenous market moves'' $\Delta W_t$ cancel once averaged. Meaning that, as a random variable, $\Delta W_t$ should have finite variance and basically satisfy 
$\Esp(\epsilon(\omega)\Delta W_t(\omega))=0$ (and not $\Esp(\Delta W_t(\omega))=0$).

Let us point out that, on Fig. \ref{fig:prox}, one can easily identify the two sections of the market impact curve : the concave transient part (during execution) and the convex decay part (after execution).
\begin{figure}[!h]
  \centering
  \includegraphics[width=.8\linewidth]{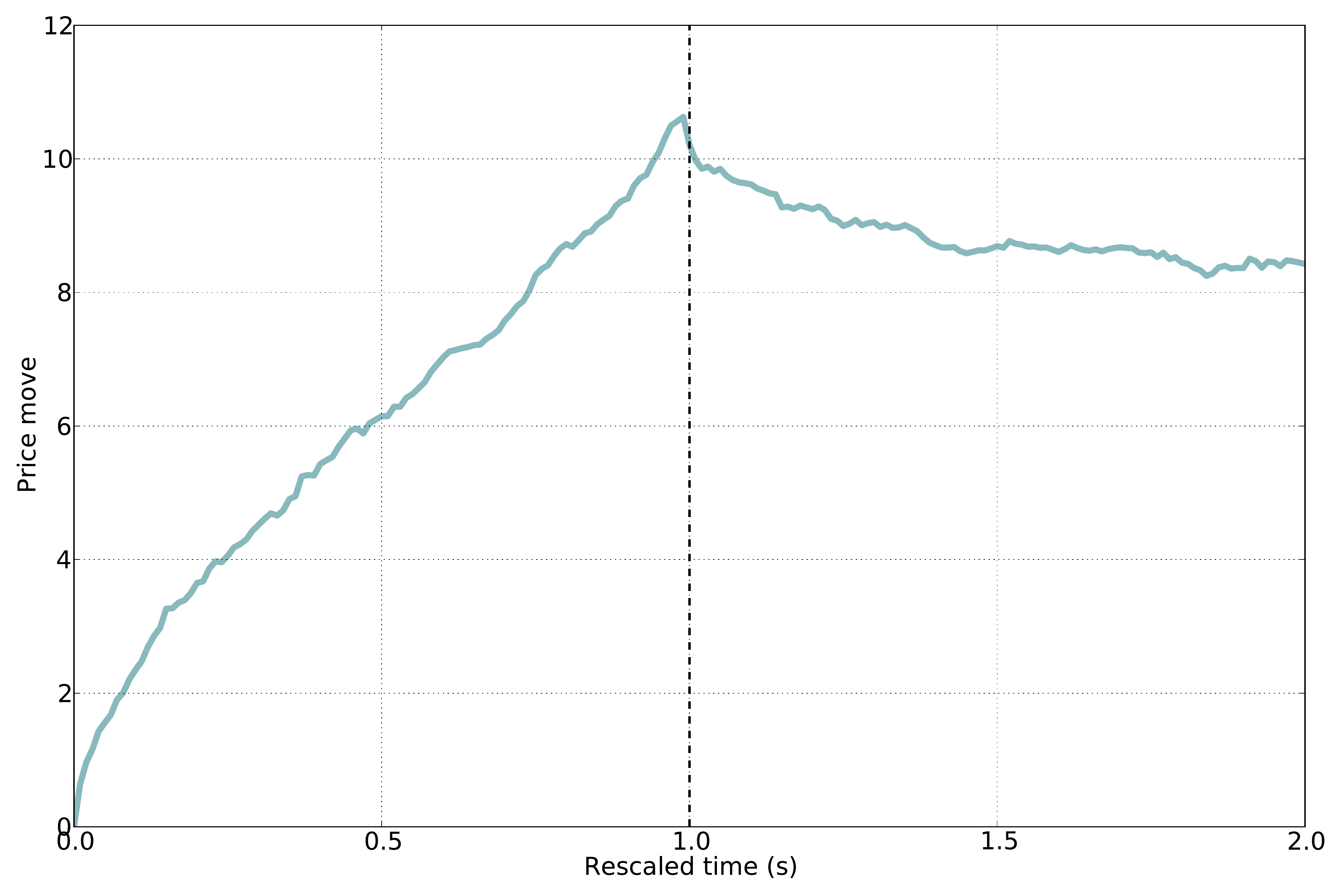}
  \caption{The average market impact curve \eqref{eq:cal:avg1} for $T=1$: on the horizontal axis we use the rescaled time $s$ (so that $s=1$ corresponds to the  time the execution of the meta-order is over).}
    \label{fig:prox}
\end{figure}

\paragraph{Studying the influence of an explanatory variable through averaging.}
One could follow the same lines to study the influence on the market impact curve of a given explanatory variable $X(\omega)$.
For that purpose, one would use an ``averaging'' of equation \eqref{eq:cam:MIdef} over ``slices'' of the variable $X$.
For instance one can split the possible values of the daily participation $X(\omega) = R(\omega)$ in intervals $q^X_1,\ldots,q^X_K$ naturally associated to $K$ quantiles and compute,
as an estimate of the temporary market impact, 
\begin{equation}
  \label{eq:cal:avg}
  {\hat \eta}_X(k)=\EM{\epsilon(\omega)\Delta P_T(\omega)~|~X(\omega)\in q^X_k}_{T}
\end{equation}
where $\EM{\ldots}_{T}$ corresponds to the notation used in Eq. \eqref{eq:cal:avg1}.
Of course, as explained before,
this approach relies on the assumption that the ``exogenous market moves'' $\Delta W_t$ cancel once averaged, i.e., essentially 
$\Esp(\epsilon(\omega)\Delta W_t(\omega)~|~X(\omega)\in q)=0$.


This averaging procedure allows to identify a dependence to a single explanatory variable only. To be able to regress the market impact on more than one explanatory variable, for instance three variables, one should use $K$ quantiles for each variable, i.e. averaging on $K^3$ subsets (i.e. hypercubes) of $\mathbb{R}^3$. The number of metaorders used to estimate one $ {\hat \eta}(k_1,k_2,k_3)$ would then be divided by $K^3$. Thus the variance of the estimator would be roughly  multiplied by $K^3$! So this procedure is not adequate for studying the dependency on more than one explanatory variable. The ``Direct regression'' procedure described below is more adequate.

Let us point out that, if linear regressions are performed after the previously described averaging procedure, the usual statistical measures of regression quality will not have the usual meaning. The ``$R^2$'' for instance can be very high, even if the relation between the impact and the explanatory variables is very noisy (since the averaging step leverage on the central limit theorem to cancel most of the ``noise''). 

This averaging methodology is nevertheless useful to draw figures and obtain qualitative results. 
It will be used through this paper.

\paragraph{Direct regression.}
An alternative approach is to fit directly an explicit model on Eq. \eqref{eq:cam:MIdef}. Let's say we want to fit the parameters of a power law linking the market impact to the explanatory variable $X=R$, i.e., the daily participation of the metaorder.
It leads to the following parametric version of \eqref{eq:cam:MIdef}:
\begin{equation}
  \label{eq:cal:directest}
  \epsilon(\omega)\Delta P_T(\omega)=a\cdot X(\omega)^\gamma + \epsilon(\omega)\Delta W_T(\omega).
\end{equation}
As far as we assume once more that the exogenous prices moves are independent enough from $\omega$ to be averaged, we can select a distance $d(\cdot,\cdot)$ in the space of returns $\Delta P$ and use any minimization method to obtain ``fitted'' parameters $(a^*,\gamma^*)$:
\begin{equation}
  \label{eq:fitAG}
  (a^*,\gamma^*) = \arg\min_{(a,\gamma)} \EM{d\left( \epsilon(\omega)\Delta P_T(\omega), a\cdot X(\omega)^\gamma \right)^2}
\end{equation}
where $\EM{\ldots}$ corresponds to the empirical mean of the corresponding quantity computed on the set of all the metaorders. This approach will be also used in the case we want to study the dependency on more that one explanotory variable.

This approach uses all the points in the database (and not their averaged version), and thus produces more accurate results. 
Of course, as for the averaging method introduced above, it basically relies on the 
independence between metaorder initiation (starting time) and the exogeneous market moves ($\Delta W_T$).
Under this hypothesis, let us point out that the regression residuals has a very simple interpretation : it corresponds to an estimation of $\Delta W_T$.
It allows to test, a posteriori,  the independence hypothesis.

In the following, we shall use the usual $L^2$ distance or the $L^1$ distance (which is better suited for dealing with fat tails and/or rare but intense events).


\section{The temporary market impact}
\label{sec:temporary}

\subsection{Selection of the explanatory variables}
The temporary market impact has been mainly studied from three viewpoints:
\begin{itemize}
\item As the main source of trading costs. The obtained model can then be used in an optimal trading scheme (see \cite{citeulike:4325901}, \cite{doi:10.1080/14697680903373692} and \cite{citeulike:6728746} 
and for link with optimal trading see  \cite{OPTEXECAC00}, \cite{citeulike:10363463} and \cite{citeulike:5797837}), or used by an investment firm to understand its trading costs (like in \cite{citeulike:4368376}, \cite{citeulike:12838207}, \cite{citeulike:13266538} or \cite{citeulike:12802757} written by author involved in investment firms).
\item It can be viewed as an important explanatory variable of price discovery and studied as such, often by economists, like the seminal work of Kyle \cite{citeulike:3320208} or later in \cite{citeulike:10138505} or \cite{citeulike:98140}.
\item Last but not least, statistical tools have been built to be able to estimate the temporary market impact at the scale of one trade (see \cite{BM} or \cite{citeulike:9574751}). The implicit conditioning of such ``atomic'' orders by metaorders is sometimes discussed in such papers, but it is not their main goal.
\end{itemize}

One common point of these studies is that the temporary market impact of a metaorder $\omega$ of size $v(\omega)$  includes three main components :
\begin{itemize}
\item A component reflecting the \emph{size of the metaorder}, resized by something reflecting the volume in the orderbooks of the traded security.  The daily participation $R(\omega)$ or the trading rate $r(\omega)$ should capture most of the dynamics of this component. 
\item A component rendering the \emph{uncertainty on the value of the underlying price} during the metaorder. The volatility
during the metaorder $\sigma(\omega)$ or the bid-ask spread $\psi(\omega)$ are typical measures for this.
\item And a last component  that captures the \emph{information leakage} generated by the metaorder, a good proxy being its duration $T(\omega)$.
\end{itemize}
Let us point out that all authors found multiplicative relations between each of these components and their corresponding explanatory variable, so, in the scope of an estimation using the averaging methodology, we expect a linear dependence of the temporary market impact $\eta_{T}(\omega)$ on the logarithm of these explanatory variables.

As an example, the left plot of Figure \ref{fig:scatter} displays 
$\hat \eta_{X=R}$ as defined in Eq. \eqref{eq:cal:avg} versus the daily participation
$R$ when averaging on all the metaorders of $\Omega^{(te)}$.
It clearly shows the influence of $R$ on the impact : the higher the daily participation rate the higher the impact. The same result is found when replacing the daily participation $R$ by the trading rate $r$ (see right plot of the same figure).

\begin{figure}[!h]
  \centering \includegraphics[width=1\linewidth]{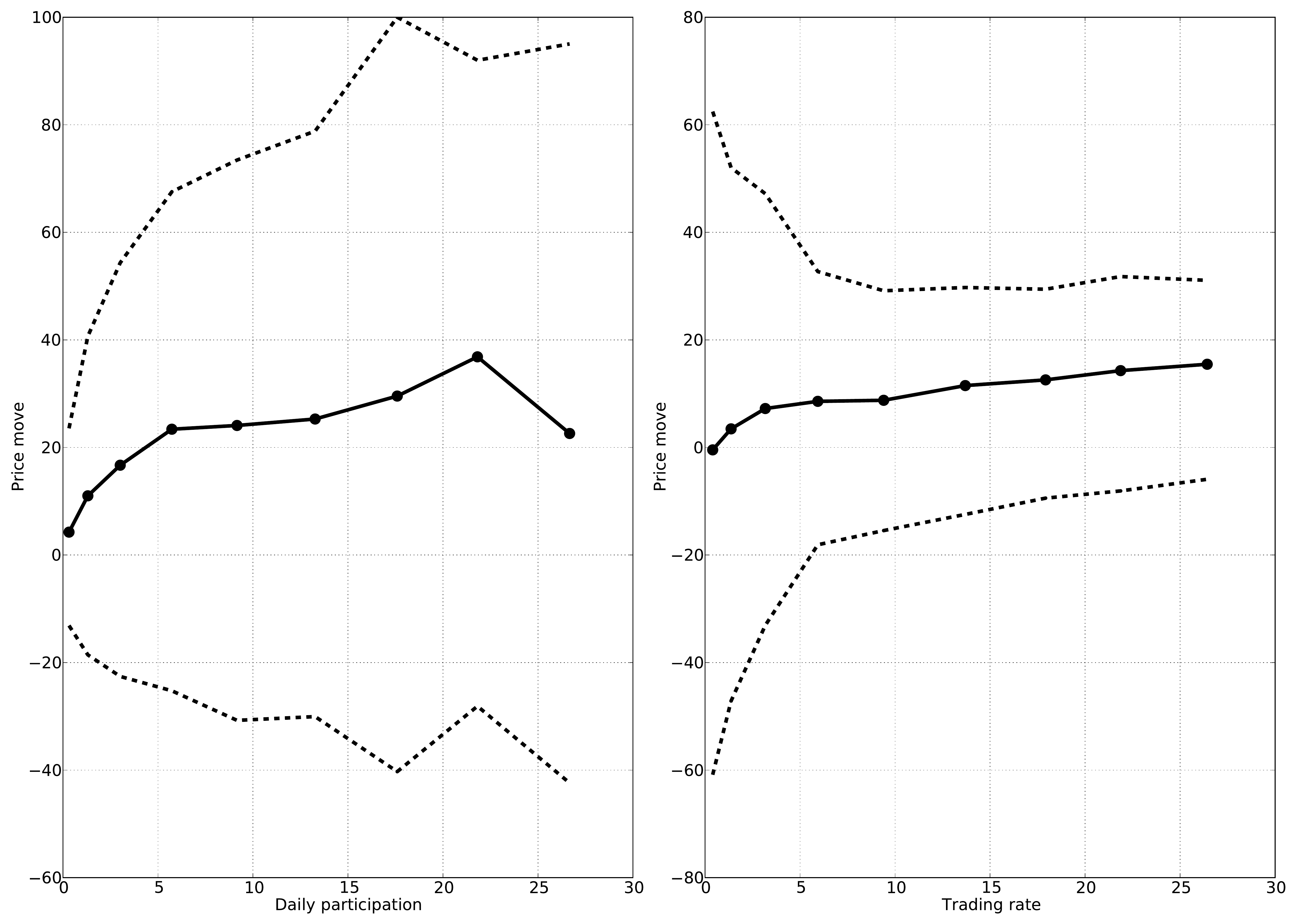}
  \caption{The estimated market impact $\hat\eta_X$ (as defined in Eq. \eqref{eq:cal:avg}) as a function of the daily participation $X=R$ (left) or of the trading rate $X=r$ (right). Each point is the average of one decile of the X variable, dotted lines are $25\%$ and  $75\%$ quantiles, showing the amplitude of market moves.} 
  \label{fig:scatter}
\end{figure}



\subsection{Numerical results}
To study temporary market impact and its dependence to  explanatory variables, we use the $\Omega^{(te)}$ database (see Table \ref{tab:filters}) and followed the direct regression approach described in Section \ref{sec:principles}.
As an example, we tested the dependence in the daily participation $X=R$, since it has been identified as significant by other papers. It means that we fitted Eq. (\ref{eq:cal:directest}):
$$\epsilon(\omega)\Delta P_T(\omega)=a\cdot R(\omega)^\gamma + \epsilon(\omega)\Delta W_T(\omega),$$
and found an exponent $\gamma\simeq 0.449$ using the $L^2$ distance and a lower exponent (around $0.40$) using the $L^1$ distance (see regression {\bf (R.1)} in Table \ref{tab:res:r}). The fact that we obtain different estimation when using the two distances $L^1$ and $L^2$
can be explained by the fact that the joint distribution of $\epsilon \Delta P$ and $R$ is skewed to large values of $\Delta P$. The $L^2$ distance has no other choice than to render this skewness by setting an high value to $\gamma$, while the $L^1$ distance focuses more on the center of the distribution. The source of this skewness  could stem from an informational effect as a dependence between $\epsilon$ and $W_T$. However, we do not have enough elements to conclude on this point.

Once the power exponent corresponding to a given explanatory variable $X$ has been estimated as above, the effect of any other explanatory variable $Y$ can be explored and shown by averaging the residuals over quantiles of $Y$ itself. We will refer to the associated chart as the \emph{trace of $Y$ on the residuals}.

Top (resp. bottom) plot of Fig. \ref{fig:trace:duration} shows the trace of the duration $Y=T$ of the metaorders on the residuals of the $X=R$ daily participation regression using $L^2$ (resp. $L^1$) distance.
The slope is clearly negative.
Moreover, one can notice the bump around 0.6 days (corresponding to 6 hours after opening). We suspect it stems from the macroeconomic news one hour before the opening of NY markets, 6 hours after the opening of European ones. At the open of US markets, volatility is higher and the dependence between the side of the metaorder and the market move can be higher too.

\begin{figure}[!h]
  \centering
  \includegraphics[width=1\linewidth]{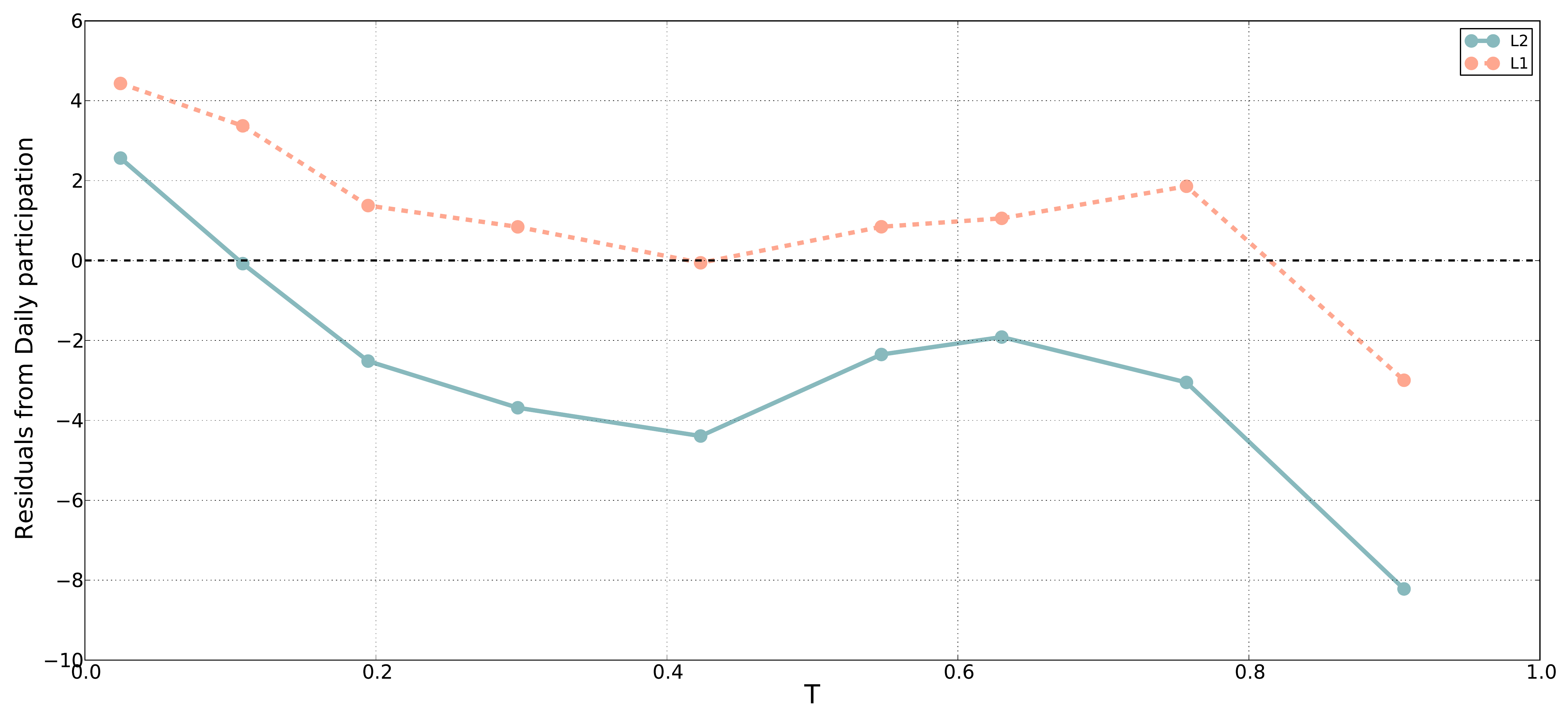}
  \caption{Trace of the duration $Y=T$ on the residuals of the $X=R$ daily participation regression. Top: using an $L^2$ metric, bottom: using a $L^1$ metric.} 
  \label{fig:trace:duration}
\end{figure}

\begin{table}[h!]
  $$
\begin{array}{|crrrr|}\hline
\mbox{Regression} & \mbox{Parameter} & \mbox{Coef. (log-log)} & \mbox{Coef. (L2)} & \mbox{Coef. (L1)} \\\hline
{\bf (R.1)} & & & & \\
& \mbox{Daily participation} &  0.54 &  0.45 &  0.40 \\
\hline
{\bf (R.2)} & & & & \\
& \mbox{Daily participation} &  0.59 &  0.54 &  0.59 \\
& \mbox{Duration} & -0.23 & -0.35 & -0.23 \\
\hline
{\bf (R.3)} & & & & \\
& \mbox{Daily participation} &  0.44 &  - & - \\
& \mbox{Bid-ask spread} &  0.28 &  - & - \\
\hline
{\bf (R.4)} & & & & \\
& \mbox{Daily participation} &  0.53 &  - &  - \\
& \mbox{Volatility} &  0.96 & - &  - \\
\hline \hline
{\bf (R'.1)} & & & & \\ 
& \mbox{Trading rate} &  0.43 &  0.33 &  0.43 \\
\hline
{\bf (R'.2)} & & & & \\
& \mbox{Trading rate} &  0.37 &  0.56 &  0.45 \\
& \mbox{Duration} &  0.15 &  0.24 &  0.23 \\
\hline
{\bf (R'.3)} & & & & \\
& \mbox{Trading rate} &  0.32 &  - & - \\
& \mbox{Bid-ask spread} &  0.57 &   -& - \\
\hline
{\bf (R'.4)} & & & & \\
& \mbox{Trading rate} &  0.32 &  - & - \\
& \mbox{Volatility} &  0.88 &  - &  - \\
\hline
\end{array}$$

  \caption{Direct regression approach algorithm described in Section \ref{sec:principles} of the temporary market impact for various sets of explanatory variables. For each set, the power exponent estimations are given using $L^1$ distance, $L^2$ distance and regular log-log regressions. There is an horizontal line $-$ when there is not significant difference between the three regressions.
}
  \label{tab:res:r}
\end{table}

To confirm the dependence in $T$, we fit a power law on the trading rate $X=r$ instead of the daily participation $R$:
$$\Delta P_T(\omega)=a\cdot r(\omega)^\gamma + \epsilon(\omega)\Delta W_T.$$
We found respectively power 0.43, 0.33 and 0.42 for log-log regression, $L^2$ minimization and $L^1$minimization
(see regression {\bf (R'.1)} in Table \ref{tab:res:r}). The trace of $Y=T$ on the residuals of this regression (see Figure \ref{fig:trace:duration:nu}) exhibits a positive slope, confirming the way duration $T$ can be used in combination to daily participation or trading rate to improve the modelling of impact.

As a comparison with other academic studies:
\begin{itemize}
\item \cite{citeulike:9771410} use the bid-ask spread $\psi$ as prefactor to the daily participation (i.e. regression {\bf (R.3)} for us) they find (for the daily participation rate) a power of 0.44 to 0.48 for metaorders executed on the Spanish stock exchange and 0.64 to 0.72 for metaorders executed on the London stock exchange. We find it fits better with $\psi^{0.3}$ instead of $\psi$ and identify a power of $0.4$ for the daily participation.
\item \cite{citeulike:12825932} arbitrary assume a prefactor which is the volatility $\sigma$ and a square root law for the daily participation rate. Our regression {\bf (R.4)} of Table \ref{tab:res:r} agrees with a linear effect in volatility and a power around 0.45 for $R$
(the square root law would correspond to an exponent of 0.5). \cite{citeulike:12838207} use a model linear in volatility and a power of $0.47$. With the same volatility prefactor, \cite{ATHL} find a (larger) power of $0.6$.
\end{itemize}

\begin{figure}[!h]
  \centering
  \includegraphics[width=1\linewidth]{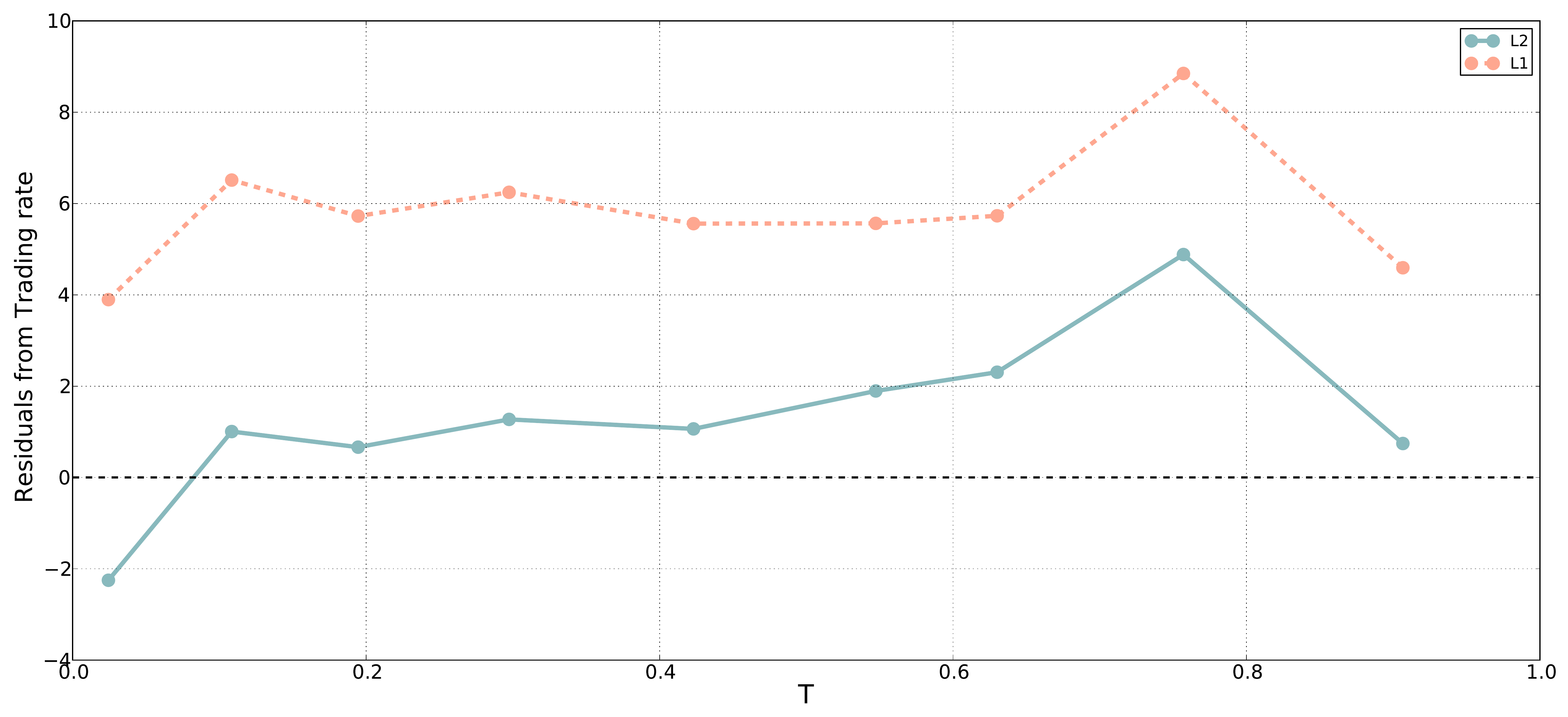}
  \caption{Trace of the duration $Y=T$ on the residuals of the trading rate $X=r$ regression. Top: using an $L^2$ metric, bottom: using a $L^1$ metric.} 
  \label{fig:trace:duration:nu}
\end{figure}

We performed direct fit of other explanatory variables and found the usual dependencies in $\sigma$ and $\psi$.
Table \ref{tab:res:r} gives the results of direct regression approach algorithm described in Section \ref{sec:principles} for various sets of explanatory variables. For each set, the power exponent estimations are given using $L^1$ distance, $L^2$ distance and regular log-log regressions. 

Though 
our investigations seem to confirm the celebrated \emph{square root law} according to which the temporary market impact is a function of the square root of the daily participation rate (see {\bf (R.1), (R.2), (R.3),
(R.4)} in Table \ref{tab:res:r}),
it nevertheless shows that the exponent vary with the metric used for the fit (being lower for a direct $L^2$ estimation instead of a log-log fit and even lower for a $L^1$ distance), underlying potential asymmetric regimes in the impact formation.
However, our result depart from the \emph{square root law} in the sense that we
clearly identified a duration effect via a $1/T^{\gamma}$ factor with $\gamma \simeq 0.25$ (see {\bf (R.2)}).
This influence of the duration of the metaorder may be difficult to measure when the trading process is too opportunistic (especially when the duration is a function of apparent prices opportunities during the trading), explaining why all authors did not noticed it. Our database being dominated by trading algorithms with more stable benchmarks (VWAP, PoV), we have been able to capture this mechanical effect of trading pressure on price formation.

The regressions using the daily participation $R$ and the ones using the trading rate $r\simeq R/T$ are consistent (especially the fit obtained via direct regressions rather than for ones using log-log ones, that are biased, as explained earlier); for instance:
{\bf (R.2)} states $\eta\simeq \sqrt{R}/T^{0.25}$ and {\bf (R'.2)} reads $\eta\simeq \sqrt{r} \times T^{0.25}$ (for direct regressions). 

Adding volatility or bid-ask spread to an explanation of temporary market impact by the participation rate $r$ (see regressions {\bf (R'.3)} and {\bf (R'.4)}) does not change its exponent: $\eta\simeq \psi^{1/2} r^{1/3} \simeq \sigma r^{1/3} $, but provoke a small change in the exponent of the trading rate $R$: $\eta\simeq \psi^{0.3} R^{0.44} \simeq \sigma\sqrt{R} $. It is not sure this difference is generic. Attempt to use more than two variables in regressions did not provided significant relationships.

\section{The transient market impact curve}
\label{sec:transient}

\subsection{A power law fit}
\label{sec:tmistat}
Previous section confirms, as many other empirical studies before us, that the temporary market impact of a metaorder of size $v$ is proportional to ${v}^\gamma$ (with $\gamma$ close to 0.5). 
Thus, we expect the temporary market impact of the first half of the execution (the first $v/2$ contracts) to be more important than the one of the second half (the last $v/2$ contracts). 
Generalizing this argument to any portion of the metaorder, we expect the transient market impact curve to be a concave function of the time. 
The first empirical study confirming this intuition is due to Moro et \textit{al.} (\cite{citeulike:9771410}), lately, it has also been  confirmed  by the work of Bershova $\&$ Rakhlin (\cite{citeulike:12838207}). In both cases, behavior close to power-laws were found.
Let us point out that the   \emph{latent order book model} of \cite{citeulike:10477292} can be seen as a possible qualitative explanation of this well established stylized fact. In this model the agents place limit orders only when the price is close enough to their vision of the price. Thus, more and more liquidity is revealed as the price is trending, it results in  "slowing down" this trend.

In this section we use our $\Omega^{(tr)}$  database (see Table \ref{sec:database}) and first confirm the concavity of the transient market impact curve.
Apart from this well known stylized fact, we study the link between the curvature of the transient market impact curve and the execution duration.


Let us recall that the transient 
market impact curve corresponds to 
restricting the market impact curve 
to time $t \le T$.   
In practice, we follow the guidelines of Section \ref{sec:principles} and,
following \eqref{eq:cal:avg1}, we compute
\begin{equation}
\label{eq:cal:avg2}
  {\hat \eta}_s=\EM{\epsilon(\omega)\Delta P_{sT}(\omega)}_{T=1}.
\end{equation}
Let us recall that 
$\EM{\ldots}_{T}$ means that averaging is performed after rescaling in time for each metaorder $\omega$  so that all durations correspond to $T=1$. So the function ${\hat \eta}_s$ is a function of the rescaled time $s$ (let us recall that in the paper we use the letter $s$ to refer to the rescaled time, whereas $t$ stands for the physical time). 
We sampled this estimation in $s \in [0,1]$ on 100 points using a uniform sampling grid.
Fig. \ref{fig:transient:all11} illustrates this computation and shows that a power law behavior
\begin{equation}
\label{betatr}
\hat \eta_{s}\propto s^{\gamma^{(tr)}}
\end{equation}
with $\gamma^{(tr)} = 0.64$ fits the curve according to a log-log regression when $s\le 1$ (the $(tr)$ subscript in $\gamma^{(tr)} = 0.68$ stands for {\em transient}). Our empirical findings are compatible whit Moro \textit{et al.} (\cite{citeulike:9771410} ) which found an exponent equal to $0.62$ for metaorders executed on London Stock Exchange (LSE) and $0.71$ for metaorders executed on Spanish Stock Exchange (BME, Bolsas y Mercados Espãnoles). 
\begin{figure}[!h]
	\centering
	\includegraphics[width=.8\linewidth]{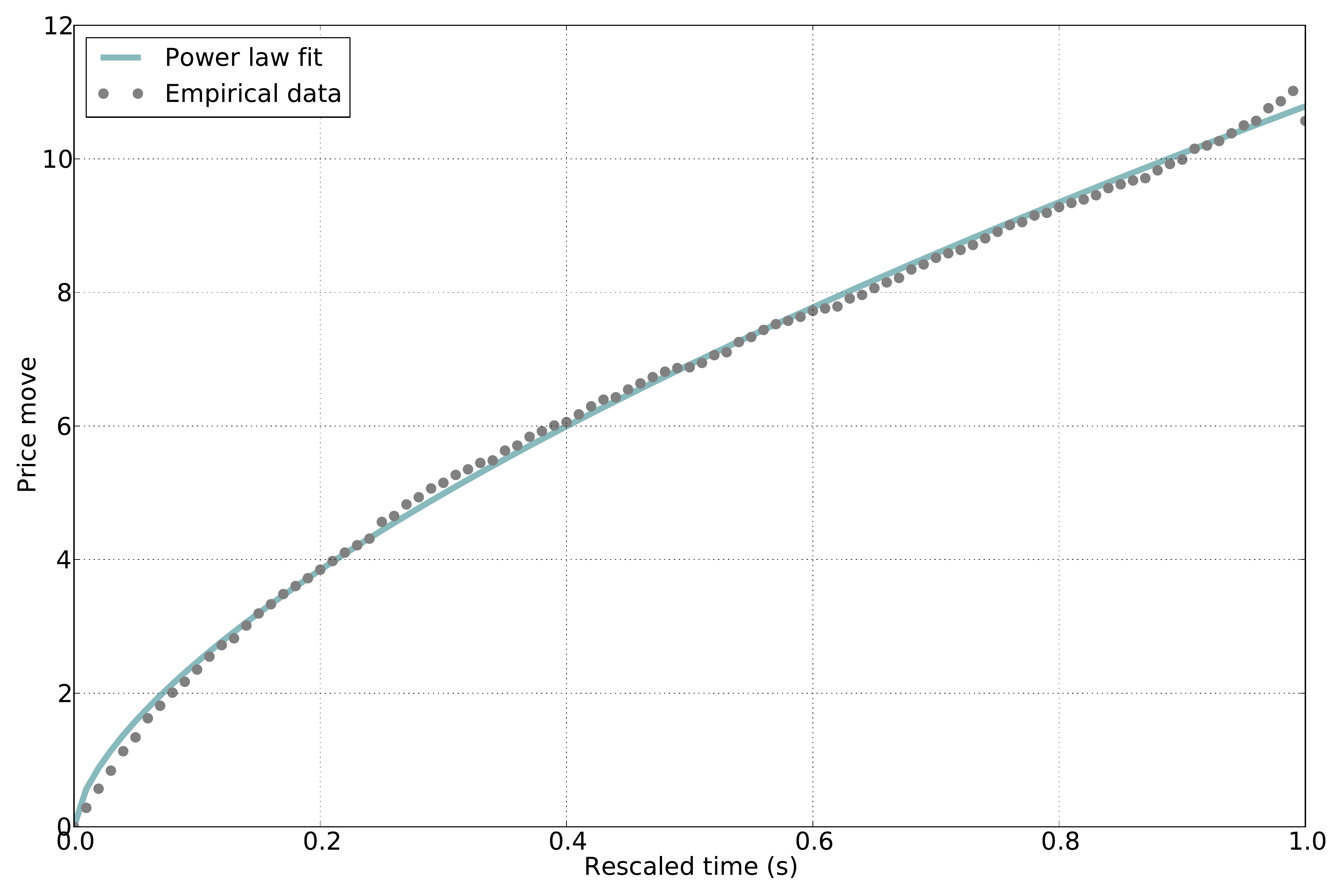}
	\caption{The estimated transient market impact curve $\hat \eta_{s}$ as defined by \eqref{eq:cal:avg2}; $\gamma^{(tr)}=0.64 $ .}
\label{fig:transient:all11}
\end{figure}

\subsection{Concavity and execution duration}

We shall now study the influence of the duration $T$ of the metaorder on the transient market impact using $\Omega^{(tr)}$. 
In order to avoid spurious effects of other explanatory variables, we chose to work on the subset of metaorders which correspond to a given range of daily participation  $R\in [1\%,3\%]$, selecting in this way  31,105 metaorders. 
We have checked that the so-obtained results do not qualitatively change for different intervals.
Following the averaging methodology of Section \ref{sec:principles} (see Eq. \eqref{eq:cal:avg} with $X(\omega) = T(\omega)$), we then fit the power $\gamma^{(tr)} $ for different durations intervals, to check the dependency with respect to $T$.
More precisely we choose the six following intervals (the duration are expressed in minutes) :  $T\in[3,15]$, $T\in[15,30]$, $T\in[30,60]$, $T\in[60,90]$, $T\in[90,300]$ and $T\in[300,510]$, each containing around 6,000   metaorder occurrences. Fig. \ref{fig:tmi} shows the transient market impact for each of these 6 groups. In order to point out the different regimes,  on each so-obtained graph, we performed a power-law fit.
The power-law fit is obtained by linear regression on a log-log representation. 
In order to test the robustness of our results we use bootstrap regressions 
drawing randomly 500 times 80\% of the available metaorders in the corresponding buckets  (see \cite{GIN97} for references on bootstrap). Table \ref{tab:boots:transient} gives the so-obtained statistics.
\begin{figure}[ht!]
     \begin{center}
        \subfigure [$T \in  [3,15[$, $\gamma^{(tr)} \simeq 0.80$]{%
            \label{fig:first_tmi}
            \includegraphics[width=0.33\linewidth]{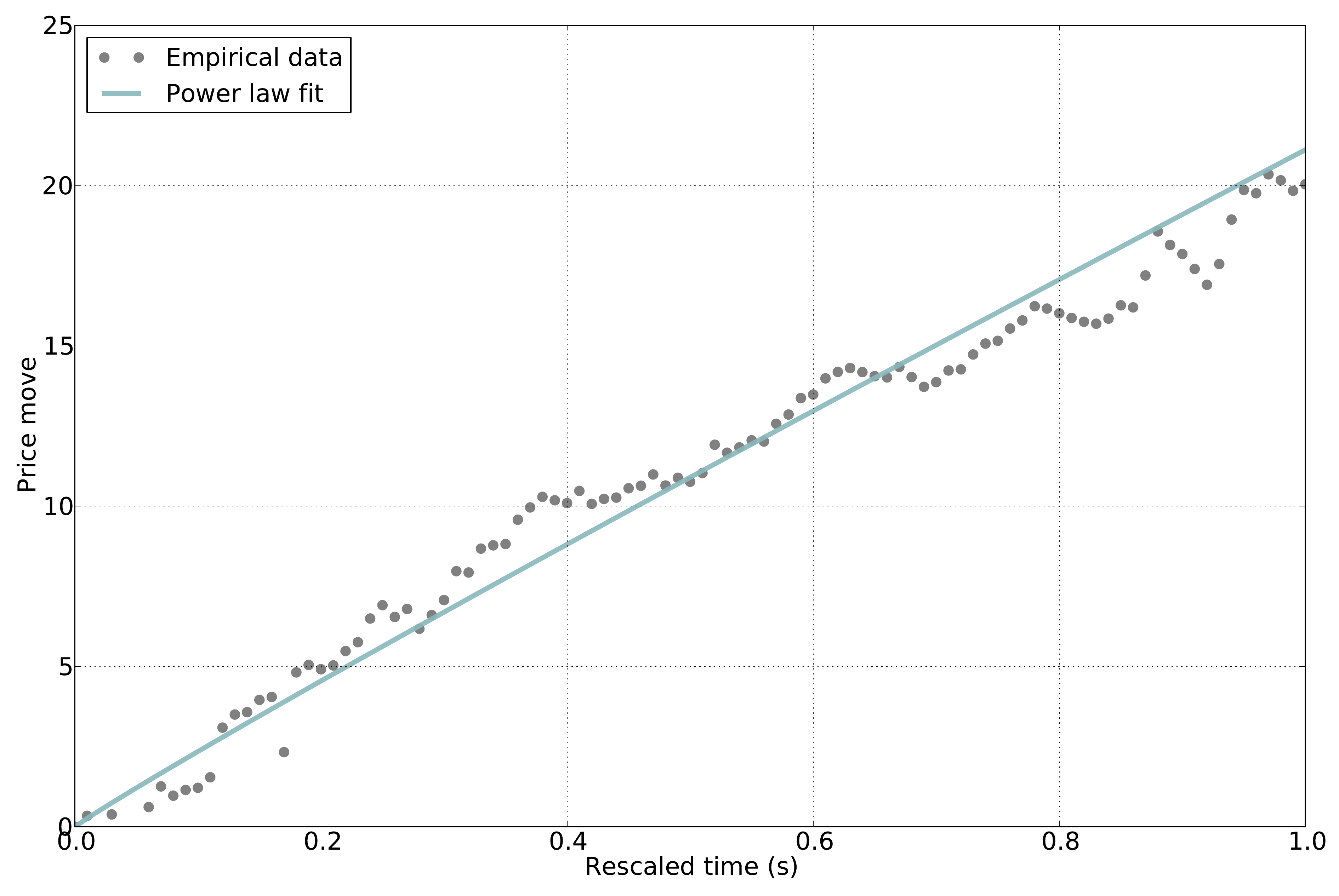}
        }%
        \subfigure [$T \in [15,30[$, $\gamma^{(tr)} \simeq 0.66$]{%
           \label{fig:second_tmi}
           \includegraphics[width=0.33\linewidth]{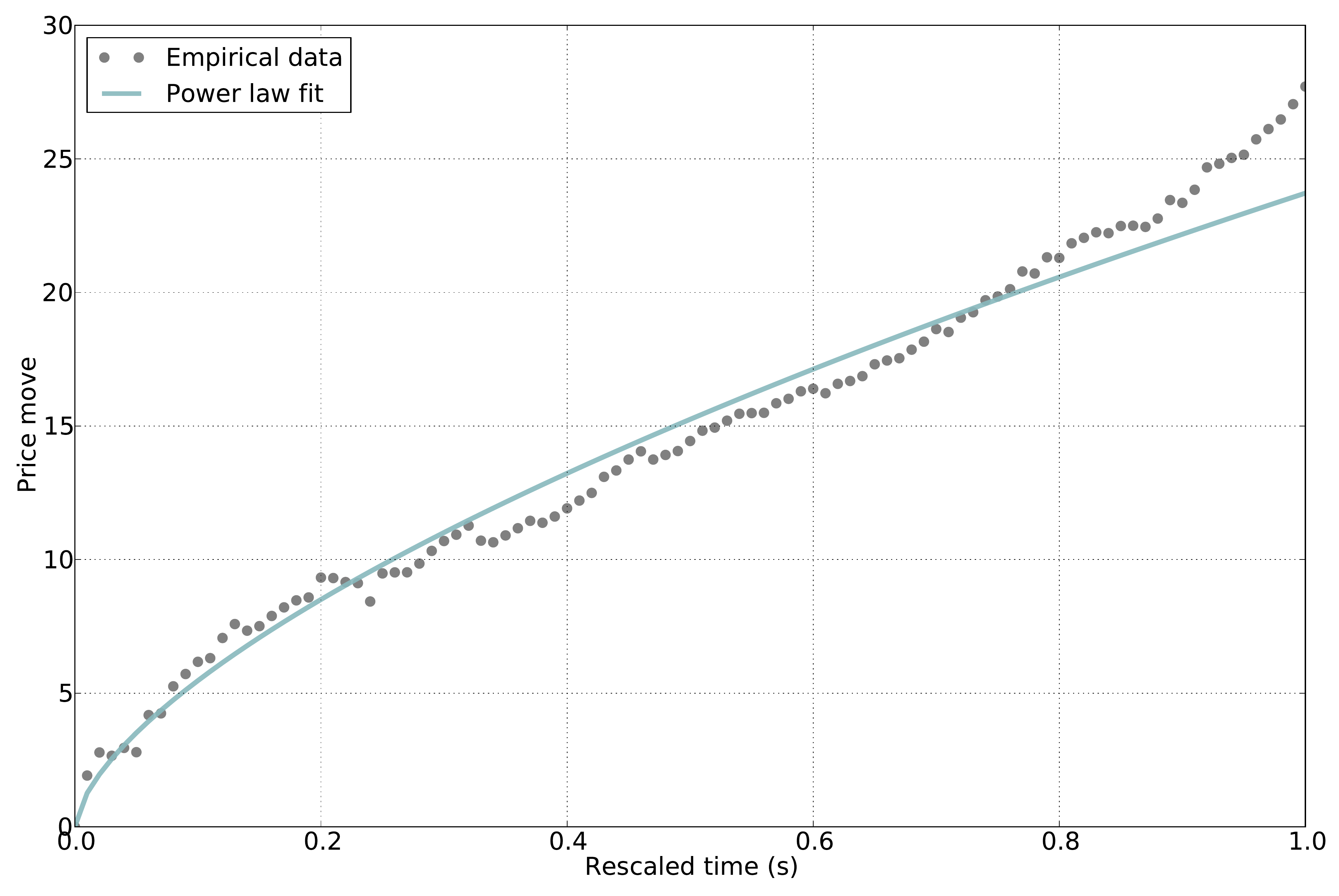}
}%
        \subfigure [$T \in [30,60[$, $\gamma^{(tr)} \simeq 0.63$]{%
           \label{fig:third_tmi}
           \includegraphics[width=0.33\linewidth]{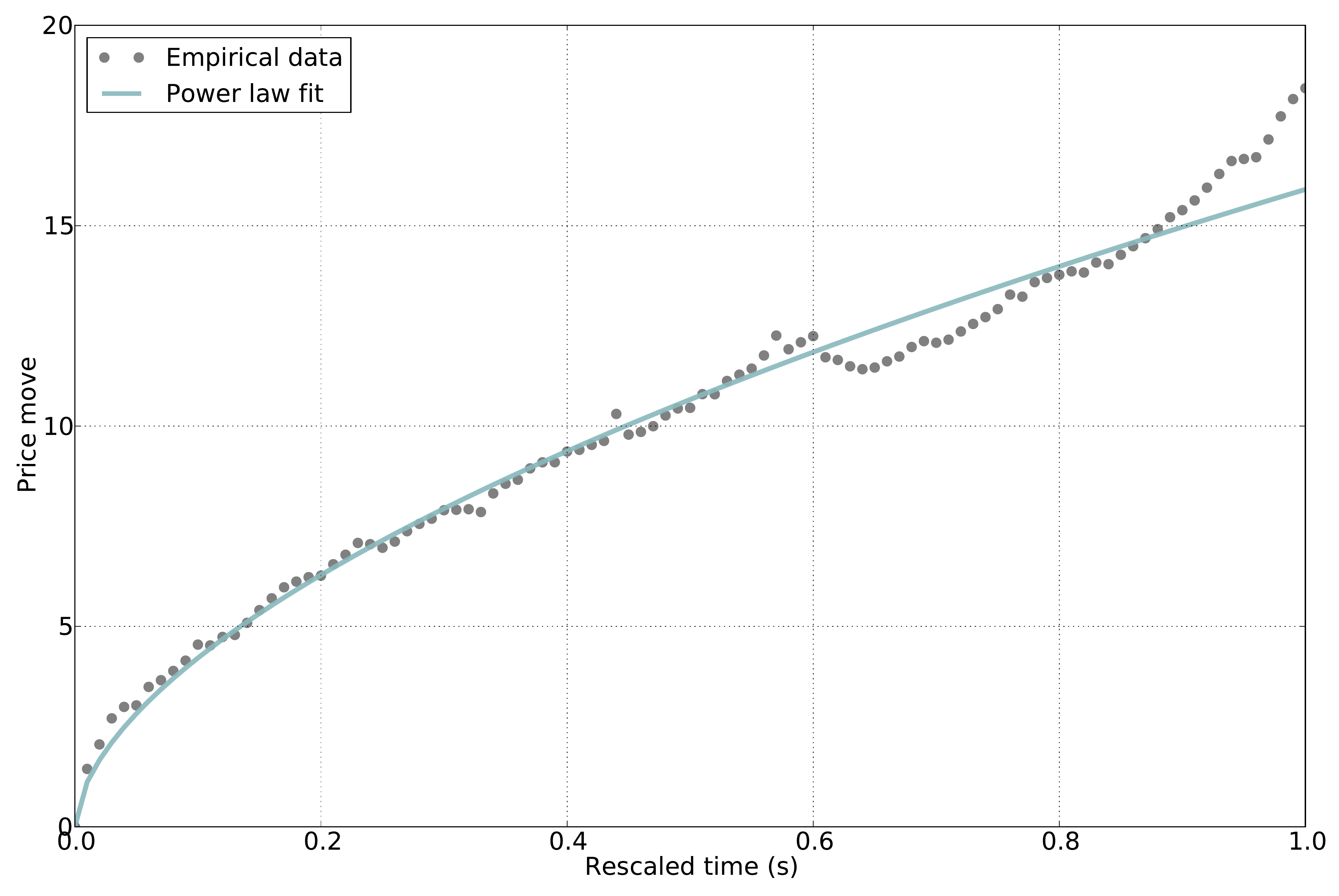}
        }\\ 
        \subfigure [$T \in  [60,90[$, $\gamma^{(tr)} \simeq 0.56$]{%
            \label{fig:forth_tmi}
            \includegraphics[width=0.33\linewidth]{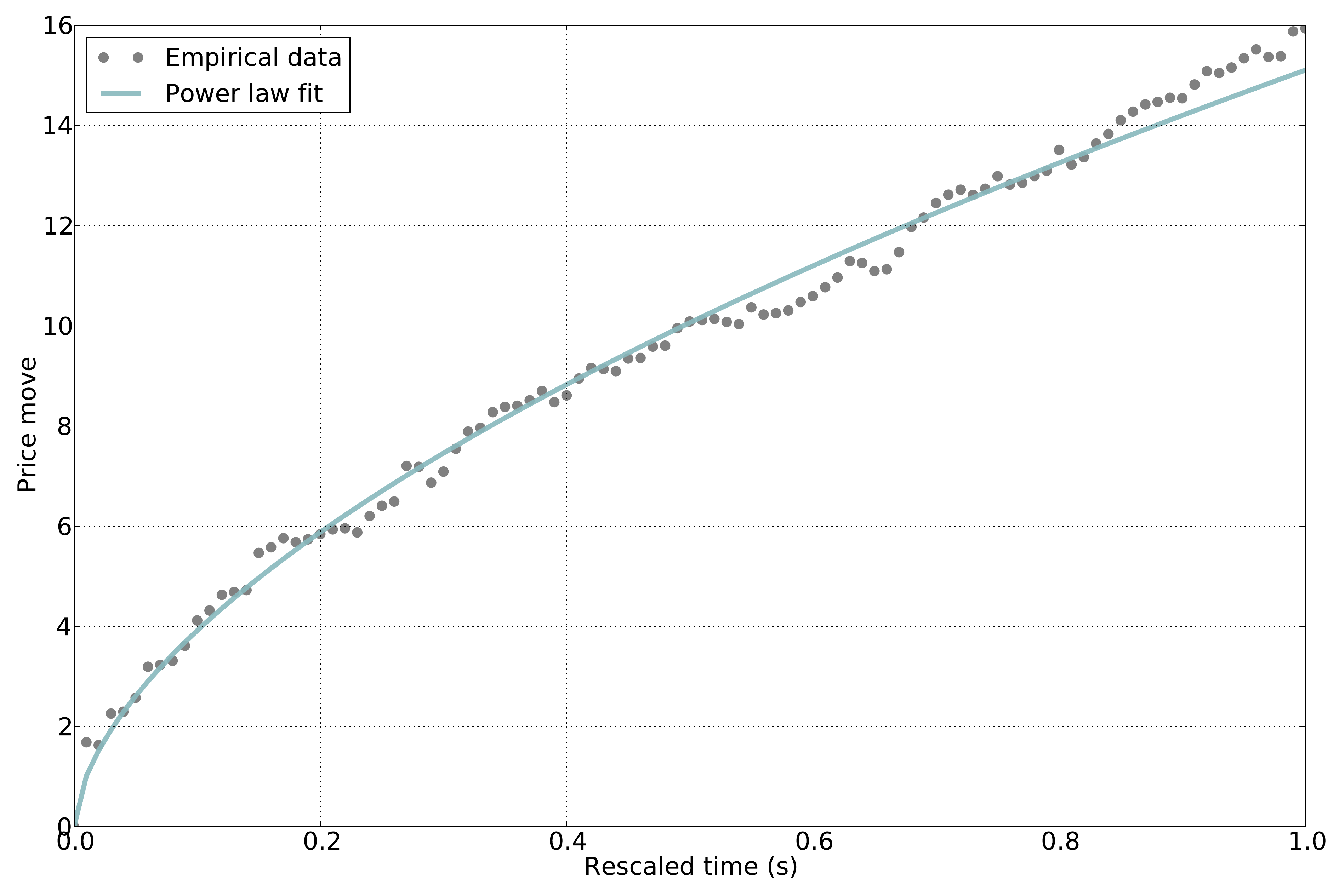}
        }%
        \subfigure [$T \in  [90,300[$, $\gamma^{(tr)} \simeq 0.54$]{%
            \label{fig:fifth_tmi}
            \includegraphics[width=0.33\linewidth]{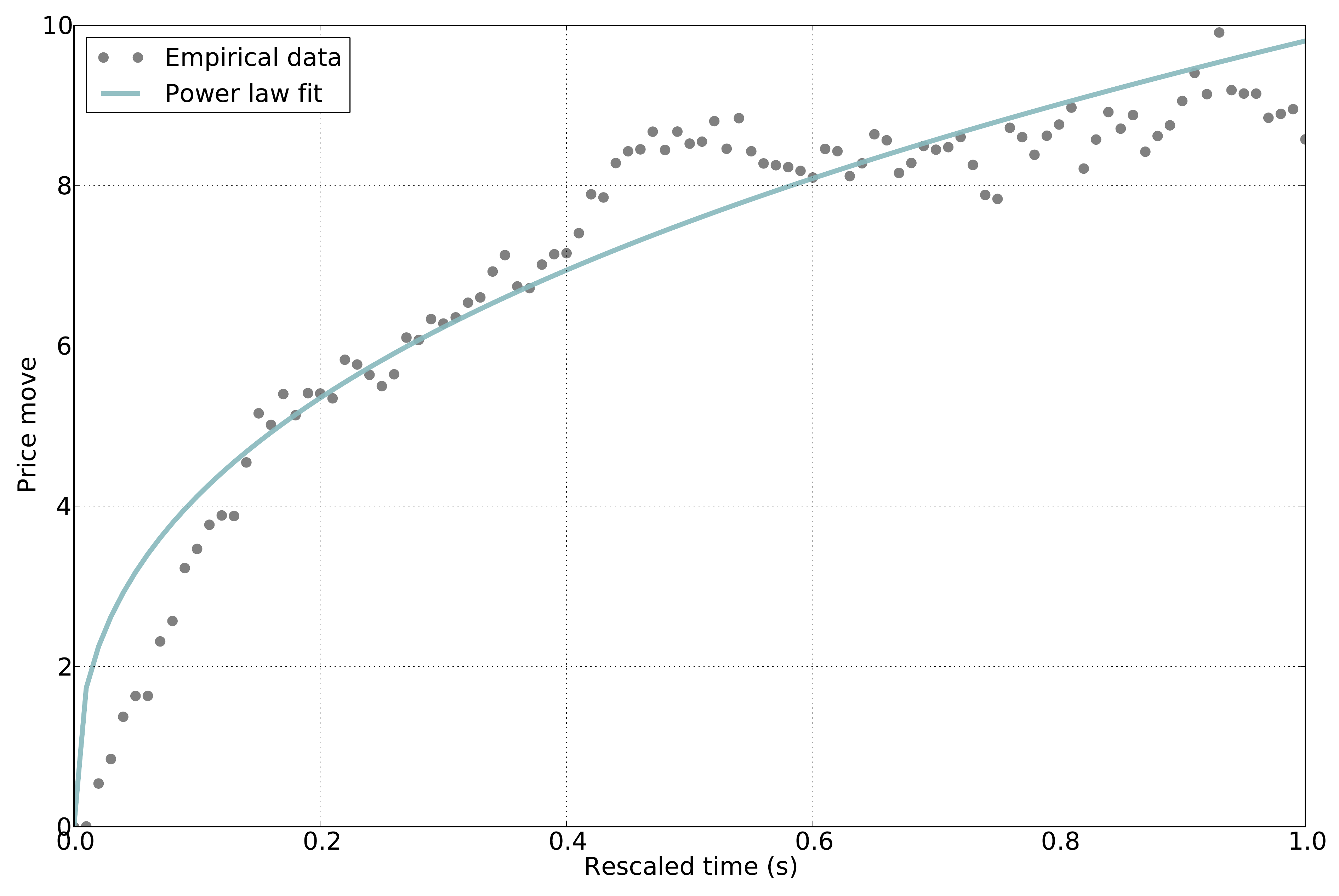}
        }%
        \subfigure [$T \in I_T = [300,+\infty[$]{%
            \label{fig:sixth_tmi}
            \includegraphics[width=0.33\linewidth]{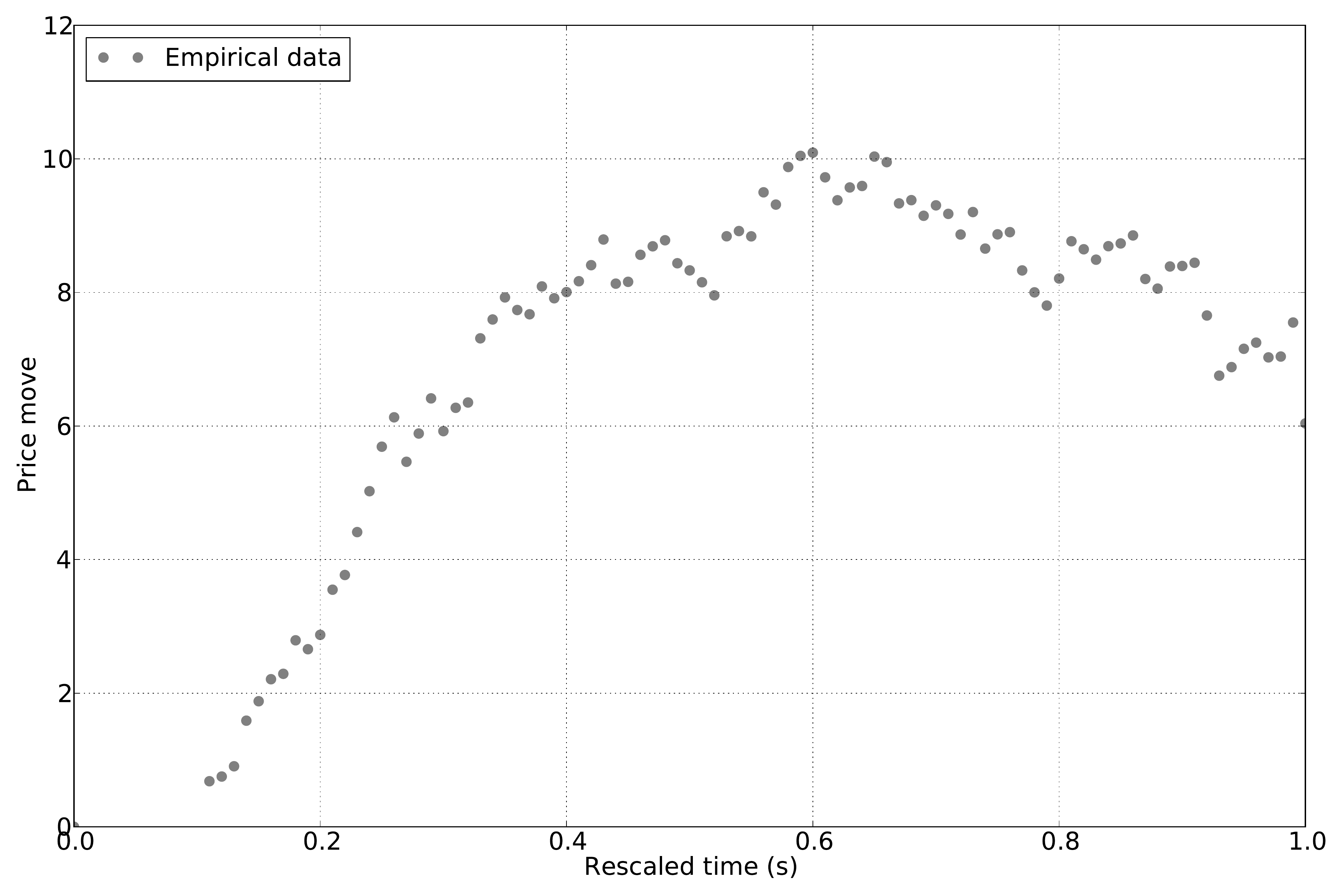}
        }%
    \end{center}
    \caption{
        Estimated transient market impact as a function of the renormalized time $s$ for a daily participation $R \in [1\%,3\%]$ and for different duration intervals.
        Each graph corresponds to a different duration interval.
        In each case, 
        a power-law is fitted leading to an estimation of $\gamma^{(tr)}$.
        Statistics on the corresponding distributions of $\gamma^{(tr)}$ are shown in Table \ref{tab:boots:transient}.
        On (a)-(e), we see that the larger $T$ the larger the curvature of the transient market impact and the smaller the temporary market impact. For very long metaorders (f), decay seems to happen before end of market order.
     }%
   \label{fig:tmi}
\end{figure}

\begin{table}[h!]
  \centering
  \begin{tabular}{|c|l|c|c|c|c|c|c|} \hline
    Subset name & Execution time & Average $\gamma^{(tr)}$  & Q5$\%$ & Q25$\%$ & Q25$\%$ & Q75$\%$ & Q95$\%$  \\ \hline\hline
${\calT}_1$ & $T=[3,15]$ & 0.80 & 0.76 & 0.78 & 0.80 & 0.82 & 0.85 \\\hline
${\calT}_2$ & $T=[15,30]$ & 0.66 & 0.62 & 0.65 & 0.66 & 0.68 & 0.70 \\\hline
${\calT}_3$ & $T=[30,60]$ & 0.62 & 0.58 & 0.60 & 0.62 & 0.64 & 0.66 \\\hline
${\calT}_4$ & $T=[60,90]$ & 0.55 & 0.49 & 0.52 & 0.56 & 0.58 & 0.62 \\\hline
${\calT}_5$ & $T=[90,300]$ & 0.54 & 0.48 & 0.52 & 0.55 & 0.57 & 0.62 \\\hline
  \end{tabular}
  \caption{Statistics (mean and quantiles) on the distribution of the power-law exponent $\gamma^{(tr)}$ of the transient market impact estimation of metaorders with a participation rate 
  $R \in [1\%,3\%]$. The exponent is estimated using log-log regression conditioned on different duration intervals. The larger $T$ the larger the curvature of the transient market impact and the smaller the temporary market impact (see Fig. \ref{fig:tmi}).}
  \label{tab:boots:transient}
\end{table}
We observe that transient market impact, considered in renormalized time $s=t/T$ is a multi-regime process. The first five plots of Fig. \ref{fig:tmi} (from (a) to (e)) show clearly that, for a daily participation rate (as we already mentioned changing the participation rate interval does not affect the results),  when the duration of a metaorder decreases
\begin{itemize}
\item the transient market impact of a metaorder increases and
\item the curvature decreases leading to an almost linear transient market impact for small durations.
\end{itemize}
Thus, when executed faster, a metaorder seems to have a stronger and more linear impact. These results are rather intuitive : when a metaorder has a short duration, the market has hardly the time to "digest" it resulting into a strong linear impact (not enough time for relaxation). 
However, 
Fig. \ref{fig:tmi}(e) seems to show that a kind saturation is reached before the end of the metaorder. Actually Fig. \ref{fig:tmi}(f) surprisingly shows that when the duration $T$ becomes very large, the market impact curve starts decaying {\em before} the end of the metaorder.  From our knowledge, this is the first study pointing out this effect that we cannot explain at this stage.

\paragraph{Market prediction of metaorder sizes.}
In this paragraph, we want to study whether the market has or has no precise insights about the total size of a given metaorder before the end of its execution (apart of course from the unconditionnal distribution of the metaorder sizes).
We also document the shape of the transient market impact in absolute time $t=sT$.
In order to do so, we compare the transient market impact of different metaorders which have the same \emph{average participation rate} $\dot\nu=R/T$.
It is interesting to check if two metaorders $\omega_1$ and $\omega_2$ with similar $\dot\nu$ but having different durations $T_1$ and $T_2$ have the same transient impact as far as $t\le T_1\wedge T_2$.

For that purpose, we create 
five groups of meta-orders $A_i$ ($i=1,\dots,5$) with the same \emph{average participation rate } $\dot v=R/T$ but different durations. 
In detail:
\begin{equation}
\label{def:exec_speed}
A_i=\left\{\omega \in \Omega^{(tr)}: R(\omega)\in[2^{i-1}R_0, 2^i R_0[ \mbox{~~and~~} T(\omega)\in[2^{i-1}T_0, 2^iT_0[\right\},
\end{equation}
where $R_0 = 0.25$ and $T_0 = 5$ seconds.
Thus, all the selected meta-orders correspond, in 
a good approximation, to the same average participation rate
$\dot v = R_0/T_0 = 0.05 s^{-1}$.  Moreover increasing the index $i$ by one corresponds to double metaorders duration. 

For each group $A_i$, we compute the corresponding transient market impact (with rescaled time) $\hat \eta_s^{(i)}$.
For each $i=1,\dots,4$, Fig.  \ref{fig:exec_speed} shows $\hat \eta_{s}^{(i)}$ with the first half of $\hat \eta_{2s}^{(i+1)}$ for $s\leq 1$. One can see that, in each of the subplots, the two market impact curves are very close, indicating that the market basically does not anticipate the size of the corresponding meta-orders. Let us point out that the same results would be obtained when changing $R_0$ and/or $T_0$.

\begin{figure}[ht!]
     \begin{center}
        \subfigure [$A_1$ and $A_2$]{%
            \label{fig:first_exec_speed}
            \includegraphics[width=0.4\linewidth]{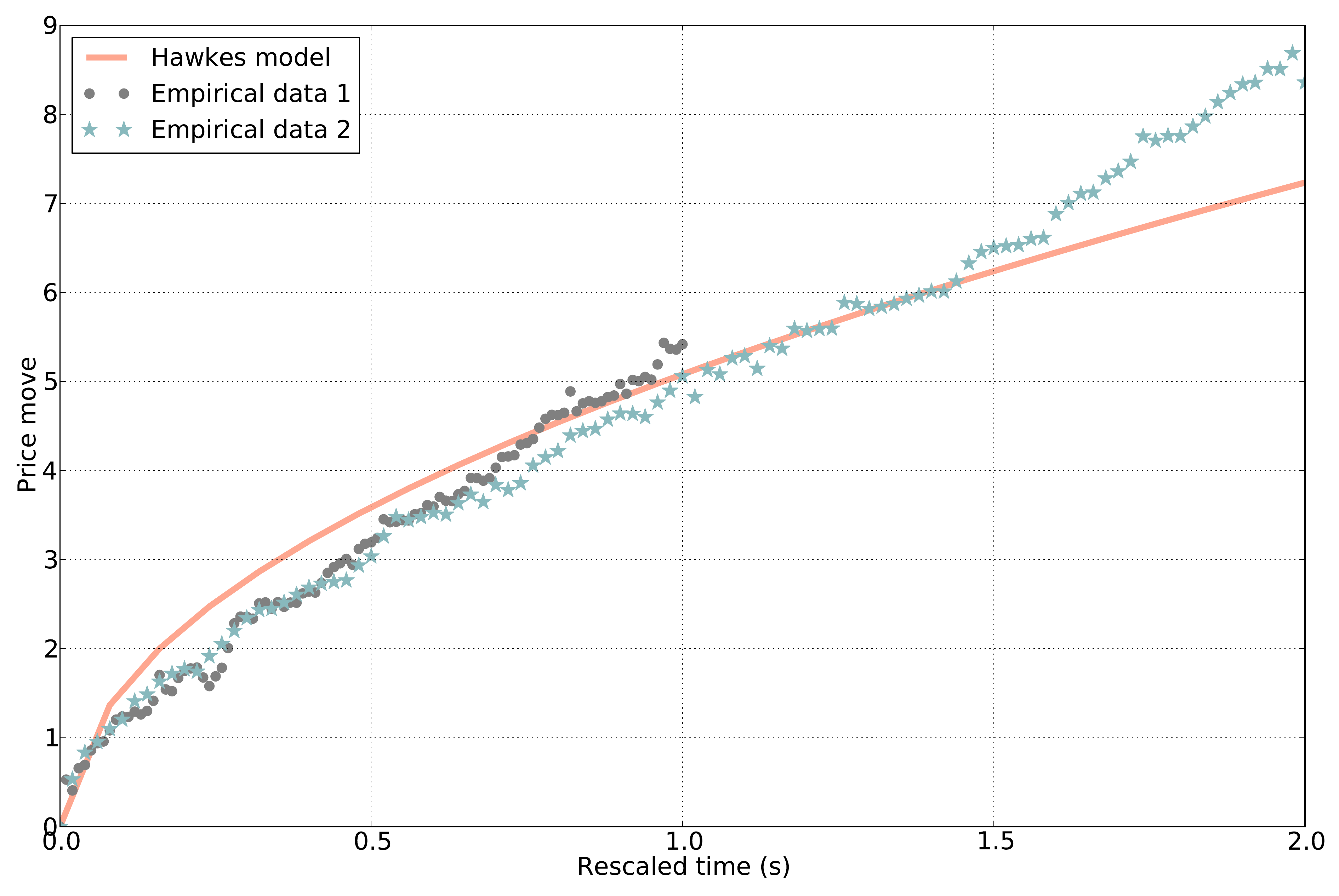}
        }%
        \subfigure[$A_2$ and $A_3$]{%
           \label{fig:second_exec_speed}
           \includegraphics[width=0.4\linewidth]{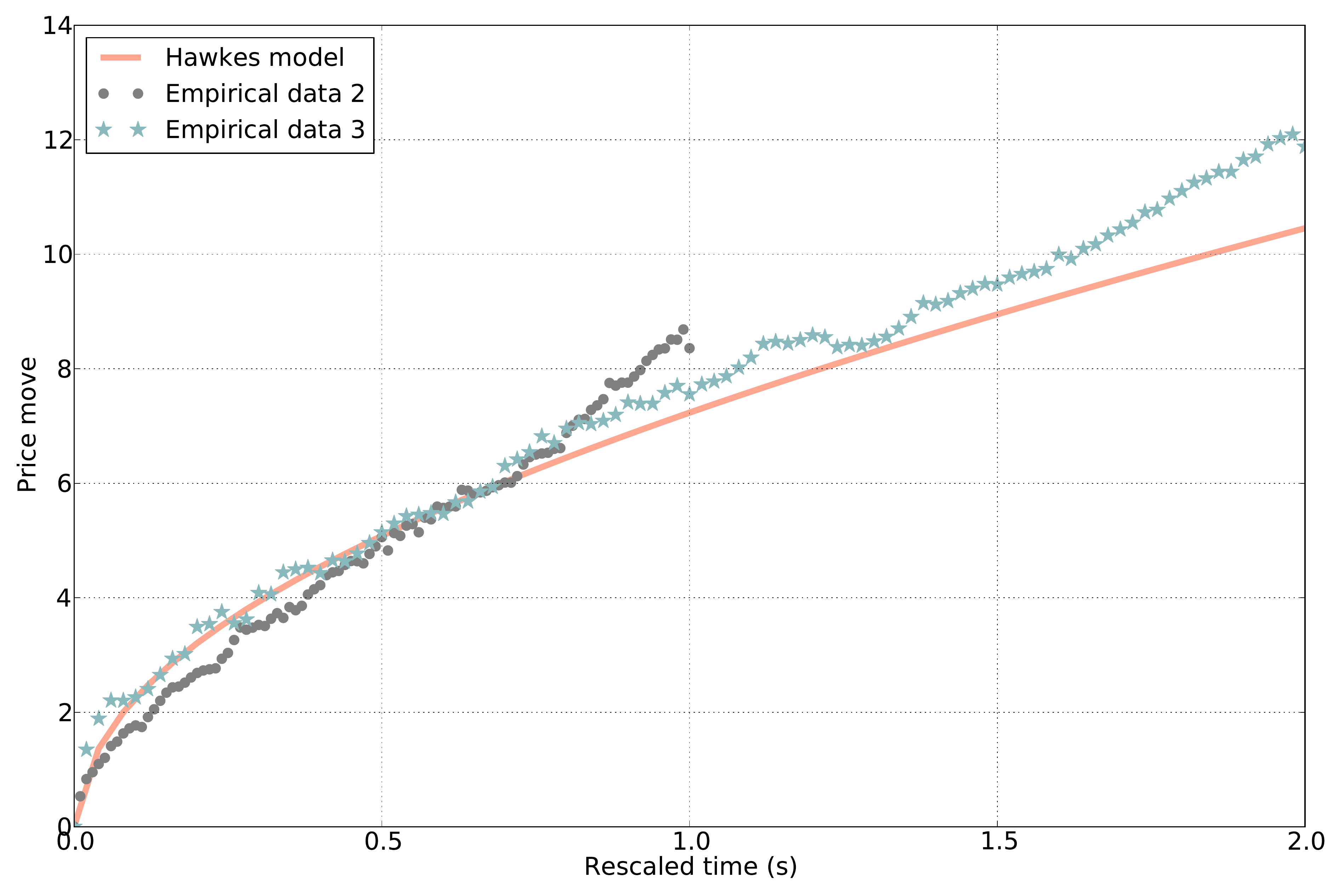}
        }\\ 
        \subfigure[$A_3$ and $A_4$]{%
            \label{fig:third_exec_speed}
            \includegraphics[width=0.4\linewidth]{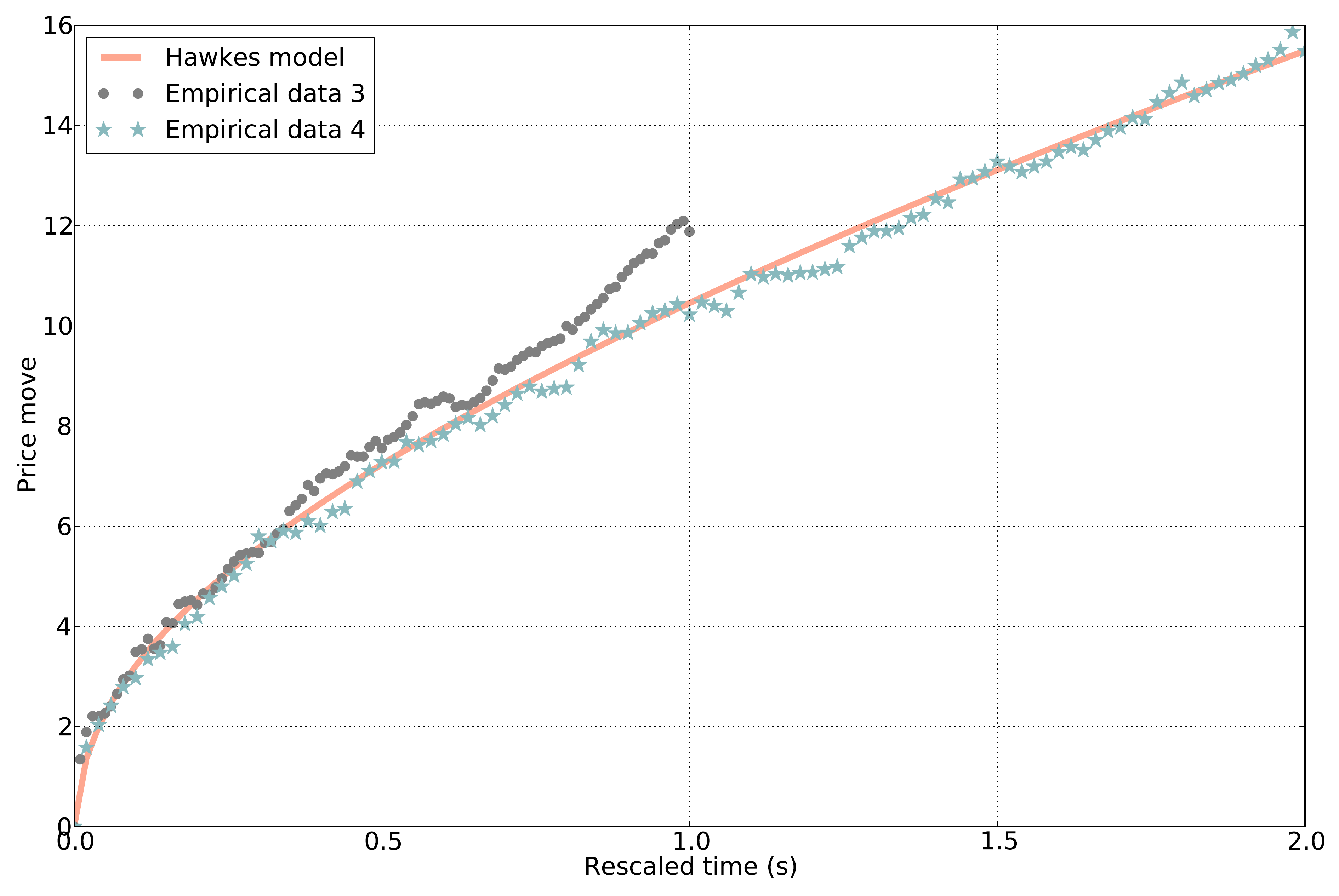}
        }%
 \subfigure[$A_4$ and $A_5$]{%
            \label{fig:forth_exec_speed}
            \includegraphics[width=0.4\linewidth]{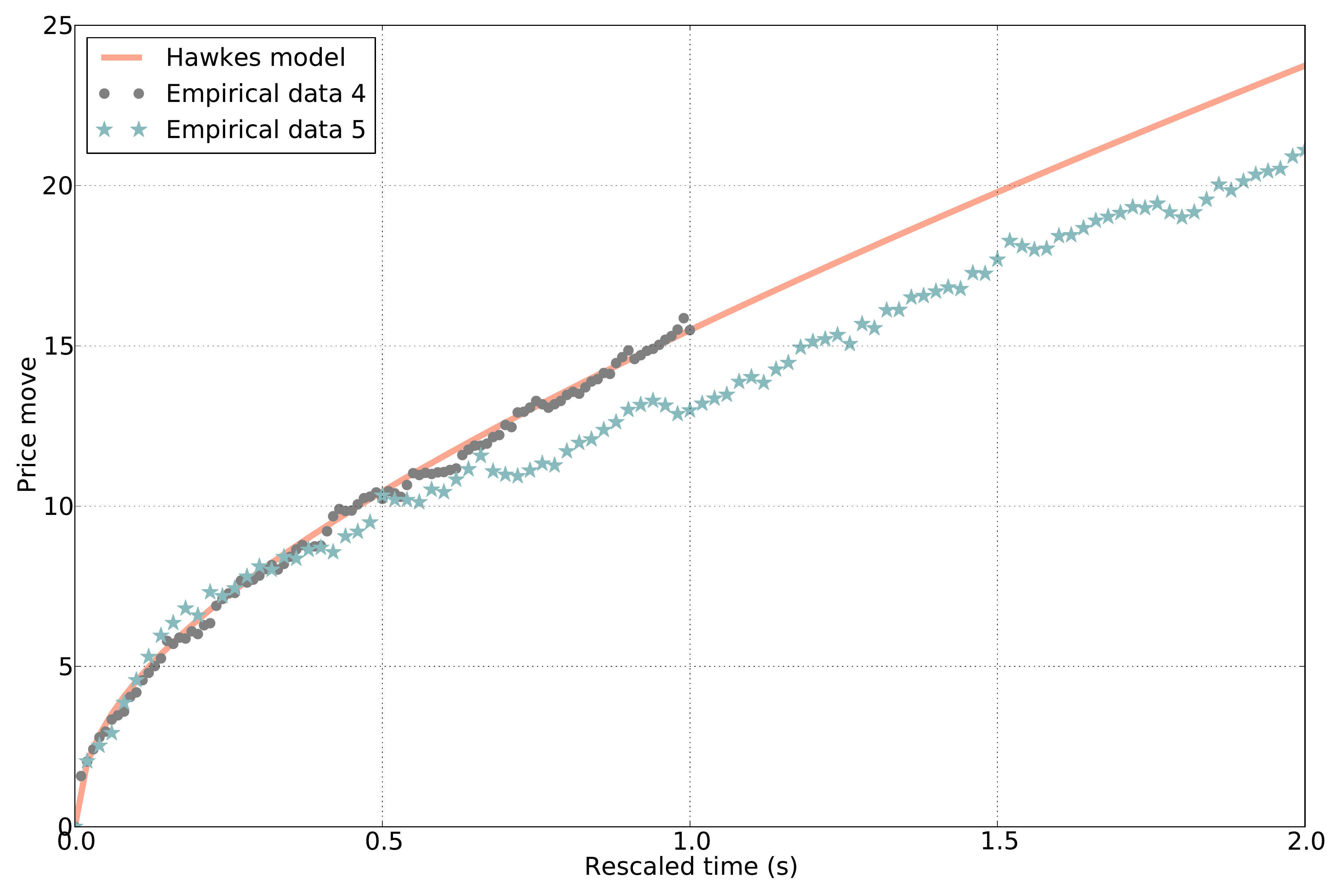}
        }%
    \end{center}
    \caption{On each subplot two transient market impact
  $\hat \eta_{s}^{(i)}$ and $\hat \eta_{2s}^{(i+1)}$ curves for $s \in [0,1]$
  are displayed. They correspond to two duration intervals (one being twice the other, see \eqref{def:exec_speed}). 
The fact that 
 the two curves are very close, indicates that the market basically does not anticipate the metaorder size.
  Top-left (resp. top-right) subplot corresponds to $i=1$ (resp. $i=2$) and     bottom-left (resp. bottom-right) subplot corresponds to $i=3$ (resp. $i=4$). 
     }%
   \label{fig:exec_speed}
\end{figure}

\section{The decay market impact curve}
\label{sec:decay}
\subsection{A well known stylized fact : convexity of decay market impact}

During the execution of the metaorder the price is pushed in the adverse direction making it less attractive as time goes by reaching higher level (temporary impact) at the end of the execution. After the execution a reversal effect is expected as seen in Fig. \ref{fig:prox}. This is the decay part or relaxation of the market impact.
Some qualitative explanations can be sketches. 
For instance, consider the whole market impact cycle (transient, temporary and decay phases) in the spirit of a continuous version of Kyle's model \cite{citeulike:3320208}: a \emph{stylized market maker} makes a price from 0 to $T$ and unwinds its accumulated inventory after $T$ at a price compatible with the risk he usually take to accept the position.
Another potential explanation is mechanical: if the orderbooks refill with a constant rate independent of the metaorder, the metaorder will first consume the limit orderbook on one side, pushing the price in its direction, and then let an orderbook imbalanced enough at $T$ such that symmetric and random consumption of liquidity on the two sides of the books will implement a mechanical decay. It is typically supported by a \emph{latent orderbook model} like the one developed in \cite{citeulike:12802757}.
Last but not least, a model of \emph{ impact plus decay at an atomic size} (i.e. for each child order generated by the metaorder, like in \cite{citeulike:13143978}) will generate a transient phase as soon as the decay does not end when the next child order is generated (this will typically be the case for power law decay); after $T$, the ``cumulated decay'' will express itself, generating an observable reversal.


The existing empirical literature of decay metaorders market impact is limited (\cite{citeulike:9771410}, 
\cite{BR}) since the difficulty of obtaining data is very high. In the first study, Moro \textit{et al.} are the first showing a decay of the impact to a level roughly equal to $0.5\sim 0.7$ of its highest peak. In the second study, Bershova and Rakhlin shows the decay is a two-regime process: slow initial power decay followed by a faster relaxation.

In this section we confirm that the transient market impact curve is convex and that it seems to have a slow initial regime.

\subsection{Numerical results.}

To have a chance to observe intraday decay, it is needed to restrict the numerical study to orders ending long enough before the close.
We chose to follow  \cite{citeulike:9771410} and selected metaorders ending before the close, i.e. such that $t_0(\omega)+2T(\omega)$ takes place before closing time.
This exactly corresponds to the database $\Omega^{(de)}$ defined in Table \ref{tab:filters}.

Following the same lines as in the previous section, we use an averaging methodology to compute $\hat \eta_{s}$ for $s\le 2$ for different intervals of duration (as previously, $s$ corresponds here to the rescaled time, so $s=2$, corresponds to the physical time $t=2T$).
Fig. \ref{fig:subfigures} shows such estimations (of both transient and market impact decay curves) for first four  intervals used in Fig. \ref{fig:tmi}. Thus the transient parts of Fig. \ref{fig:subfigures}(a-d) are respectively exactly the same as the curves displayed in Fig. \ref{fig:tmi}(a-d).

\begin{figure}[ht!]
     \begin{center}
        \subfigure []{%
            \label{fig:first}
            \includegraphics[width=0.4\linewidth]{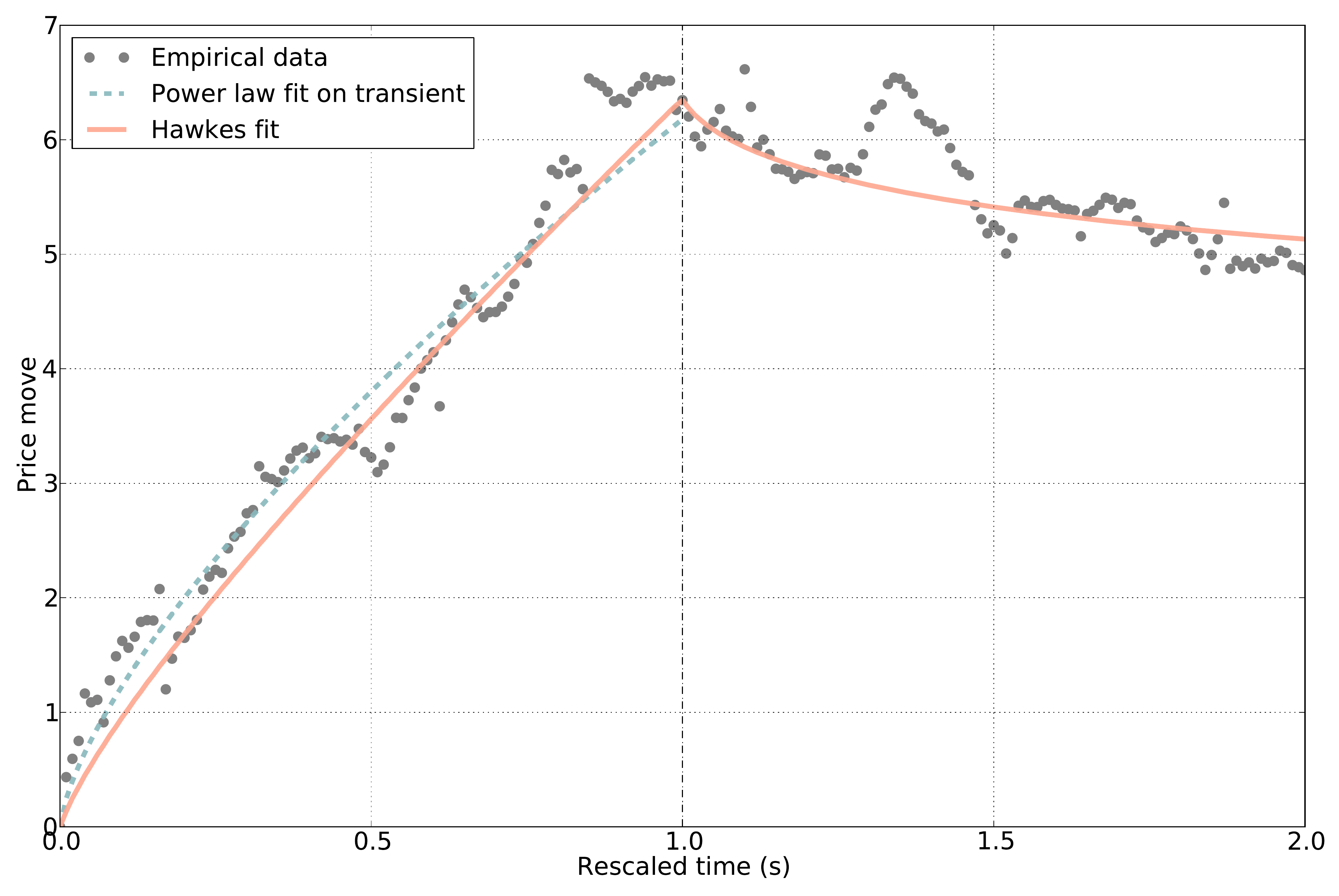}
        }%
        \subfigure[]{%
           \label{fig:second}
           \includegraphics[width=0.4\linewidth]{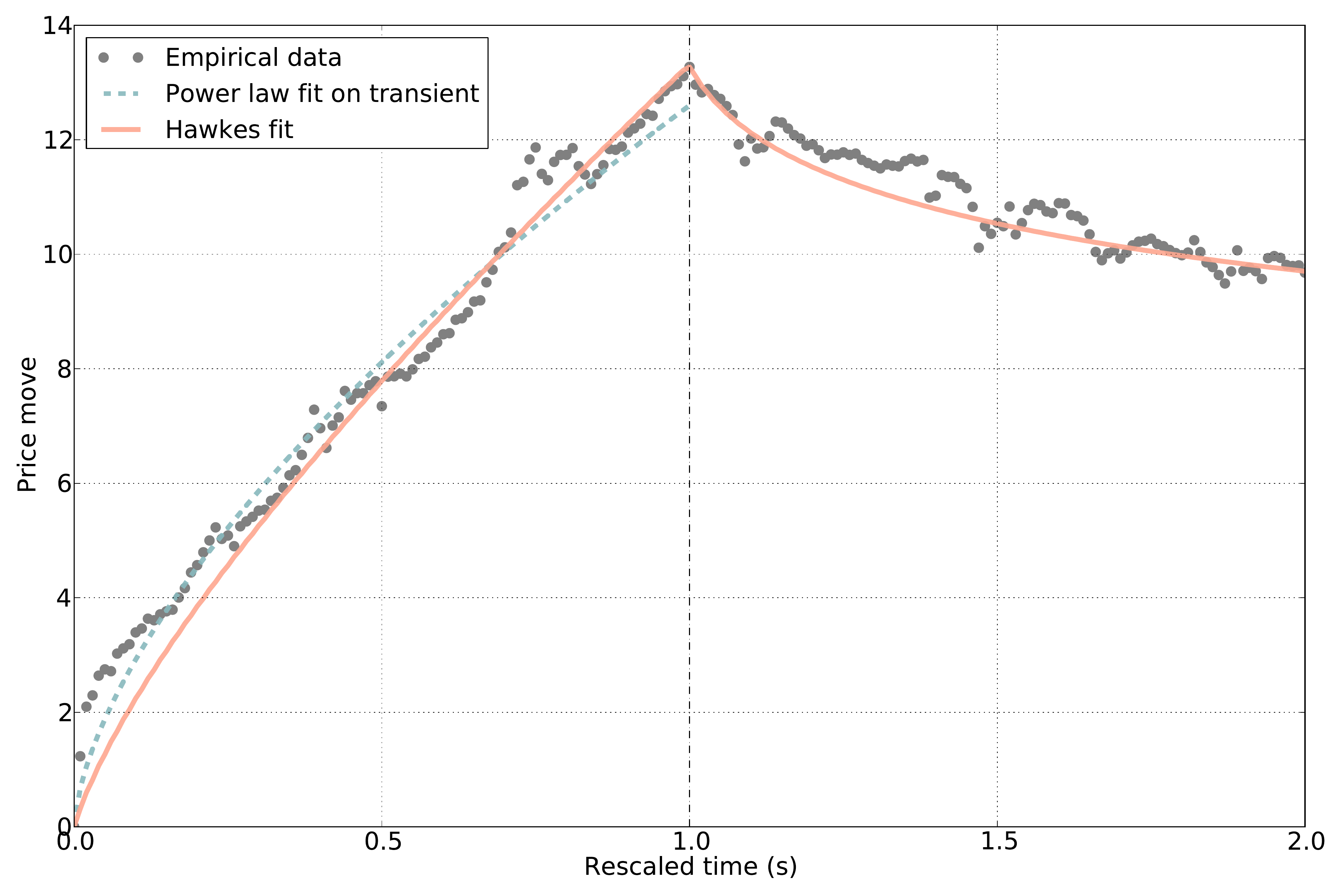}
        }\\ 
        \subfigure[]{%
            \label{fig:third}
            \includegraphics[width=0.4\linewidth]{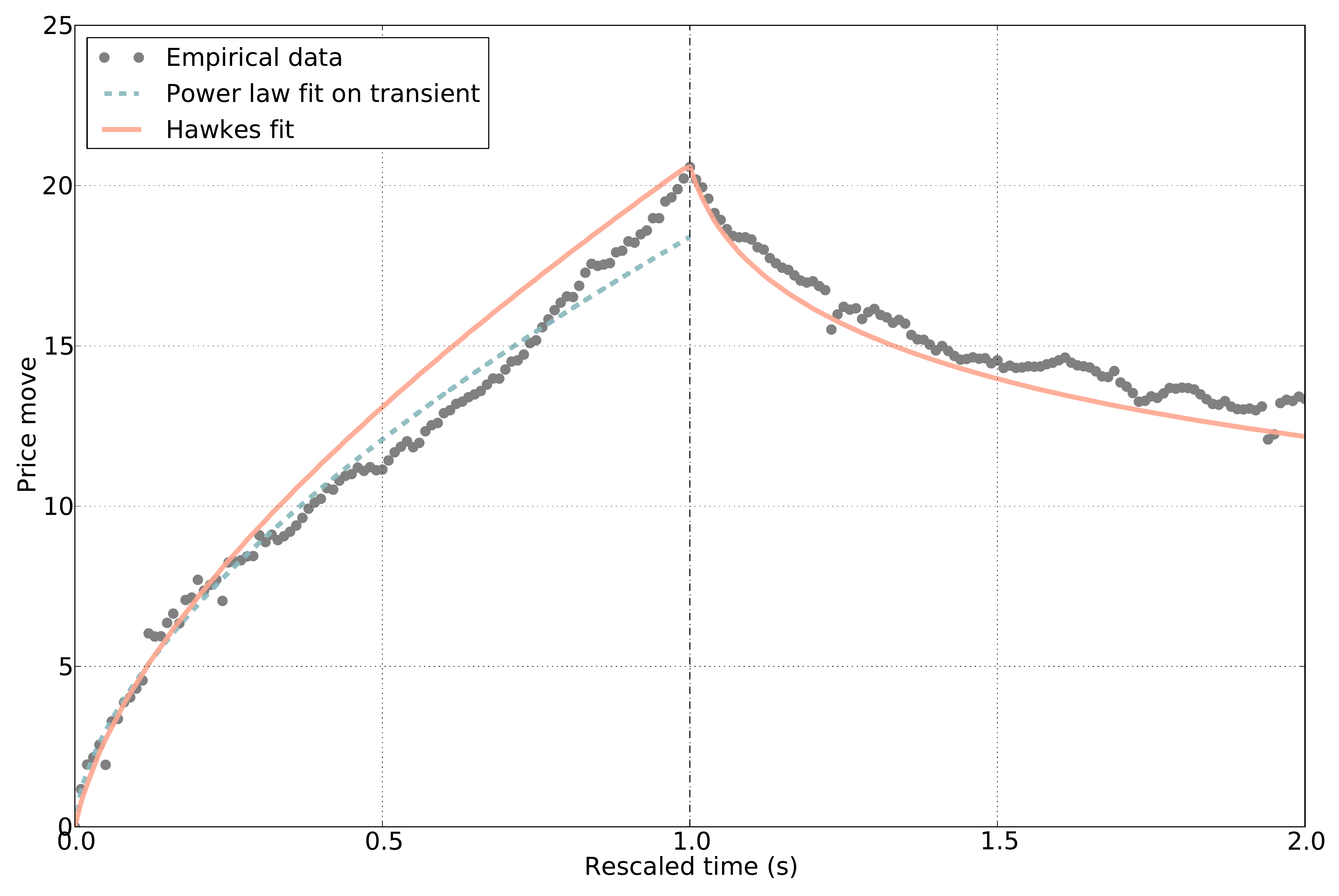}
        }%
 \subfigure[]{%
            \label{fig:fourth}
            \includegraphics[width=0.4\linewidth]{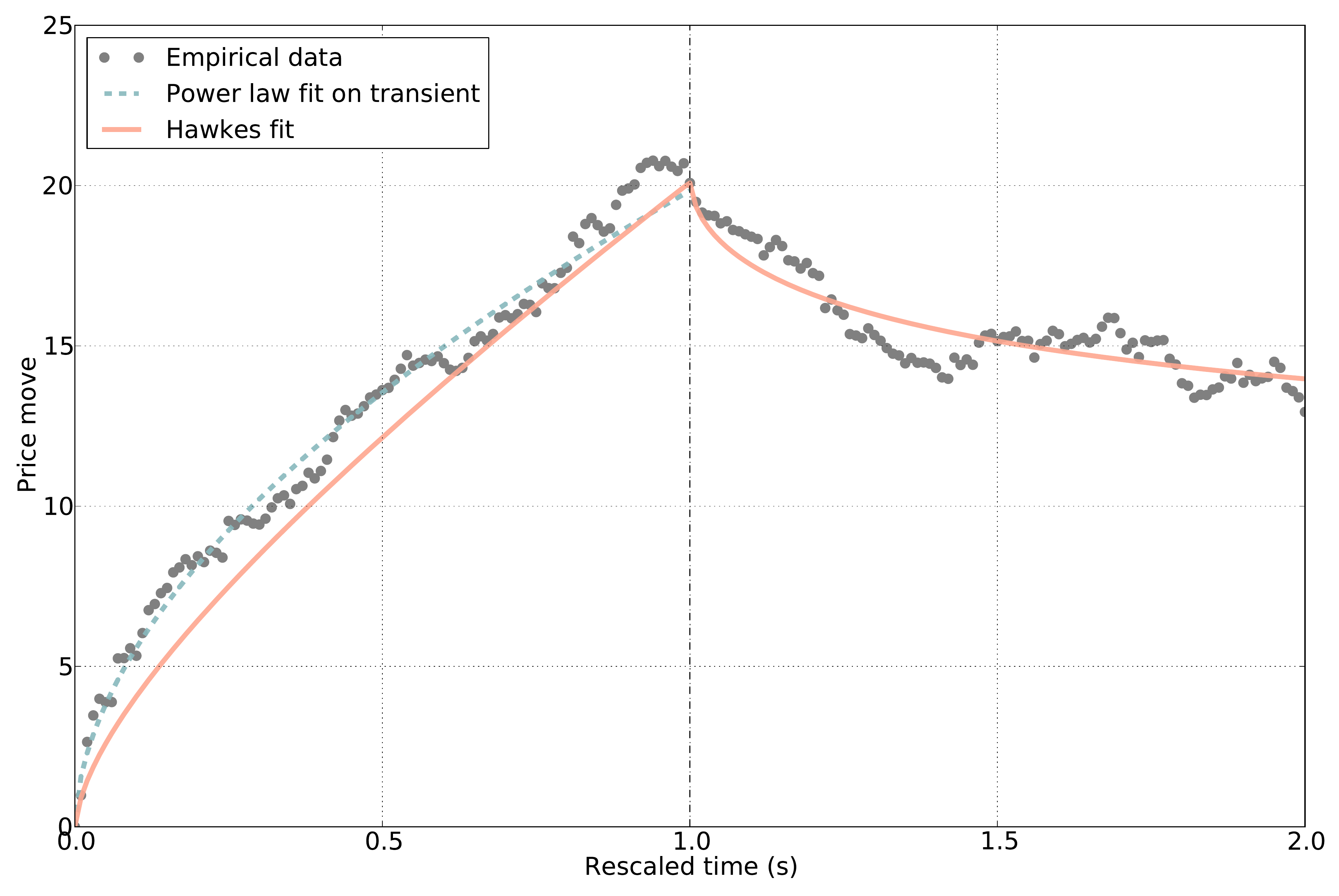}
        }%
    \end{center}
    \caption{%
Transient and market impact decay curve estimations $\hat \eta_{s}$ for  $R\in [1\%,3\%]$ and for different $T$ ranges
(same as in Fig. \ref{fig:tmi}(a-d)). Power-law fit of the transient part are shown.  Fits with the HIM model (see Section \ref{sec:HIM}) are also displayed. (a) $T \in [3,15[$, (b) $T \in  [15,30[$, (c) $T \in  [30,60[$ and (d)   $T \in [60,90[$.
      }%
   \label{fig:subfigures}
\end{figure}
Log-log plots of the corresponding market impact decays are shown on Fig. \ref{fig:loglogdecay}. 
More precisely, in order to study the decay rate (towards the permanent impact value), we displayed log-log plots of 
 $\hat \eta_{s}-\hat \eta_{s=2}$ as functions of $s-1$ for $s \in ]1,2]$. They clearly show that the decay is much slower at the very beginning (i.e., right after the end of the execution of the metaorder). We have checked that changing the daily participation $R$ does not affect qualitatively the result. This result confirms the ones obtained previously by \cite{citeulike:12838207} and \cite{citeulike:12825932}.

\begin{figure}[ht!]
     \begin{center}
        \subfigure[]{%
            \label{fig:decfourth}
            \includegraphics[width=0.4\linewidth]{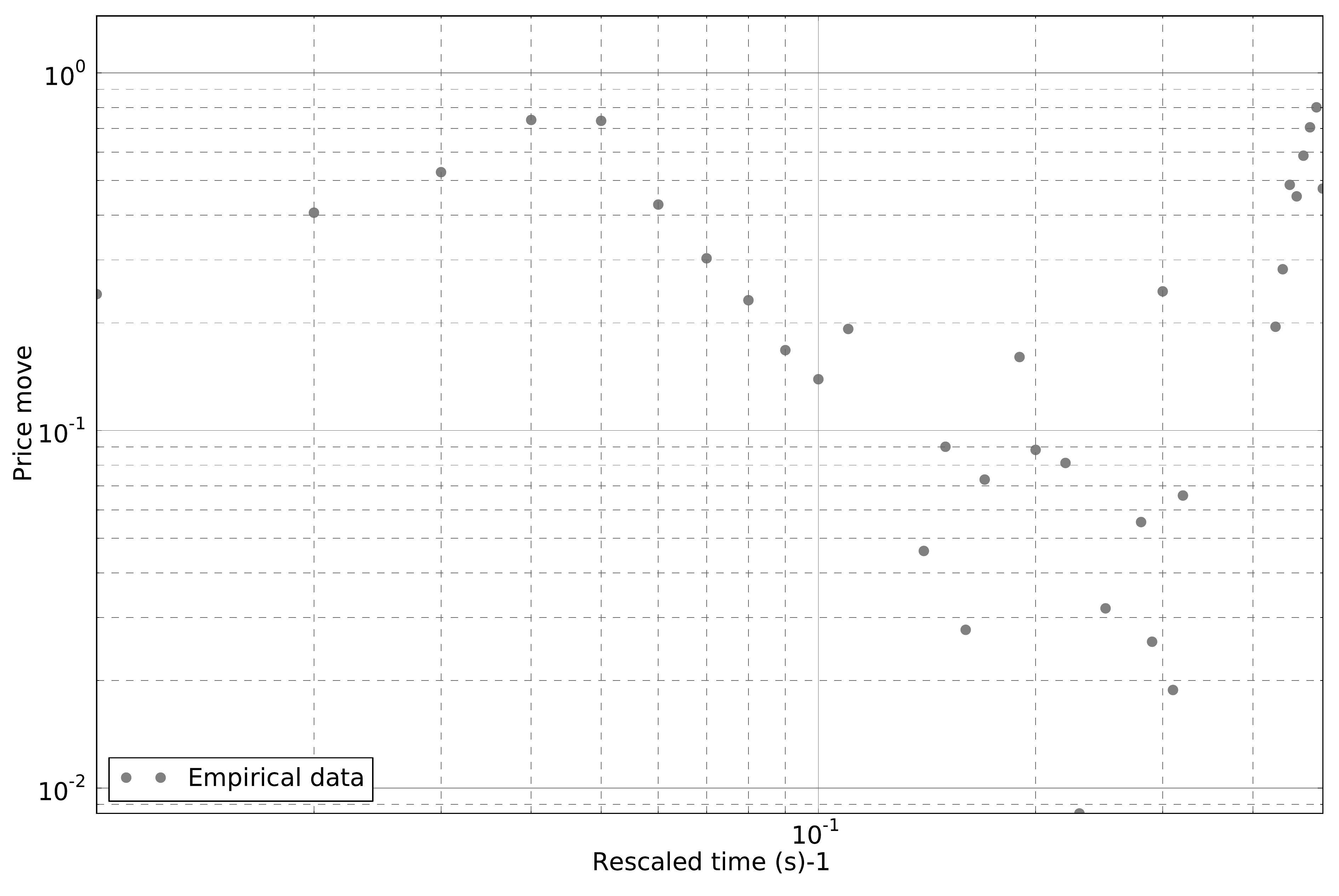}
        }%
        \subfigure []{%
            \label{fig:decfirst}
            \includegraphics[width=0.4\linewidth]{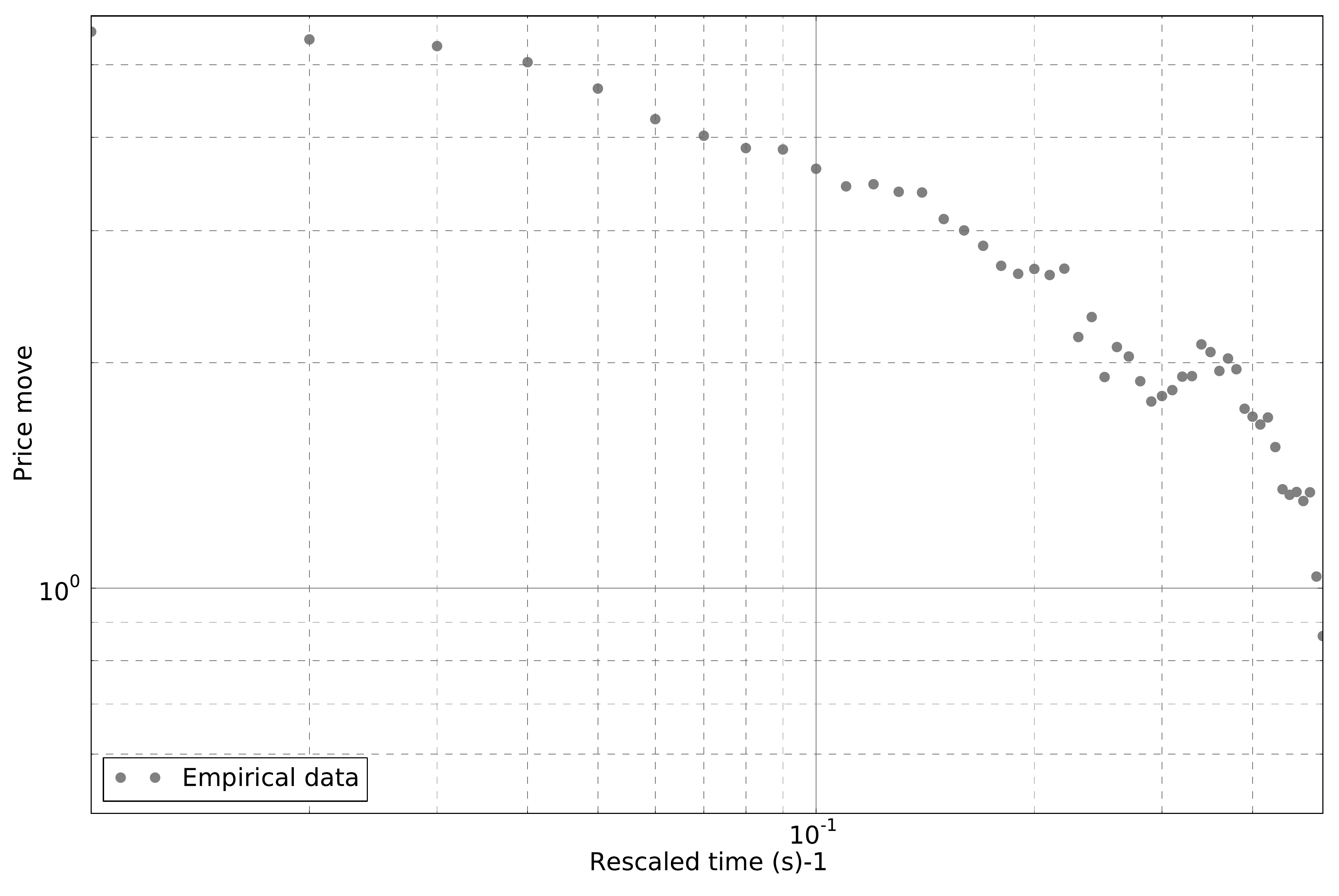}
        }\\ 
        \subfigure[]{%
           \label{fig:decsecond} 
           \includegraphics[width=0.4\linewidth]{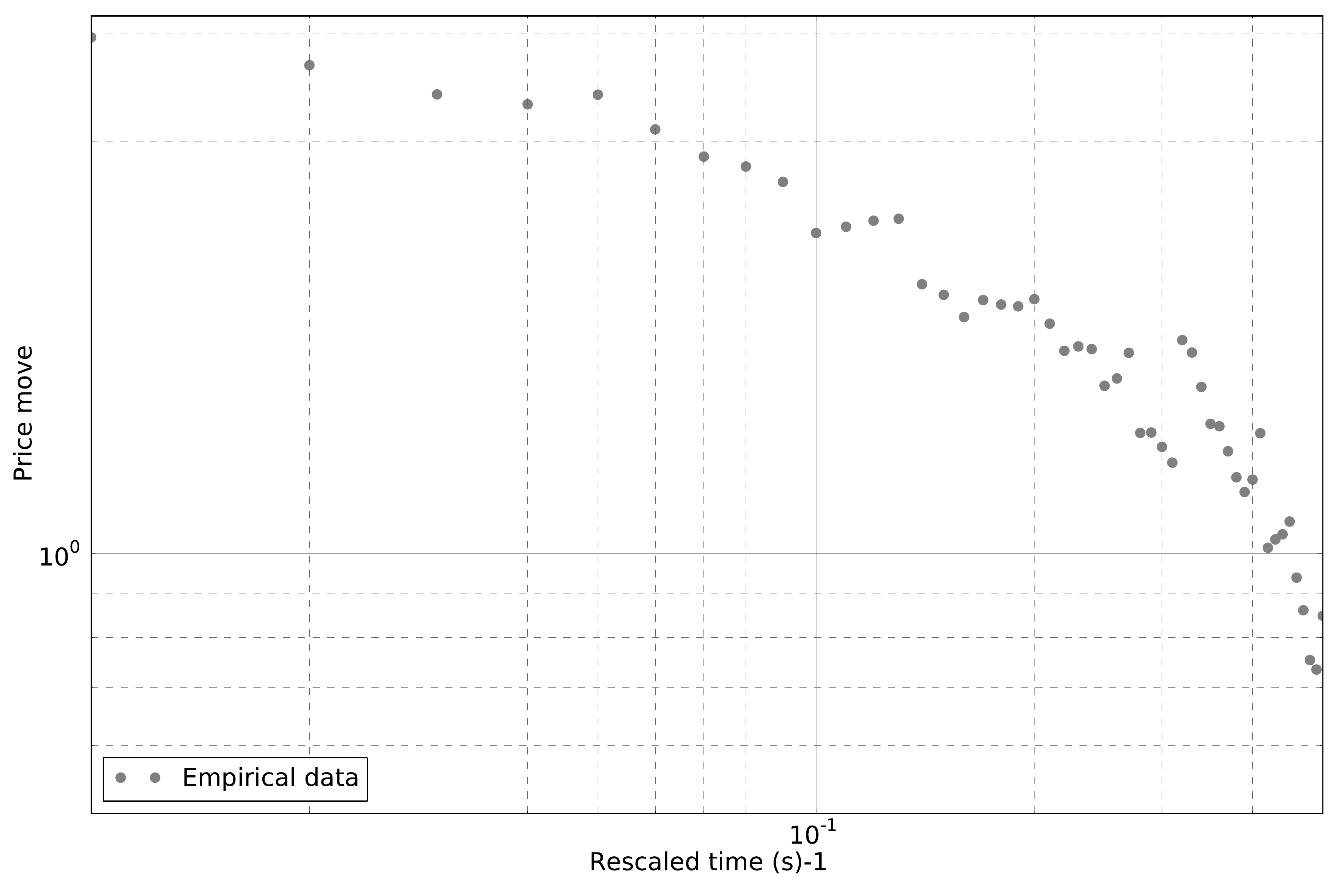}
        }%
        \subfigure[]{%
            \label{fig:decthird}
            \includegraphics[width=0.4\linewidth]{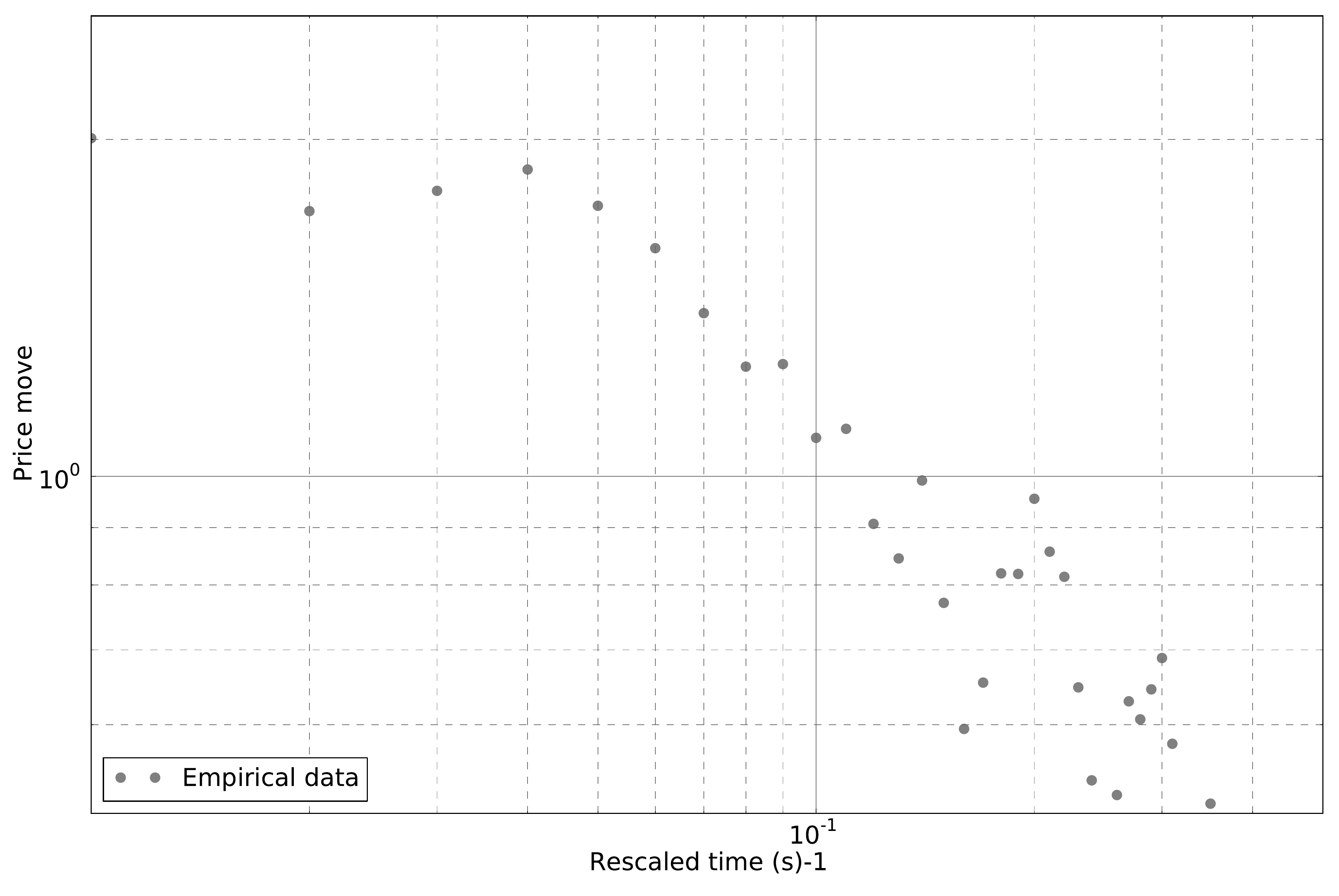}
        }%
    \end{center}
    \caption{%
Log-log plots of market impact decay curve estimations. More precisely, it shows the log-log plot of 
$\hat \eta_{s}-\hat \eta_{s=2}$ as a function of $s-1$ for $s \in ]1,2]$, 
for $R \in [1\%,3\%]$ and for several $T$ ranges (same as in Fig. \ref{fig:tmi}(a-d)). It shows that the decay is  slower at the very beginning (for $s$ close to 1). (a) $T \in [3,15[$, (b) $T \in [15,30[$, (c) $T \in  [30,60[$ and (d) $T \in [60,90[$.
      }%
   \label{fig:loglogdecay}
\end{figure}


\section{Transience and decays in the Hawkes Impact Model toy model}
\label{sec:HIM}
\subsection{Hawkes based models for microstructure}
Hawkes processes have already been proved successful for modeling high frequency financial time-series (see \cite{citeulike:9217792, citeulike:3856893, BM, Hewlett1}) and in optimal trading schemes (see \cite{Alfonsi-Blanc}). Hawkes processes are point processes with a stochastic intensity which  depends on the past of the process.

Following \cite{citeulike:9217792}, we consider the following  price model. Let $P_t$ be a proxy for the high-frequency price of an asset (e.g., last-traded price, mid-price, \ldots). For the sake of simplicity, we shall not consider the size of the jumps in the price and consider that they are only of size $1$. Let $(J^+_t$, $J^-_t)$ be the point processes representing respectively upward and downward jumps of $P_t$.
\begin{equation}
\label{hawkes_price_definition}
P_t = J^+_t-J^-_t.
\end{equation}
Let $\lambda^+$ and $\lambda^-$ the respective intensities of $(J^+_t$, $J^-_t)$.
It is well known that at microstructure level, the price is highly mean reverting (at least for large tick-size assets). It has been shown (\cite{citeulike:9217792}) that this mean-reversion property is well mimicked using a $2$-dimensional Hawkes process using only a single "cross" kernel $\varphi(t)$ :
\begin{equation}
\label{hawkes_intensities_definition}
\lambda^+_t= \mu + \varphi \star dJ^-_s~~\mbox{and}~~{\mbox{and}}~~
\lambda^-_t= \mu + \varphi \star dJ^+_s
\end{equation}
where $\varphi(t)$ is a causal (i.e., supported by $\mathbb{R^+}$), positive function and where
$\star$ stands for the convolution product $\varphi \star dJ_t = \int_{-\infty}^t \varphi(t-s) dJ(t)$.
The mean reversion property reads clearly from these last two equations :
the more $P_t$ goes up (resp. down), the greater the intensity $\lambda^-_t$ (resp.  $\lambda^+_t$) will be.
A criteria for the price increments and the intensities to be stationary  is given by $||\varphi||_1 <1$, where $||.||_1$ denotes the $\mathcal{L}^1(\mathbb{R})$ norm
(for a complete mathematical study of Hawkes process, see \cite{daley2003}).\\

\subsection{The Hawkes Impact Model (HIM) for market impact of a metaorder} 
We model the impact of a metaorder starting at time $t_0$, ending at time $t_0+T$ and corresponding to a continuous flow of buying\footnote{The impact of selling metaorder can be modeled using the exact same principles} orders with a trading rate $r_t$  supported by $[t_0,t_0+T]$ ($r_t\neq 0$ only for $t\notin[t_0,t_0+T]$) by a perturbation of the intensities. 

For the sake of simplicity, we will follow the microstructure model above and just consider mean-reversion reaction of the market (e.g., \cite{BM}, \cite{FZ}).
Let us point out that this is clearly not a realistic hypothesis if one is interested in mimicking precisely the microstructure. However, this is not our goal. In this section, we want to build a structural model that allows to explain the main dynamics of the market impact curve. 
In the same line as Bouchaud \cite{citeulike:3856893}, Gatheral \cite{doi:10.1080/14697680903373692} and \cite{BM}, 
we shall build a {\em linear} model, in the sense that the impact of the metaorder is nothing but the sum of the impact of its child order. 


\vskip .3cm
\noindent
{\bf The HIM model.} 
This model consists in replacing \eqref{hawkes_intensities_definition} by the two equations :
\begin{equation}
\label{hawkes_intensities_definition_continuous}
\lambda^+_t= \mu +  \varphi \star dJ^-_t + \int_{t_0}^tf(r_s)g^+(s-t_0)ds ~~{\mbox{and}}~~ \lambda^-_t= \mu + \varphi \star dJ^+_t + \int_{t_0}^tf(r_s)g^-(s-t_0)ds,
\end{equation}
where $f(r_s)ds$ (with $f(0) = 0$) codes the infinitesimal impact of a buy order of volume $r_sds$. The $f$ function corresponds to  the {\em instantaneous} impact function and $g^+$ and $g^-$ are the {\em impact kernel} functions. As empirical found in \cite{citeulike:1204462} and used by others authors before us (\cite{doi:10.1080/14697680903373692, citeulike:3856893}), we suppose the market impact can be separated in a factorized form: one depending on volume (or volume per time) and the other depending only on time.
\vskip .3cm
\noindent
{\bf The impulsive-HIM model : a particular choice for the kernels.} Following  \cite{BM}, it is reasonable to consider that the only "upward" impact of a single buying order is instantaneous, i.e.,  either the corresponding order ate up the whole first limit (in which case there is an instantaneous jump in the price) or it did not (in which case a limit order fills up the missing volume). This case corresponds to consider that $g^+$ is "purely" impulsive, i.e.,
\begin{equation}
g^+(t) = g^{+}_i(t) = \delta(t),
\end{equation}
where $\delta(t)$ stands for the Dirac distribution. 
 As for the "downward"component, we shall consider that  the market reacts to the newly arrived order as if it triggered an upward jump.
Doing so leads to the choice
\begin{equation}
\label{eq:constante_C}
g^{-}(t) = g^{-}_i(t) = {C} \frac{\varphi(t)}{||\varphi||_1},
\end{equation}
where $C>0$ is a very intuitive parameter that quantifies 
the ratio of contrarian reaction (i.e. impact decay)
and of the "herding" reaction (i.e. impact amplification). Indeed the $L^1$ norm of the herding reaction to an impulsive buying order 
is  $f(r_t)||g^+_i||_1 = f(r_t)||\delta||_1 = f(r_t)$ (with the notation $r_t$ for the instatanous trading rate), whereas the contrarian reaction to the same order is
$ f(r_t)||g^-_i||_1 = f(r_t)||\varphi||C/||\varphi|| = Cf(r_t)$.
Thus, one can distinguish 3 cases of interests (see \eqref{eq:permihim} of Proposition \ref{theprop} for analytical expressions) : 
\begin{itemize}
\item $C=0$ : no contrarian reaction; we expect a permanent effect on prices from metaorders,
\item $C=1$ : the contrarian reaction is as "strong" (in terms of the norm $||.||_1$) as the herding one. So we expect the two to compensate asymptotically, i.e., we expect the permanent effect of the metaorder on prices to be 0 (see Eq. \eqref{eq:permihim} of Proposition \ref{theprop}  for confirmation), 
\item $C\in]0,1[$ : the contrarian reaction is not zero but strictly smaller than the herding reaction.
\end{itemize}

Thus the impulsive-HIM model corresponds to the equations
\begin{equation}
\label{IHIM}
\lambda^+_t= \mu + \varphi\star dJ^-_t + f(r_t)~~{\mbox{and}}~~\lambda^-_t= \mu + \varphi\star dJ^+_t + C \int_{t_0}^tf(r_s)\varphi(s-t_0)ds,
\end{equation}
where $C$ is a positive constant that controls the contrarian vs. herding reaction of the market.

\subsection{Market impact curve within HIM}
\label{sec:HIM-res}
\noindent
According to our definition of market impact: \textit{the difference between the observed price moves and what it would have been without this specific order}, within HIM, the market impact of a metaorder (starting at time $t_0$) writes :
\begin{equation}
\label{theoretical_mi_definition}
\eta_t = \mathbb{E}[P_t],~~~\forall t\ge t_0.
\end{equation}

\noindent
Then, one can prove  (see Appendix \ref{proofs}) that :
\begin{proposition} ({\bf Transient, decay curves and permanent effect}) \\
\label{theprop}
In the framework of the HIM model \eqref{hawkes_intensities_definition_continuous}, for all $t\geq t_0$ ($t_0$ is the starting time of the metaorder), one has:
\begin{equation}
\label{eq:theoretical_mi_computation}
\eta_t =\int_{t_0}^{\infty} f(r_s)\Big(G(t-s) - (\kappa\star G)(t-s)\Big)ds,
\end{equation}
where
\begin{itemize}
\item $G(t) = \int_0^t(g^+(u)-g^-(u))du$
\item $\kappa = \sum_{n=1}^{\infty} (-1)^{n-1}\varphi^{(\star n)}$, where $\varphi^{(\star 1)}=\varphi$ and $\varphi^{(\star n)} = \varphi^{(\star n-1)} \star \varphi$.
\end{itemize}
In the case of the impulsive-HIM model \eqref{IHIM} , this formula gives
\begin{equation}
\label{res1}
\eta_t= \int_{t_0}^{t} f(r_s)H^C_{\varphi}(t-s)ds,\;\;t\geq t_0,
\end{equation}
where $H^C_{\varphi}(t) = 1-(1+C/||\varphi||_1)\int_{0}^t\kappa(s)ds$.
Moreover in the case of a constant rate strategy (i.e., $r_t=r,~ \forall t\in[t_0,t_0+T]$ and $r_t=0$ otherwise), the permanent effect of the metaorder on prices is 
\begin{equation}
\label{eq:permihim}
\eta_{\infty} = \lim_{t\rightarrow +\infty} \eta_t = f(r) T \frac {1-C}{1+||\varphi||_1}
\end{equation}
\end{proposition}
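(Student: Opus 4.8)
The plan is to take expectations, which turns the stochastic intensity equations \eqref{hawkes_intensities_definition_continuous} into a \emph{deterministic} linear convolution system, to invert that system with the resolvent kernel $\kappa$, and finally to integrate once in time. Write $\rho(s):=f(r_s)$, a function supported on $[t_0,t_0+T]$, so the metaorder terms in \eqref{hawkes_intensities_definition_continuous} are the causal convolutions $(\rho\star g^\pm)(t)$, and set $m^\pm_t:=\Esp[\lambda^\pm_t]$. Using that a point process with predictable intensity satisfies $\Esp[dJ_t]=\Esp[\lambda_t]\,dt$ and Fubini to exchange $\Esp$ with the convolution against $dJ^\mp$ (licit since $\|\varphi\|_{1}<1$ keeps all first moments finite), one obtains, as identities between functions of $t$, $m^+=\mu+\varphi\star m^-+\rho\star g^+$ and $m^-=\mu+\varphi\star m^++\rho\star g^-$. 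Subtracting eliminates $\mu$ and yields, for $\delta m:=m^+-m^-$, the renewal equation $(\mathrm{Id}+\varphi\star)\,\delta m=\rho\star(g^+-g^-)$. Before $t_0$ the perturbation is absent and the system is symmetric, $m^+\equiv m^-$, so with the normalisation $\Esp[P_{t_0}]=0$ of \eqref{theoretical_mi_definition} one has $\eta_t=\Esp[P_t]=\int_{t_0}^t\delta m_u\,du$, and $\delta m$ vanishes on $(-\infty,t_0]$.

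Since $\|\varphi\|_{1}<1$, the Neumann series $\kappa=\sum_{n\ge1}(-1)^{n-1}\varphi^{(\star n)}$ converges in $L^1$ (because $\|\varphi^{(\star n)}\|_{1}=\|\varphi\|_{1}^n$) and $\mathrm{Id}-\kappa\star$ inverts $\mathrm{Id}+\varphi\star$; equivalently $(\delta-\kappa)\star(\delta+\varphi)=\delta$ in the convolution algebra. Hence $\delta m=(\delta-\kappa)\star\rho\star(g^+-g^-)$. As $\delta m$ is supported on $[t_0,\infty)$, $\int_{t_0}^t\delta m_u\,du=(\mathbf 1_{\mathbb R^+}\star\delta m)(t)$, and commuting convolutions gives $\eta_t=\rho\star G-\kappa\star\rho\star G$ with $G:=\mathbf 1_{\mathbb R^+}\star(g^+-g^-)$, i.e. $G(t)=\int_0^t(g^+-g^-)(u)\,du$; written out, $\eta_t=\int_{t_0}^\infty f(r_s)\big(G(t-s)-(\kappa\star G)(t-s)\big)ds$, which is \eqref{eq:theoretical_mi_computation}. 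For the impulsive choice $g^+=\delta$, $g^-=\tfrac{C}{\|\varphi\|_{1}}\varphi$ one gets $G=\mathbf 1_{\mathbb R^+}-\tfrac{C}{\|\varphi\|_{1}}\mathbf 1_{\mathbb R^+}\star\varphi$, hence $G-\kappa\star G=\mathbf 1_{\mathbb R^+}\star(\delta-\kappa)\star\big(\delta-\tfrac{C}{\|\varphi\|_{1}}\varphi\big)$. Writing $\delta-\tfrac{C}{\|\varphi\|_{1}}\varphi=(1+\tfrac{C}{\|\varphi\|_{1}})\delta-\tfrac{C}{\|\varphi\|_{1}}(\delta+\varphi)$ and applying $(\delta-\kappa)\star(\delta+\varphi)=\delta$ collapses the bracket to $(1+\tfrac{C}{\|\varphi\|_{1}})(\delta-\kappa)-\tfrac{C}{\|\varphi\|_{1}}\delta$, so $G-\kappa\star G$ equals $H^C_\varphi(t)=1-(1+C/\|\varphi\|_{1})\int_0^t\kappa(s)\,ds$, which is \eqref{res1}.

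For a constant rate and $t\ge t_0+T$, \eqref{res1} reads $\eta_t=f(r)\int_{t-t_0-T}^{t-t_0}H^C_\varphi(u)\,du$. Since $\kappa\in L^1$, $H^C_\varphi$ is bounded and $H^C_\varphi(u)\to 1-(1+C/\|\varphi\|_{1})\int_0^\infty\kappa$ as $u\to\infty$, so dominated convergence gives $\eta_\infty=f(r)\,T\,\big(1-(1+C/\|\varphi\|_{1})\int_0^\infty\kappa\big)$; using $\int_0^\infty\kappa=\sum_{n\ge1}(-1)^{n-1}\|\varphi\|_{1}^n=\tfrac{\|\varphi\|_{1}}{1+\|\varphi\|_{1}}$ this becomes $f(r)\,T\,\tfrac{1-C}{1+\|\varphi\|_{1}}$, i.e. \eqref{eq:permihim}. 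I expect the genuinely delicate points to be, first, the rigorous passage to expectations — finiteness of $\Esp[\lambda^\pm_t]$ and the Fubini exchange with $\varphi\star dJ^\mp$ — and second, making the convolution-algebra manipulations legitimate when $g^+$ is a Dirac mass: this is best handled either in a space of finite measures (tempered distributions) or by verifying a posteriori that $G$ and $H^C_\varphi$ are honest bounded functions and that $\int_0^t\kappa$ converges. By contrast the algebraic heart of the argument is the single identity $(\delta-\kappa)\star(\delta+\varphi)=\delta$, which is exactly what turns the unwieldy $G-\kappa\star G$ into the compact kernel $H^C_\varphi$.
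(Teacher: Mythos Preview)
Your proof is correct and follows the same route as the paper's: take expectations to obtain a linear renewal equation, invert via the Neumann resolvent $\kappa$, and integrate once in time. The only differences are organizational --- the paper first proves a general formula for $\Esp[N_t]$ in a $d$-variate Hawkes process and then specializes to the $2\times 2$ anti-diagonal kernel, whereas you subtract the two intensity equations at the outset and work directly with $\delta m$; and for the impulsive case the paper differentiates $H=G-\kappa\star G$ (using $\varphi\star\kappa=\varphi-\kappa$) where you instead apply the resolvent identity $(\delta-\kappa)\star(\delta+\varphi)=\delta$ to your algebraic split of $\delta-\tfrac{C}{\|\varphi\|_1}\varphi$ --- but these are presentational, not substantive, differences.
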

\noindent

\vskip .2cm
\noindent
Let us point out that several recent empirical results (\cite{BDM} and \cite{HBB}) seem to show that the Hawkes kernel $\varphi$ decays as a power-law.  Both studies found the exponent in the interval $[-1.5,-1]$.
The following corollary shows that, in the framework of the impulsive-HIM model, and in the case of a constant rate strategy, then  if $\varphi$ is power-law then the market impact curve asymptotically decays (to the limit permanent effect) as a power-law, with an exponent which is related to the exponent of $\varphi$. More precisely : 
\begin{cor}
\label{lem:linl_phi_psi}
In the framework of the impulsive-HIM model, let us consider a constant rate strategy, i.e., $r_t=r,~ \forall t\in[t_0,t_0+T]$ and $r_t=0$ otherwise. Assume that $\varphi$ is such that  
\begin{itemize}
\item $\varphi \ge \varphi^{(\star 2)}$ and,
\item $\exists K>0,~~~\lim_{t\to\infty}\varphi(t)t^{-b} = K$, with $b \in]-2,-1[$.
\end{itemize}
Then, the market impact curve decays to the permanent market impact $\eta_\infty$ asymptotically as a power-law with exponent $b + 1$, in the sense that 
\begin{equation}
\inf \left\{\gamma,~: \int_1^{\infty} (\eta_t-\eta_\infty) t^{-\gamma-1}dt <\infty\right\} = b + 1.
\end{equation}
\end{cor}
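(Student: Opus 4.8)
The plan is to start from the exact formula \eqref{res1} of Proposition \ref{theprop}, specialize it to a constant rate strategy, and extract the asymptotics of $\eta_t - \eta_\infty$ from the tail behaviour of the kernel $H^C_\varphi$. First I would write, for $t \ge t_0 + T$ and with the shift $t_0 = 0$ for convenience,
\begin{equation}
\eta_t - \eta_\infty = f(r)\int_0^T \bigl(H^C_\varphi(t-s) - H^C_\varphi(\infty)\bigr)\,ds,
\end{equation}
so that $\eta_t - \eta_\infty$ is controlled, up to the bounded integration window of length $T$, by the rate at which $H^C_\varphi(u) \to H^C_\varphi(\infty)$ as $u \to \infty$. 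Since $H^C_\varphi(u) = 1 - (1 + C/\|\varphi\|_1)\int_0^u \kappa(s)\,ds$, the remainder $H^C_\varphi(\infty) - H^C_\varphi(u)$ equals $(1 + C/\|\varphi\|_1)\int_u^\infty \kappa(s)\,ds$. Hence the whole problem reduces to establishing that the tail integral $\int_u^\infty \kappa(s)\,ds$ decays like $u^{b+1}$, which (given that $\kappa = \sum_{n\ge 1}(-1)^{n-1}\varphi^{(\star n)}$) in turn reduces to showing that $\kappa$ itself inherits the power-law tail $\kappa(t) \sim K' t^b$ from $\varphi$.

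The key step, therefore, is the tail analysis of $\kappa$. Here the two hypotheses of the corollary do the work. The condition $\varphi \ge \varphi^{(\star 2)}$ is what makes the alternating series $\kappa = \varphi - \varphi^{(\star 2)} + \varphi^{(\star 3)} - \cdots$ well behaved: grouping terms in consecutive pairs one gets $0 \le \kappa \le \varphi$ pointwise, so $\kappa$ is dominated by $\varphi$, and $\varphi - \varphi^{(\star 2)} \le \kappa$ gives a matching lower bound up to the higher-order convolutions. The condition $\varphi(t)t^{-b} \to K$ with $b \in\, ]-2,-1[$ ensures $\varphi \in \mathcal L^1$ (the $\mathcal L^1$ mass is finite since $b < -1$) and that the convolutions $\varphi^{(\star n)}$ for $n \ge 2$ have \emph{strictly lighter} tails than $\varphi$: for two functions each with a $t^b$ tail and finite mass, the convolution tail is $\bigl(\int\varphi^{(\star(n-1))} + \int\varphi\bigr)\cdot$ (same $t^b$ order) only when $b > -1$; for $b < -1$ a standard subexponential/long-tailed convolution estimate gives $\varphi^{(\star 2)}(t) \sim \bigl(\|\varphi^{(\star(n-1))}\|_1 + \|\varphi\|_1\bigr)K t^b$, i.e. still order $t^b$, so one must be more careful. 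Actually the cleaner route is: since $b<-1$, each $\varphi^{(\star n)}$ is long-tailed with $\varphi^{(\star n)}(t) \sim c_n t^b$ where $c_n = n\|\varphi\|_1^{n-1}K$ — but then $\sum (-1)^{n-1} c_n$ could vanish, so rather than summing tails termwise I would argue at the level of the integrated tail $\bar\kappa(u) := \int_u^\infty \kappa$. Using $0\le\kappa\le\varphi$ we get $\bar\kappa(u) \le \int_u^\infty\varphi \sim \frac{K}{|b+1|}u^{b+1}$, which already gives that $\int_1^\infty(\eta_t-\eta_\infty)t^{-\gamma-1}dt$ converges for every $\gamma > b+1$, hence $\inf\{\gamma : \ldots\} \le b+1$. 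For the reverse inequality I would use $\kappa \ge \varphi - \varphi^{(\star 2)}$ together with the fact that $\varphi^{(\star 2)}$, having finite mass and a $t^b$-tail with $b<-1$, satisfies $\int_u^\infty\varphi^{(\star 2)} = o(u^{b+1})$ relative to $\int_u^\infty\varphi$ — wait, it is of the same order — so instead I would note $\int_u^\infty(\varphi-\varphi^{(\star 2)}) \ge (1-\|\varphi\|_1)\cdot\frac{K}{|b+1|}u^{b+1}(1+o(1)) > 0$ (using $\|\varphi\|_1 < 1$), which forces $\bar\kappa(u) \ge c\, u^{b+1}$ for some $c>0$ and all large $u$; this makes $\int_1^\infty(\eta_t-\eta_\infty)t^{-\gamma-1}dt$ diverge for $\gamma \le b+1$, giving $\inf\{\gamma:\ldots\}\ge b+1$.

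Combining the two bounds yields the claimed identity. The main obstacle I anticipate is the precise handling of the convolution tails when $b \in\,]-2,-1[$: one needs the (elementary but slightly technical) fact that convolving two long-tailed $\mathcal L^1$ densities with regularly-varying $t^b$ tails produces again a $t^b$-tail with coefficient the sum of the two $\mathcal L^1$ masses times $K$, and then one must verify that the alternating combination in $\bar\kappa$ does not accidentally cancel to higher order — which is exactly where the sign/monotonicity hypothesis $\varphi \ge \varphi^{(\star 2)}$ and the subcriticality $\|\varphi\|_1 < 1$ are used to keep $\bar\kappa(u)$ pinned between two positive multiples of $u^{b+1}$. Everything else (the reduction from $\eta_t-\eta_\infty$ to $\bar\kappa$, and the final translation into the $\inf$-over-$\gamma$ statement via the elementary integrability criterion $\int_1^\infty t^{\beta}t^{-\gamma-1}dt<\infty \iff \gamma>\beta$) is routine.
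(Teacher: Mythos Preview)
Your reduction of $\eta_t-\eta_\infty$ to the tail $\bar\kappa(u)=\int_u^\infty\kappa$ and the upper bound via $0\le\kappa\le\varphi$ are correct and match the paper. The gap is in the lower bound. The convolution-tail coefficient you quote is wrong: for $\varphi(t)\sim Kt^b$ with $b\in(-2,-1)$ one has $\varphi^{(\star 2)}(t)\sim 2\|\varphi\|_1\,\varphi(t)$ (the standard long-tailed convolution asymptotic), hence
\[
\int_u^\infty(\varphi-\varphi^{(\star 2)})\;\sim\;(1-2\|\varphi\|_1)\int_u^\infty\varphi,
\]
not $(1-\|\varphi\|_1)\int_u^\infty\varphi$. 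When $\|\varphi\|_1\ge 1/2$ --- and the paper's own calibration has $\|\varphi\|_1\approx 0.85$ --- this is negative, so $\kappa\ge\varphi-\varphi^{(\star 2)}$ gives no usable lower bound on $\bar\kappa$. Truncating the alternating series at $2N$ terms does not help for small $N$ either (the partial coefficients $\sum_{n=1}^{2N}(-1)^{n-1}n\|\varphi\|_1^{\,n-1}$ remain negative when $\|\varphi\|_1$ is close to $1$), and letting $N\to\infty$ requires exactly the uniform tail control you flag but do not supply.

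The paper avoids the alternating-sign difficulty by \emph{inverting} the relation: from $\hat\kappa=\hat\varphi/(1+\hat\varphi)$ it obtains $\varphi=\sum_{n\ge 1}\kappa^{(\star n)}$, a series with nonnegative terms (here $\kappa\ge 0$ thanks to $\varphi\ge\varphi^{(\star 2)}$), and then works with moments instead of pointwise tails. Using the subadditivity $(t+s)^p\le t^p+s^p$ for $p\in[0,1]$ one gets the recursion $I_{n+1}\le c^nI_1+cI_n$ with $I_n=\int t^p\kappa^{(\star n)}$ and $c=\|\kappa\|_1<1$, whence $\int t^p\varphi=\sum_n I_n\le I_1/(1-c)^2$. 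This yields $\int t^p\kappa<\infty\Rightarrow\int t^p\varphi<\infty$; combined with the easy direction and with the fact that $\varphi(t)\sim Kt^b$ makes $\int t^p\varphi<\infty$ equivalent to $p<-1-b$, the identity $\inf\{\gamma:\ldots\}= b+1$ follows once one notes (Fubini, for $\gamma<0$) that $\int_1^\infty\bar\kappa(t)\,t^{-\gamma-1}dt<\infty$ iff $\int_1^\infty t^{-\gamma}\kappa(t)\,dt<\infty$.
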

\noindent
The proof of this Corollary can be found in Appendix \ref{proofs}.

\vskip .2cm
\subsection{Qualitative understanding of the impulsive-HIM model.}
\label{marketmakers}
One can give a qualitative understanding of the impulsive-HIM model, especially of the meaning of $C$.
Assuming an idealized market where the metaorder is only traded against one market maker and potentially noise traders (i.e. no other metaorders), this impulsive-HIM model can be seen as modelling the market maker inventory $\calI=\lambda^+-\lambda^-$. 
This market maker accepts to provide liquidity to metaorders under his own risk limits. He tunes the level of interaction he accepts with metaorders (in short: the distance of his quotes to the mid-price and the inventory thresholds he uses to unwind his risk, stopping any interaction with the market for a while) using backtests or experience given his risk budget.
As a result: with a small risk budget he will provide very attractive quotes until his inventory crosses a given threshold.

It is typically the case for high frequency market makers (HFMM); they are confident their technology investment allow them to capture most of the flows on both sides of the book, and be able to detect fast that they are adversely selected. In this model they ``provide'' to the market a small value for the parameter $C$ of the impulsive-HIM model.
On the other hand, market makers with a large risk budget will provide lazier quotes and accept a larger inventory before unwinding it. It is typically the traditional behaviour of market makers in a quote driven market. In this model such market makers will provide a large value for the parameter $C$.

Thus, a market maker with a small $C$ does not provide a lot of resistance to the metaorder pressure and does not generate a large herding effect when he unwinds its inventory. A market maker with a large $C$  will generate large contrarian pressure to metaorders and, once his risk thresholds are crossed, will unwinding a large position, generating a large herding move (partly compensated by slower market makers with larger inventory).

Moreover, since market makers calibrate their $C$ to earn money under their risk constraint (and potentially other operational and regulatory constraints), each of them will ``specialize'' its activity around a given duration of metaorders $T(C)$. They will earn money providing liquidity to metaorders with a duration smaller than $T(C)$, and loose money interacting with longer metaorders.

Consequently, in the impulsive-HIM model, we have used a single market maker only with a given value of $C$ but in practice we should consider an extension of this model with a continuum of market makers, implementing a distribution of $C$ reflecting the distribution of metaorders in a given market.

This interpretation of the impulsive-HIM model can explain the concavity of transient impact stems from the market makers ``triggered'' by a given metaorder. Since $T(C)$ is increasing in $C$: at the start of the trading, the metaorder interacts with market makers with small $C$ (i.e. typically HFMM). After a while, such participants consider they are adversely selected, stop trading and unwind their inventory in front of market makers having a largest $C$. 
The longer the metaorder, the more market makers with large $C$, the more contrarian pressure and concavity to the transient impact. Once the metaorder stops, it is currently interacting with market makers with a perfectly adapted $C$ (from their viewpoint); the later now have to unwind slowly their inventory to realize their gain.

\vskip .2cm
\subsection{Back to real data}
Let us consider the case of a metaorder of fixed size $v$ and executed on a period of time $T$. In the framework of the impulsive-HIM model, for a constant trading rate $r_t = r = v/T$, Eq. \eqref{eq:theoretical_mi_computation} becomes:
\begin{equation}
\label{eq:mi_const_rate}
\eta_t = f(r)\left(1_{[t_0,t_0+T]}\star H^C_{\varphi}\right)(t).
\end{equation}
Let us point out that the trading rate $r$ has no influence on the shape of the transient market impact, it is just a multiplicative constant. In the following we want to use the impulsive-HIM model as a toy model to reproduce the transient and the market impact decay curves obtained in Sections \ref{sec:transient} and \ref{sec:decay} and displayed in Fig \ref{fig:subfigures}.

Following \cite{BDM} and \cite{HBB}), we choose to use a power-law kernel 
$$\varphi(t) = \varphi_{\alpha,b,\beta}(t) = \alpha(\gamma + t)^{b},$$ 
where $\gamma$ was fixed to $0.25$ (changing this value does not change drastically the following results).
Thus, apart from the instantaneous impact function $f$ (which is basically 
responsible for a rescaling of the whole market impact curve), there are three parameters left in the impulsive-HIM model, namely, $\alpha$, $b$, and $C$. The parameters $\alpha$ and $b$ are responsible for the endogenous mean-reverting activity of the market itself.
As pointed out in the previous section, the parameter $C$ encodes the proportion of contrarian effect to the impact.

\paragraph{Empirical measurements.}
Estimation is performed simultaneously on the four market impact curves $\{\hat \eta^{(i)}_s\}_{1\le i \le 4}$ (for $s \in [0,2]$) displayed on Fig.  \ref{fig:subfigures} (from (a) to (d)). The parameters are  $\alpha$ and $b$ (these two parameters are shared by all the curves) and the parameters  $\{C^{(i)}\}_{1\le i \le 4}$  corresponding to four types of market makers as explained in Section \ref{marketmakers}.
Estimation follows the three following principles
\begin{itemize}
\item We constrain the estimation to fit the temporary market impact $\hat \eta^{(i)}_{s=1}$ of each curve $i\in\{1,..,4\}$.
\item  Each curve $\hat \eta^{(i)}_{s}$ is as a function of the renormalized time $s$ whereas the model gives the market impact $\eta_t$ (Eq. \eqref{eq:mi_const_rate}) as a function of the physical time, rescaling in time must be performed independently on each curve.
The corresponding rescaling parameter has been chosen to be the largest duration in the corresponding time ranges, i.e., $T = T_1:=15$ min for curve in Fig. \ref{fig:subfigures}(a), $T=T_2:=30$ min for curve in Fig. \ref{fig:subfigures}(b),
$T=T_3:=60$ min for curve in Fig. \ref{fig:subfigures}(c) and $T=T_4:=90$ min for curve in Fig. \ref{fig:subfigures}(d).
\item To account for the instantaneous impact function $f(.)$, each curve is rescaled independently.
\end{itemize}
Thus, the estimation procedure sums up in 
$$
(\hat{\alpha}, \hat{b}, \hat{C_1}, \hat{C_2}, \hat{C_3}, \hat{C_4}) = \argmin_{(\alpha, b, \hat{C_1}, \hat{C_2}, \hat{C_3}, \hat{C_4})} %
 \sum_{i=1}^{4}\int_0^2  \left(  \frac{\hat{\eta}_1^{(i)}}{\eta_{T_i}} \eta_{sT_i}- \hat{\eta}_s^{(i)}     \right)^2 ds
$$
or (in detail to underline the dependencies in the parameters, with $\bone_{[T]}:=1_{[t_0,t_0+T_i]}$):
$$\def\HCi{\left(\bone_{[T(\omega)]}\star H_{\phi_{\alpha,b}}^{C_i}\right)}
\argmin_{(\alpha, b, C_1,C_2,C_3,C_4)} %
\sum_{i=1}^{4} \int_0^2 \left(  \frac{\hat{\eta}_1^{(i)}}{\HCi(T_i)} \HCi(sT_i)%
- \hat{\eta}_s^{(i)}     \right)^2 ds
$$

The value we find for $\hat{b}$ is close to $-1.5$ and for $\alpha$ we find a value such that $||\varphi||_1\approx 0.8456$
which is rather close to the critical value 1. These results are in good agreement with the works  \cite{BDM} and \cite{HBB}.
For the parameters  $\{C_i\}_{1\le i\le 4}$, we find $\hat{C}_1\approx 0.5$, $\hat{C}_2\approx 0.70$, $\hat{C}_3\approx 0.80$ and $\hat{C}_4\approx 0.85$.

These results correspond to the qualitative interpretation of the impulsive-HIM model: $C$ \emph{increases with} $T$. It could reflet the mix of market maker implicitly selected by the metaorder given its duration.


The so-obtained fits are shown in Fig. \ref{fig:subfigures}. 
One can see that the model impulsive-HIM is able to reproduce precisely the  shapes of both transient and market impact decay curves. Moreover, it also reproduces the  dependence on $T$ of the curvature of the transient market impact, i.e., the smaller $T$, the more linear the transient market impact.

\section{Investment strategies, price anticipation and permanent impact}
\label{sec:permanent}

\subsection{Positioning}

The permanent impact, as stated in the introduction, is the remaining price shift after the decay has taken place. So far very few papers, even if it has been extensively studied, addressed the problem of the permanent impact at the daily scale. Among the papers which studied permanent impact one distinguishes two different positions. The first position, which is shared by number of econophysists, describes the permanent impact as the consequence of a mechanical process. The second position, which is further favoured by economists, considers the permanent impact as the trace of new information in the price.
\begin{itemize}
\item  In a purely mechanical vision stock prices move because of the trading of all the market participants. If the sell pressure is stronger than buying pressure the price goes down and vice versa if the buying pressure is the strongest. The purpose of the econophysicist is to understand the behaviour of these two forces and to find the law at the root of the impact of buying/selling pressure on prices' dynamic. See \cite{citeulike:7360166}{}, \cite{bouchaud} and \cite{citeulike:13266538}.
\item On the other side, the informational vision says stock prices move because new information is made available to market participants. According to this new information investors update their expectations changing their offer and demand which leads to a new global equilibrium resulting in new price levels. In a seminal paper, \cite{citeulike:3320208}, the author shows how prices are driven to their new level through the execution of a metaorder by an informed trader. The informed trader is constantly adjusting her trading to her observation of the price in real time: she increases or reduces pressure whether the price is too far or not from the targeted price.
\end{itemize}
Between those two ``pure'' visions comes a range of papers analysing all sorts of metaorder databases and reaching conclusions which tilt the scale in favour of one position or the other. Ourselves in the present paper we are not going to settle for one or the other vision. We believe the mechanical - informational dual vision about the nature of permanent effect renders a good picture of the phenomenon. As every dual paradigm, this is a well known principle by physicists, taken separately the two pure concepts fail to give a satisfying picture of the whole phenomenon.

\paragraph{Focus on permanent market impact.}
The mechanical vision of the permanent effect is not in one single piece. 
Not all the ``\emph{mechanists}'' believe permanent impact exists:
\begin{itemize}
\item In the picture of \cite{FGLW} and of \cite{BR} permanent market impact is important and roughly equals to $2/3$ of the temporary impact on a metaorder by metaorder basis. This is the so-called fair pricing hypothesis.
\item In the picture of \cite{bouchaud}, there is no such thing as permanent impact. The author argues that what is called permanent impact is in fact the result of the long memory of the sign of the metaorder flow. This picture is incompatible with the permanent impact hypothesis because long memory of order flow would result in trending stock prices and thus contradicting the market efficiency principle.
\end{itemize}

\paragraph{Our methodology.}
Recently \cite{citeulike:13266538} exposed a new methodology to remove the informational content of the studied proprietary metaorder database. Each metaorder is characterized by the intensity of the trading signal which triggered it. 
The divergence between the effective and the expected price moves gives a measure of the permanent impact in the absence of informational effect. Based on that methodology the authors observed no permanent impact on the residual price moves: after a dozen of days 100\% of the impact seems to have completely vanished.

We will study our brokerage metaorder database at the daily scale with the same angle as \cite{citeulike:12825932}. In this important contribution to the subject of the permanent impact, the authors studied the decay of the market impact until several days after the execution of a trade. 
In their specific case, the studied database allows to tag the decision at the origin of each trade.
It differentiates metaorders which have been triggered by an informational change that could impact the price from the others. More specifically the authors divided their metaorder database into two groups:
\begin{itemize}
\item Those stemmed from redemptions or new subscriptions, thus triggered by heterogeneous and relatively exogenous to the market information. Those metaorders are called ``uninformed trades'' or ``cash trades''.
\item The rest, essentially metaorders coming from portfolio rebalancing, is designated the set of ``informed trades''.
\end{itemize}
\cite{citeulike:12825932}{} shows that on a daily basis, as of the execution day until 60 days after, ``informed trades'' have permanent impact but ``uninformed trades'' do not.

To go one step further using our database, we will try to remove as much informational effect as possible from the price move. Since we use the database of an executing broker, we do not know explicitly what triggered the creation of the metaorders, contrary to authors of \cite{citeulike:13266538} who had access to the prior belief of the metaorder issuer on future price changes. Still we can assume that the clients of the broker globally constitute a good representative sample of the diversity of market investors. Thus the total portfolio maid up of all the metaorders is not far from the CAPM market portfolio that is to say the beta of this portfolio equals to 1. Therefore we studied the price moves on the post-execution period net of the market portfolio moves, we then looked at the so-called idiosyncratic moves of stock prices.

Our results are compatible with those obtained in \cite{citeulike:13266538}, where after removing the alpha from price moves no permanent impact left, or in \cite{citeulike:12825932}, wher ``uninformed trades'' add no permanent impact: we found no permanent impact on the idiosyncratic component of the prices.


\subsection{Debiasing temporary impact of metaorders traded during the post-execution period}
When studying the average profile of the post-execution price moves particular attention should be paid to the presence of autocorrelations in the metaorder flow. Indeed the  temporary impact of correlated metaorders executed after each studied one cannot be considered as the permanent impact of the metaorder executed on the execution day. In \cite{FarmerLillo2006PriceImpact} the authors discussed the issue of price efficiency and of order flow correlations. \cite{citeulike:12825932} recognized the presence of autocorrelations in the flow of metaorders and used a simple market impact model to withdraw the impact of the metaorders traded over the post-execution period. In \cite{citeulike:13266538}, authors used a more sophisticated version of such a cleaning, deconvoluting the decay ``kernel'' from the data; in this paper we perform the same debiasing precedure as in \cite{citeulike:12825932}. Figure~\ref{fig:mk_perm:autocorr} shows the autocorrelations of the market participation rate for the metaorders of our database (bootstrapped quartiles and median). One clearly sees that the autocorrelation persists beyond 20 days after the execution day.

Removing the effect of the autocorrelations of the metaorder flow in order to get an unbiased long term impact picture is not straight forward. The temporary impact of metaorders on the price moves turns out to be far from linear according to the daily participation rate. Following \cite{citeulike:12825932} we fitted a square root model\footnote{Section \ref{sec:temporary} of this paper confirms a power law close to a square root. We use a square root to avoid overfitting and to obtain more robustness. We checked that a power of 0.4 to 0.7 gives similar results.} of the daily participation rate on the temporary impact. To debiased price moves of the executed metaorders on the post-execution period we simply subtracted the temporary impact associated to the aggregated daily participation rate on that date applying our square root model.

\begin{figure}[ht!]
\begin{center}
	\includegraphics[width=1\linewidth]{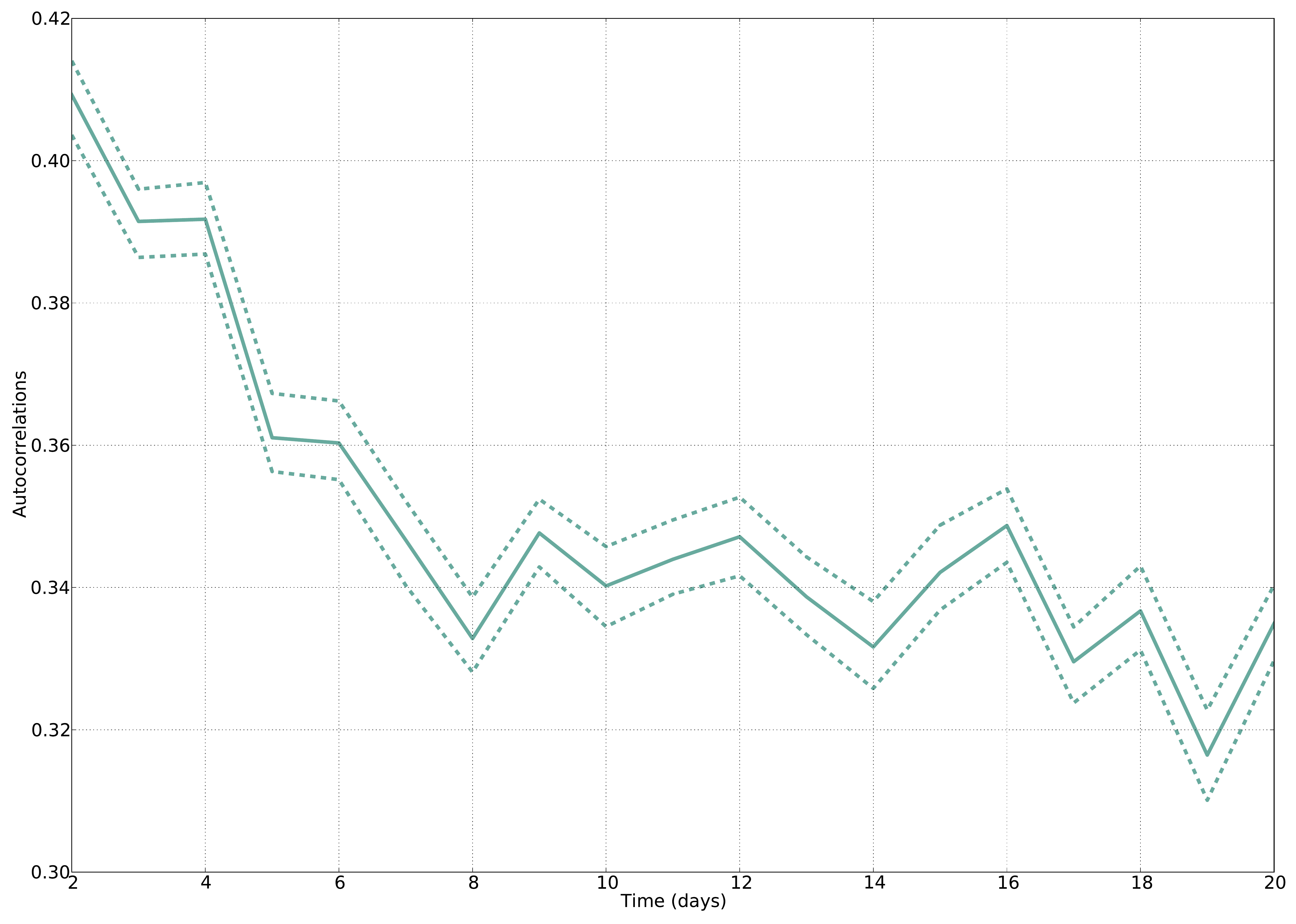}
        \caption{Autocorrelations of the market participation of metaorders with lags in $\{1,\ldots,20\}$,. The dashed curves represent the first and the last boostrapped quartiles, the solid curve is the bootstrapped median.}
	\label{fig:mk_perm:autocorr}
\end{center}
\end{figure}

The meaning of such a ``debiasing'' step needs to be commented. Assume the following initial event: at some point in the time an information potentially affecting the value of a stock is delivered to investors. This news is available at a precise moment, but the reaction of investors to it is diffuse. They will not simultaneously take it into account in their portfolios. Each of them have constraints and processes different enough so that their decisions will span over several days. Each time an investor trades according to this information, she impacts mechanically the price, eventually making the impact of news on prices diffusive. To reconstruct the immediate impact we gathered all the metaorders on date $d=1$ and averaged price moves net of the impact of the metaorders executed the next days by subtracting the square root model of the daily participation rate of those metaorders.


\paragraph{Price moves breakdown into systematic and idiosyncratic components.}
For each metaorder $\omega$ we define a CAPM-like decomposition into a systematic and an idiosyncratic component centred on the execution day over a 41 days period (20 days before and 20 days after the studied day $D$),
\begin{equation}\label{eq:mk_perm:CAPM}
\forall d \in\{D-20,\ldots,D,\ldots,D+20\},\quad\log(P_d)-\log(P_{d-1}) = %
	\beta(\omega)(\log(\Ind_d)-\log(\Ind_{d-1}))+\Delta\IdioComp_d,
\end{equation}
where $D$ is the date of metaorder execution, $P_d$ is the stock's close price on date $d$, and $\beta(\omega)$ implicitly designates the beta of the traded stock on the period from $D-20$ to $D+20$.

We use the reference index for each stock, which price is noted $\calI_d$ on day $d$.
\begin{itemize}
\item We define the idiosyncratic component as the cumulative sum of the $\Delta \IdioComp_d$ from (\ref{eq:mk_perm:CAPM}): $\sum_{k=D}^d \Delta(\IdioComp_k)=\IdioComp_d - \IdioComp_{D-1}$;
\item Similarly we define $\beta(\log(\Ind_d)-\log(\Ind_{D-1}))$ as the systematic component.
\end{itemize}
The systematic, the idiosyncratic components and their sum are considered relatively to the close prices one day before the execution, $D-1$. This cumulative version of (\ref{eq:mk_perm:CAPM}) is also called the post-execution profile and can be considered as an estimate of the permanent impact.

\subsection{Result analysis and figures}
The two figures \ref{fig:mk_perm:post_exec_mi} and \ref{fig:mk_perm:post_exec_unbiaised_mi} present the idiosyncratic, the systematic and the total daily post-execution profiles of price moves. Figure \ref{fig:mk_perm:post_exec_mi} shows ``not-yet-debiased'' post-execution profiles. The observed price jump between day 0 and day 1 is the daily market impact. This jump is visible on the idiosyncratic and the systematic components. Over the period extended from day 1 until day 20 the prices are slowly trending in the same direction as the jump on the execution day and no reversal is seen at all. This holds true for idiosyncratic and systematic components as well.

Figure \ref{fig:mk_perm:post_exec_unbiaised_mi} shows the debiased post-execution profile. The market impact of the metaorders executed the day after the execution day has been removed. Over the post-execution period the price is converging back to a level lower than the one reached on execution day. The idiosyncratic post-execution profile is even reaching its initial level before the end of the observation period of 20 days. Thus the remaining permanent impact 20 days after the execution is entirely explained be the systematic component that is to say by the average level of the market. Going back to our prior about the nature of the metaorder database: the global portfolio constituted of all the traded stocks is the market portfolio; we can say that:
\begin{itemize}
\item once the temporary impact of the correlated metaorders the days after have been removed (see figure \ref{fig:mk_perm:autocorr}),
\item and once the aggregated information (the systematic component) potentially used by metaorders issuers has been isolated and removed,
\end{itemize}
there is \emph{no remaining permanent effect due to the metaorder itself}.
\begin{figure}[ht!]
\begin{center}
	\includegraphics[width=1\linewidth]{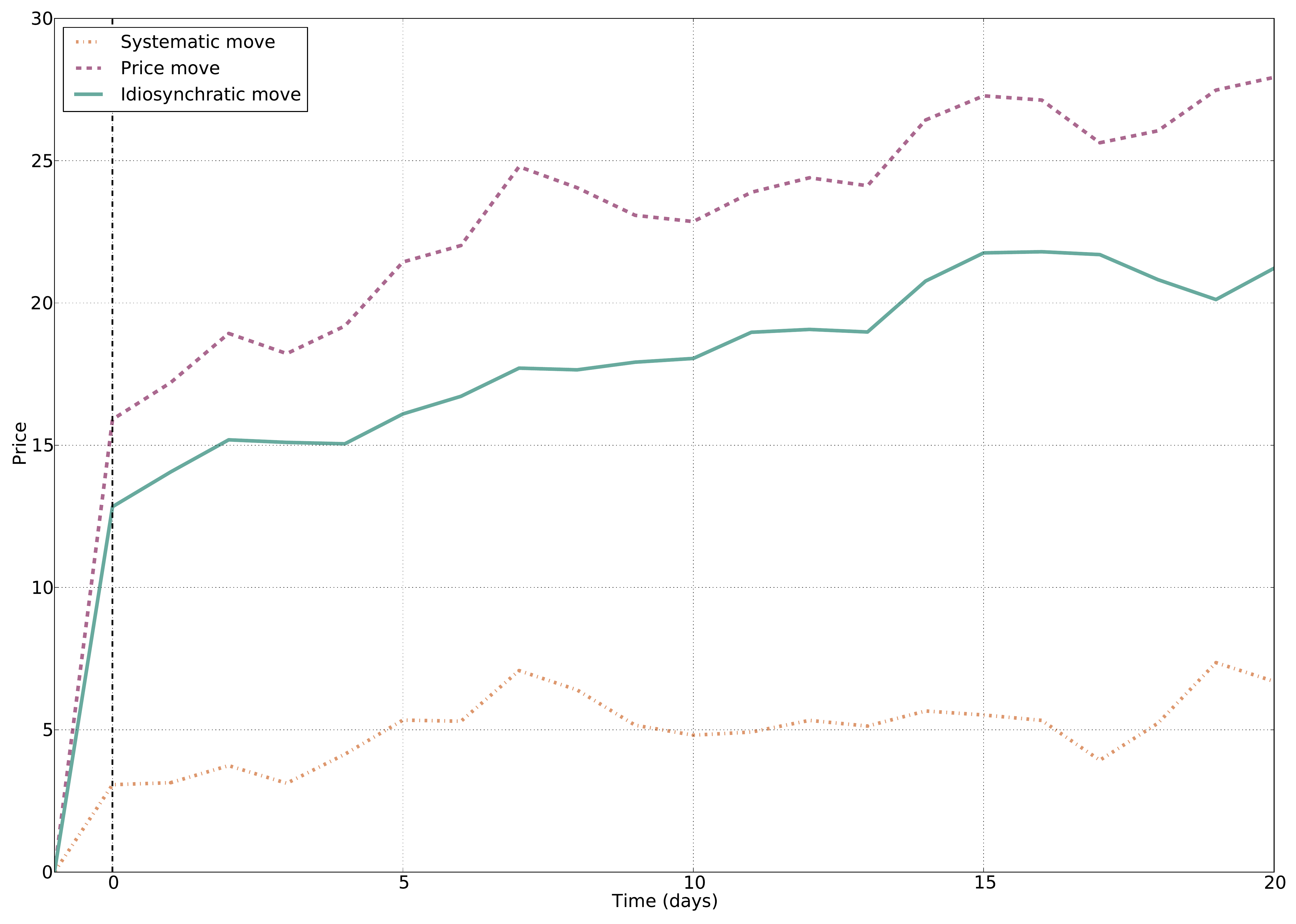}
        \caption{Post-execution profile relatively to close price on the day before execution (units = basis points). Idiosyncratic component + Systematic component = Total component.}
	\label{fig:mk_perm:post_exec_mi}
\end{center}
\end{figure}

\begin{figure}[ht!]
\begin{center}
	\includegraphics[width=1\linewidth]{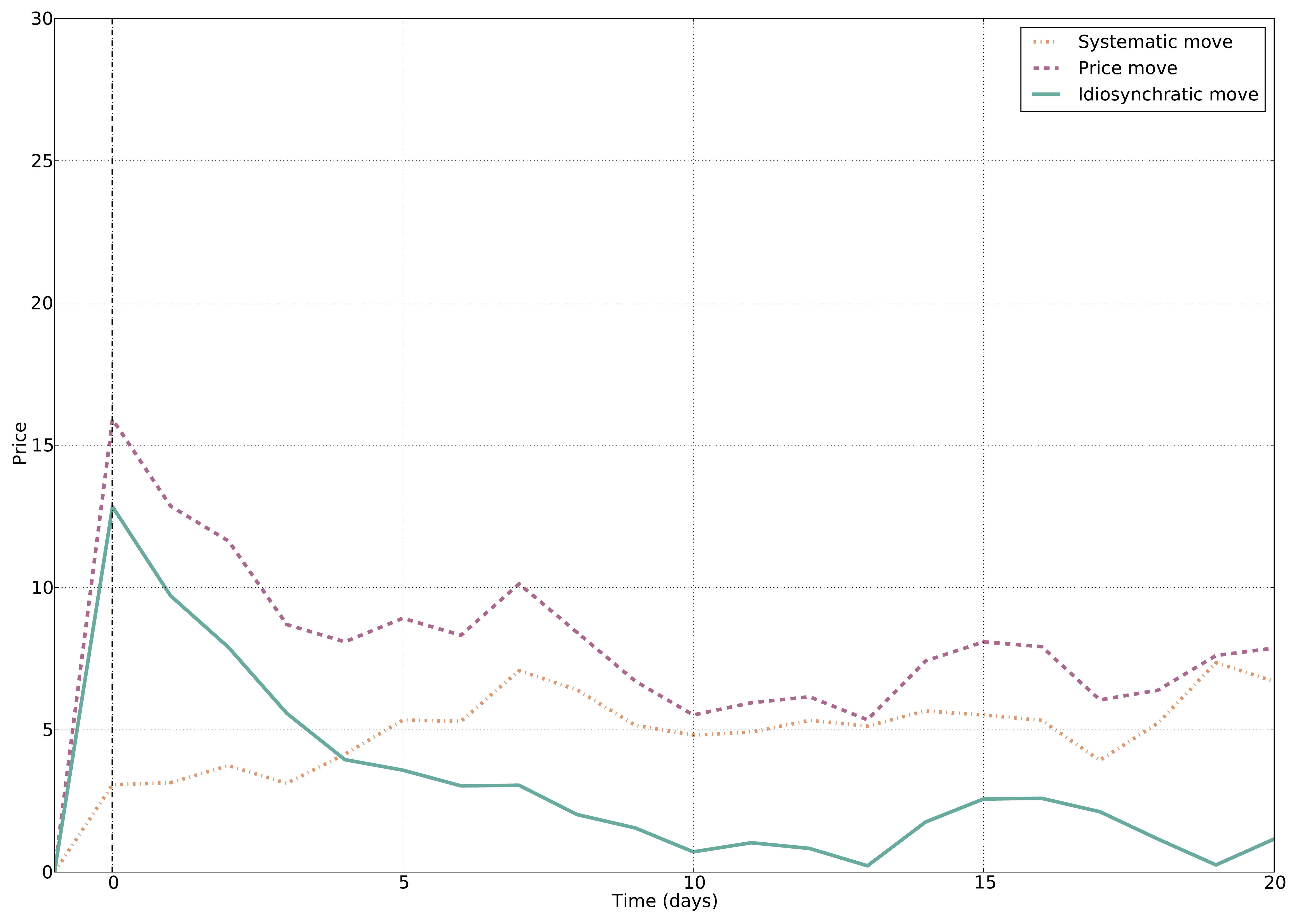}
	\captionof{figure}{Post-execution profile without the impact of other metaorders. Price moves are considered relatively to close price the day before execution (units = basis points). Idiosyncratic component + Systematic component = Total component.}
	\label{fig:mk_perm:post_exec_unbiaised_mi}
\end{center}
\end{figure}


%
\section{Synthesis: Contribution of the life cycle of an large order to the price formation process}

In this paper, we studied a database of metaorders issued by a large execution broker in Europe over 2010.
The specificity of this database is that: it is concentrated in time (it spans one year), and geographically (one zone only), and all the metaorders have been traded electronically using trading algorithms with a very stable trading rate.

We studied the market impact of these metaorders at any scales: intraday, splitting the impact in three phases (transient, temporary, and decay), and daily, with a focus on the long term decay and potential remaning permanent effect of the metaorders in the prices.

Moreover, we proposed a toy model based on Hawkes processes (impulse HIM) to illustrate qualitatively our empirical results and guide our explorations.

For our intraday study, we provided some methodological elements, to make the difference between the use of log-log regression, and direct regression with different metrics. It stems temporary market impact scales in power law of the daily participation, with a smaller exponent when we use robust estimation approaches rather than rough log-log regression.
Thanks to our direct regressions, we identified a duration effect on temporary impact, than may be read at the light of the specificity of our database. The effect of different duration is easier to identify on trading algorithms using an almost constant trading rate than on ones using rates free to change to adapt to price or liquidity opportunity signals.

The transient impact is power law in time according to a power around 0.6, compatible with existing papers. We see there does not seem to be any ``anticipation by the market'' of the duration of the metaorder from the transient phase viewpoint. 

We fit power laws on the decay of the impact, and find power around 0.6, confirming previous academic papers. We cannot exclude a change of slope (or power) after some time, that could lead to a decay in two phases.
In any case the price move does not come back to normal on average before the end of the day.
Our impulse HIM model predicts asymptotically a decay in power laws, and exhibit a crucial parameter $C$, which can be used to characterizes the ratio between contrarian and herding activity in reaction to a metaorder.

In the intraday part of the study, we had not the capability to split the price moves between an informational and a mechanical component. The informational component being the dependence between the fact the metaorder issuer took the investment decision and the future price moves: good portfolio managers should see the price going up during (and after) a buy, and the price going down given they decided to sell. The mechanical component being the ``remaining'': how the very metaorder liquidity consuming pressure moves the price by itself.

In the daily part of the study, we assumed most of the clients of the executing broker were institutional investors anticipating betas in the sense of the CAPM (i.e. market moves). We thus removed the beta part of the price move before studying market impact. After a cleaning phase dedicated to take care of the positive autocorrelation of metaorders as observed in the database, we show the remaining permanent effect goes to zero in a dozen of days, in line with existing studies on cash trades or on metaorders associated with known price anticipation.

This paper focus the attention on methodological aspects and on the importance of taking into account the informational content of metaorders before studying their mechanical impact. This mechanical impact being the main component of trading costs of institutional investors, it is of paramount important to define it properly and to understand it. Our study, confirming and clarifying the few existing ones, opens two doors.
On the one hand, summarizing and unifying existing frameworks for daily and intraday data mining of impact, can be used by other researchers to understand the variations of the parameters of this important component of trading costs, shedding light on the influence of market microstructure on it. Provided than datasets would be available, it could be of great help for regulators. 
On the other hand, our Hawkes based toy model (impulse HIM) could be used for more theoretical work, to obtain a better understanding of the interactions between liquidity providers and liquidity consumers on markets.

\paragraph{Acknowledgements.}
Authors would like to thank 
Jonathan Donier,
Marc Hoffmann, 
Julianus Kockelkoren, 
Iacopo Mastromatteo 
and 
Bence Toth
for fruitful discussions that helped to improve this paper.
We gratefully acknowledge financial
support of the chair {\it Financial Risks}
of the {\it Risk Foundation},
of the chair
{\it Mutating Markets} of the French Federation of Banks and
of the chair
QuantValley/Risk Foundation: Quantitative Management Initiative.



\bibliographystyle{apalike}
\bibliography{MI}

\begin{thebibliography}{}

\bibitem[Alfonsi and Blanc, 2014]{Alfonsi-Blanc}
Alfonsi, A. and Blanc, P. (2014).
\newblock Dynamic optimal execution in a mixed-market-impact hawkes price
  model.

\bibitem[Almgren et~al., 2005a]{citeulike:4325901}
Almgren, R., Thum, C., Hauptmann, E., and Li, H. (2005a).
\newblock {Direct Estimation of Equity Market Impact}.
\newblock {\em Risk}, 18:57--62.

\bibitem[Almgren et~al., 2005b]{ATHL}
Almgren, R., Thum, C., Hauptmann, E., and Li, H. (2005b).
\newblock Direct estimation of equity market impact.
\newblock {\em Risk}, 18:57--62.

\bibitem[Almgren and Chriss, 2000]{OPTEXECAC00}
Almgren, R.~F. and Chriss, N. (2000).
\newblock {Optimal execution of portfolio transactions}.
\newblock {\em Journal of Risk}, 3(2):5--39.

\bibitem[Bacry and Muzy, 2012]{BDM}
Bacry, D. and Muzy, J.~F. (2012).
\newblock Non-parametric kernel estimation for symmetric {H}awkes processes.
  application to high frequency financial data.
\newblock {\em The European Physical Journal (B)}, 85.

\bibitem[Bacry et~al., 2012]{citeulike:9217792}
Bacry, E., Delattre, S., Hoffmann, M., and Muzy, J.~F. (2012).
\newblock {Modeling microstructure noise with mutually exciting point
  processes}.
\newblock {\em Quantitative Finance (forthcoming)}.

\bibitem[Bacry and Muzy, 2013]{BM}
Bacry, E. and Muzy, J. (2013).
\newblock Hawkes model for price and trades high-frequency dynamics.

\bibitem[Bershova and Rakhlin, 2012]{BR}
Bershova, N. and Rakhlin, D. (2012).
\newblock The {Non-Linear} market impact of large trades: Evidence from
  {Buy-Side} order flow.
\newblock Technical report.

\bibitem[Bershova and Rakhlin, 2013]{citeulike:12838207}
Bershova, N. and Rakhlin, D. (2013).
\newblock {The Non-Linear Market Impact of Large Trades: Evidence from Buy-Side
  Order Flow}.
\newblock {\em Social Science Research Network Working Paper Series}.

\bibitem[Bouchard et~al., 2011]{citeulike:5797837}
Bouchard, B., Dang, N.-M., and Lehalle, C.-A. (2011).
\newblock {Optimal control of trading algorithms: a general impulse control
  approach}.
\newblock {\em SIAM J. Financial Mathematics}, 2(1):404--438.

\bibitem[Bouchaud, 2009]{bouchaud}
Bouchaud, J.-P. (2009).
\newblock {Price Impact}.

\bibitem[Bouchaud et~al., 2009]{bouchaud09}
Bouchaud, J.-P., Farmer, J.~D., and Lillo, F. (2009).
\newblock {\em {How Markets Slowly Digest Changes in Supply and Demand}}, pages
  57--160.
\newblock Elsevier.

\bibitem[Bouchaud et~al., 2004]{citeulike:3856893}
Bouchaud, J.-P., Gefen, Y., Potters, M., and Wyart, M. (2004).
\newblock {Fluctuations and Response in Financial Markets: The Subtle Nature of
  'Random' Price Changes}.
\newblock {\em Social Science Research Network Working Paper Series}.

\bibitem[Brokmann et~al., 2014]{citeulike:13266538}
Brokmann, X., Serie, E., Kockelkoren, J., and Bouchaud, J.~P. (2014).
\newblock {Slow decay of impact in equity markets}.

\bibitem[Da~Fonseca and Zaatour, 2014]{FZ}
Da~Fonseca, J. and Zaatour, R. (2014).
\newblock Clustering and mean reversion in a hawkes microstructure model.
\newblock {\em to appear in The Journal of Futures Markets}.

\bibitem[Daley and Vere-Jones, 2003]{daley2003}
Daley, D.~J. and Vere-Jones, D. (2003).
\newblock {\em {An Introduction to the Theory of Point Processes Volume I:
  Elementary Theory and Methods}}.
\newblock Springer, second edition.

\bibitem[David, 1970]{Cox}
David, C. (1970).
\newblock {\em {Renewal theory}}.
\newblock Mathuen Co.

\bibitem[Eisler et~al., 2011]{citeulike:9574751}
Eisler, Z., Bouchaud, J.-P., and Kockelkoren, J. (2011).
\newblock {Models for the impact of all order book events}.

\bibitem[Engle et~al., 2012]{citeulike:4368376}
Engle, R.~F., Ferstenberg, R., and Russell, J.~R. (2012).
\newblock {Measuring and Modeling Execution Cost and Risk}.
\newblock {\em The Journal of Portfolio Management}, 38(2):14--28.

\bibitem[Farmer et~al., 2006]{FarmerLillo2006PriceImpact}
Farmer, J.~D., Gerig, A., Lillo, F., and Mike, S. (2006).
\newblock {Market efficiency and the long-memory of supply and demand: is price
  impact variable and permanent or fixed and temporary?}
\newblock {\em Quantitative Finance}, 6(2):107--112.

\bibitem[Farmer et~al., 2013a]{citeulike:13143978}
Farmer, J.~D., Gerig, A., Lillo, F., and Waelbroeck, H. (2013a).
\newblock How efficiency shapes market impact.
\newblock {\em Quantitative Finance}, 13(11):1743--1758.

\bibitem[Farmer et~al., 2013b]{FGLW}
Farmer, J.~D., Gerig, A., Lillo, F., and Waelbroeck, H. (2013b).
\newblock How efficiency shapes market impact.
\newblock {\em Quantitative Finance}, 13(11):1743--1758.

\bibitem[Farmer et~al., 2004]{citeulike:98140}
Farmer, J.~D., Gillemot, L., Lillo, F., Mike, S., and Sen, A. (2004).
\newblock {What really causes large price changes?}
\newblock {\em Quantitative Finance}, 4(4):383--397.

\bibitem[Gabaix et~al., 2006]{citeulike:7360166}
Gabaix, X., Gopikrishnan, P., Plerou, V., and Stanley, H.~E. (2006).
\newblock {Institutional Investors and Stock Market Volatility}.
\newblock {\em The Quarterly Journal of Economics}, 121(2):461--504.

\bibitem[Garman and Klass, 1980]{GAR80}
Garman, M.~B. and Klass, M.~J. (1980).
\newblock {On the Estimation of Security Price Volatility from Historical
  Data}.
\newblock {\em Journal of Business}, 53(1):67--78.

\bibitem[Gatheral, 2010]{doi:10.1080/14697680903373692}
Gatheral, J. (2010).
\newblock {No-dynamic-arbitrage and market impact}.
\newblock {\em Quantitative Finance}, 10(7):749--759.

\bibitem[Gatheral and Schied, 2012]{citeulike:10363463}
Gatheral, J. and Schied, A. (2012).
\newblock {Dynamical models of market impact and algorithms for order
  execution}.
\newblock In Fouque, J.-P. and Langsam, J., editors, {\em Handbook on Systemic
  Risk (Forthcoming)}. Cambridge University Press.

\bibitem[Gin\'{e}, 1997]{GIN97}
Gin\'{e}, E. (1997).
\newblock {Lectures on some aspects of the bootstrap}.
\newblock In Gin\'{e}, E., Grimmett, G.~R., Coste, L.~S., and Bernard, P.,
  editors, {\em Ecole d'\'{e}t\'{e} de Probabilit\'{e}s de {Saint-Flour},
  {XXVI}}, volume 1665 of {\em Lecture Notes in Math}, pages 37--152. Springer
  Verlag.

\bibitem[Hautsch and Huang, 2012]{citeulike:10138505}
Hautsch, N. and Huang, R. (2012).
\newblock {The market impact of a limit order}.
\newblock {\em Journal of Economic Dynamics and Control}, 36(4):501--522.

\bibitem[Hewlett, 2006]{Hewlett1}
Hewlett, P. (2006).
\newblock {Clustering of order arrivals, price impact and trade path
  optimisation}.
\newblock In {\em Workshop on Financial Modeling with Jump processes}. Ecole
  Polytechnique.

\bibitem[Kyle, 1985]{citeulike:3320208}
Kyle, A.~P. (1985).
\newblock {Continuous Auctions and Insider Trading}.
\newblock {\em Econometrica}, 53(6):1315--1335.

\bibitem[Labadie and Lehalle, 2014]{citeulike:10673860}
Labadie, M. and Lehalle, C.-A. (2014).
\newblock {Optimal starting times, stopping times and risk measures for
  algorithmic trading}.
\newblock {\em The Journal of Investment Strategies}, 3(2).

\bibitem[Lehalle and Dang, 2010]{citeulike:6728746}
Lehalle, C.-A. and Dang, N.~M. (2010).
\newblock {Rigorous post-trade market impact measurement and the price
  formation process}.
\newblock {\em Liquidity Guide}.

\bibitem[Lehalle and Laruelle, 2013]{citeulike:12047995}
Lehalle, C.-A. and Laruelle, S., editors (2013).
\newblock {\em {Market Microstructure in Practice}}.
\newblock World Scientific publishing.

\bibitem[Mastromatteo et~al., 2013]{citeulike:12802757}
Mastromatteo, I., Toth, B., and Bouchaud, J.-P. (2013).
\newblock {Agent-based models for latent liquidity and concave price impact}.

\bibitem[Moro et~al., 2009]{citeulike:9771410}
Moro, E., Vicente, J., Moyano, L.~G., Gerig, A., Farmer, J.~D., Vaglica, G.,
  Lillo, F., and Mantegna, R.~N. (2009).
\newblock {Market impact and trading profile of hidden orders in stock
  markets}.
\newblock {\em Physical Review E}, 80:066102+.

\bibitem[Potters and Bouchaud, 2003]{citeulike:1204462}
Potters, M. and Bouchaud, J.~P. (2003).
\newblock More statistical properties of order books and price impact.
\newblock {\em Physica A}, 324:133--140.

\bibitem[S.J.~Hardiman and Bouchaud, 2013]{HBB}
S.J.~Hardiman, N.~B. and Bouchaud, J. (2013).
\newblock Critical reflexivity in financial markets: a hawkes process analysis.
\newblock {\em The European Physical Journal (B)}, 86.

\bibitem[Toth et~al., 2011a]{TLDLKB}
Toth, B., Lemperiere, Y., Deremble, C., de~Lataillade, J., Kockelkoren, J., and
  Bouchaud, J. (2011a).
\newblock {Anomalous price impact and critical nature liquidity in financial
  markets}.

\bibitem[Toth et~al., 2011b]{citeulike:10477292}
Toth, B., Lemp?i?e, Y., Deremble, C., de~Lataillade, J., Kockelkoren, J., and
  Bouchaud, J.-P. (2011b).
\newblock Anomalous price impact and the critical nature of liquidity in
  financial markets.

\bibitem[Waelbroeck and Gomes, 2013]{citeulike:12825932}
Waelbroeck, H. and Gomes, C. (2013).
\newblock {Is Market Impact a Measure of the Information Value of Trades?
  Market Response to Liquidity vs. Informed Trades}.
\newblock {\em Social Science Research Network Working Paper Series}.

\bibitem[Zarinelli et~al., 2015]{lastlillo}
Zarinelli, E., Treccani, M., and Farmer, J. Doyne~Lillo, F. (2015).
\newblock {Beyond the square root: Evidence for logarithmic dependence of
  market impact on size and participation rate}.
\newblock {\em In preparation}.

\end{thebibliography}


\clearpage
\appendix

\section{Proofs}
\label{proofs}
\textbf{Proof of Proposition \ref{theprop}}

\noindent
This is a consequence of a more general result on Hawkes Processes:
\begin{theorem}
Let $(N_1,\dots,N_d)$ be a $d$-multivariate Hawkes process defined by his intensity $\lambda=(\lambda_1,\dots,\lambda_d)$:
\begin{equation}
\label{eq:def_Hawkes}
\lambda_t = \mu(t) + \int_{[0,t)} \Phi(t-s)dN_s
\end{equation}
where $\mu(t) = (\mu_1(t),\dots,\mu_d(t))$ is a vector of functions 
from $\mathbb{R}_+$ to $\mathbb{R}_+$ and
$\Phi(t)=(\varphi_{ij}(t))_{1\leq i,j \leq d}$ is a matrix containing functions from $\mathbb{R}_+$ to $\mathbb{R}_+$. \\
Under the assumptions:
\begin{itemize}
\item $\int_0^t \mu(s)ds < \infty\;\;\;\forall t>0,$
\item  $\textrm{the spectral radius of the matrix\;} K=\int_0^{\infty} \Phi(t)dt \textrm{\;is inferior to\;} 1,\textrm{\;}\rho(K)<1,$
\end{itemize}
we have
\begin{equation}
\label{eq:mean_hawkes}
\mathbb{E}[N_t] = h(s) + \int_0^t \Psi(t-s)h(s)ds,
\end{equation}
where $\Psi = \sum_{n\geq 1} \Phi^{(\star n)}$ and $h(s) = \int_0^t \mu(s)ds$,
where $\Phi^{(\star n)} = \Phi \star \ldots \star \Phi$ (with $n$ terms on the right hand-side) and where the convolution product of two matrices $A(t)=\{a_{ij}(t)\}$ and $B(t)=\{b_{ij}(t)\}$ is defined as the matrix $C(t)=\{c_{ij}(t)\}$, such that
$$
c_{ij}(t) = \sum_k a_{ik} \star b_{kj}.
$$
\end{theorem}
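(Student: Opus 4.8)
\medskip
\noindent\textbf{Proof strategy for the Theorem.}
The plan is to reduce everything to the mean intensity $m(t):=\mathbb{E}[\lambda_t]$ (a vector of functions on $\mathbb{R}_+$), to derive a linear convolution (renewal) equation for $m$, to solve that equation by a Neumann series, and finally to integrate in time to recover $\mathbb{E}[N_t]$.

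\emph{Step 1: the renewal equation for $m$.} I would take expectations in \eqref{eq:def_Hawkes}. For fixed $t$ the integrand $s\mapsto\Phi(t-s)$ on $[0,t)$ is deterministic, hence predictable, so the defining property of the compensator $\int_0^{\cdot}\lambda_s\,ds$ of $N$ gives $\mathbb{E}\big[\int_{[0,t)}\Phi(t-s)\,dN_s\big]=\mathbb{E}\big[\int_{[0,t)}\Phi(t-s)\lambda_s\,ds\big]$; since every entry of $\Phi$ and of $\lambda$ is nonnegative, Tonelli's theorem lets one move the expectation inside the $ds$-integral. This yields, as an identity in $[0,\infty]$ before any finiteness is invoked,
\[ m(t)=\mu(t)+\int_0^t\Phi(t-s)\,m(s)\,ds,\qquad\text{i.e. }\ m=\mu+\Phi\star m. \]

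\emph{Step 2: solving by a Neumann series.} Set $\Psi:=\sum_{n\ge1}\Phi^{(\star n)}$. A single Fubini exchange gives $\int_0^\infty(\Phi\star A)(t)\,dt=K\int_0^\infty A(t)\,dt$ for any nonnegative matrix-valued $A$, and by induction (associativity of $\star$ preserves the order of the matrix products) $\int_0^\infty\Phi^{(\star n)}(t)\,dt=K^n$. Since $\rho(K)<1$, the partial sums $\sum_{n=1}^{N}K^n$ increase to the finite matrix $K(I-K)^{-1}$; as the entries of the $\Phi^{(\star n)}$ are nonnegative, monotone convergence shows $\Psi$ is a well-defined locally integrable kernel and that $\widetilde m:=\mu+\Psi\star\mu$ is locally integrable and satisfies $\widetilde m=\mu+\Phi\star\widetilde m$. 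To identify $m$ with $\widetilde m$ one needs $m$ itself locally integrable, which is the standard non-explosion property of a subcritical Hawkes process (see \cite{daley2003}, or the Poisson-cluster representation where the expected cluster size is finite exactly when $\rho(K)<1$). Given local integrability, any two solutions $x_1,x_2$ of $x=\mu+\Phi\star x$ have a difference $\delta:=x_1-x_2$ with $\delta=\Phi\star\delta=\cdots=\Phi^{(\star n)}\star\delta$ for every $n$, and $\|\Phi^{(\star n)}\|_{L^1([0,T])}\to0$ as $n\to\infty$ forces $\delta\equiv0$ on each $[0,T]$. Hence $m=\mu+\Psi\star\mu$.

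\emph{Step 3: integrating in time.} Finally $\mathbb{E}[N_t]=\mathbb{E}\big[\int_0^t\lambda_s\,ds\big]=\int_0^t m(s)\,ds=\int_0^t\mu(s)\,ds+\int_0^t(\Psi\star\mu)(s)\,ds$. Writing $h(t):=\int_0^t\mu(s)\,ds$ and exchanging the order of integration twice, one checks
\[ \int_0^t(\Psi\star\mu)(s)\,ds=\int_0^t\Big(\int_0^{t-u}\Psi(w)\,dw\Big)\mu(u)\,du=\int_0^t\Psi(t-s)\,h(s)\,ds, \]
which, combined with $h(t)=\int_0^t\mu(s)\,ds$, is exactly \eqref{eq:mean_hawkes}. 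The main obstacle is not any single computation but the bookkeeping of finiteness: one must know a priori that $m$ is locally integrable and that the series $\sum_n\Phi^{(\star n)}$ converges, and both rest precisely on the hypothesis $\rho(K)<1$ through $\sum_nK^n<\infty$. Once nonnegativity is used to license the Tonelli/Fubini exchanges, the rest is routine.
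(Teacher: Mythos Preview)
Your proof is correct and follows the same underlying idea as the paper --- derive a linear convolution (renewal) equation, observe that $\rho(K)<1$ makes the Neumann series $\Psi=\sum_{n\ge1}\Phi^{(\star n)}$ summable since $\int_0^\infty\Phi^{(\star n)}=K^n$, and read off the solution --- but you organize it slightly differently. The paper writes the renewal equation directly for $\mathbb{E}[N_t]$: starting from $\mathbb{E}[N_t]=\mathbb{E}\!\int_0^t\lambda_s\,ds$, it uses Fubini and an integration by parts on $\int_0^t F(t-s)\,dN_s$ (with $F(t)=\int_0^t\Phi$) to obtain
\[
\mathbb{E}[N_t]=h(t)+\int_0^t\Phi(t-s)\,\mathbb{E}[N_s]\,ds,
\]
and then simply invokes classical renewal theory (Cox) for the solution \eqref{eq:mean_hawkes}. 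You instead take expectations in \eqref{eq:def_Hawkes} to get the renewal equation for $m(t)=\mathbb{E}[\lambda_t]$, solve it, and only then integrate in time, which costs you the extra Fubini identity $\int_0^t(\Psi\star\mu)(s)\,ds=\int_0^t\Psi(t-s)h(s)\,ds$ at the end. The trade-off is that you avoid the integration-by-parts step, and in exchange you are more explicit than the paper about uniqueness and about the local integrability of $m$ needed to justify it; the paper sweeps these under the reference to Cox. Either route is fine and neither is materially shorter.
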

\begin{proof}
Let us first remark the elements of the matrix $\Psi$ are in $L^1$. Indeed, by induction we have,
$\int_0^{\infty} \Phi^{(\star n)}(t)dt = K^n$ and  since $\rho(K)<1$, the series $\psi=\sum_{n \geq 1} K^n$ is finite component-wise, thus one gets:
\begin{equation}
\label{eq:int_Psi}
\int_0^{\infty} \Psi(t)dt = K(Id-K)^{-1}.
\end{equation}
We will now show that:
\begin{equation}
\mathbb{E}[N_t] = \int_0^t \mu(s)ds + \mathbb{E}[\int_0^t \Phi(t-s)N_sds]\quad \forall t>0
\end{equation}
Using that $N_t - \int_0^{t} \lambda_sds$ is a $(\mathscr{F}_t)$-martingale (where $(\mathscr{F}_t)_{t\geq 0}$ is the $\sigma$-algebra generated by the random variables $N^i_s; s\leq t; 1\leq i \leq d$), we have:
$$\mathbb{E}[N_t] = \mathbb{E}[\int_0^{t} \lambda_sds] = \mathbb{E}[\int_0^{t}\mu(s)ds] + \mathbb{E}[\int_0^{t}ds\int_0^s \Phi(s-u)dN_u].$$
But, by Fubini's theorem:
$$\int_0^{t}ds\int_0^s \Phi(s-u)dN_udu = \int_0^{t}\Big(\int_s^{t}\Phi(t-u)dt\Big)dN_u = \int_0^{t} \Big(\int_0^{t-s}\Phi(s)ds\Big)dN_u.$$
We denote $F(t) = \int_0^t \Phi(s)ds$ and using an integration by parts we have:
$$\int_0^{t} F(t-s)dN_s = \Big[ F(t-s) N_s\Big]_0^{t} + \int_0^{t} \Phi(t-s)N_sds=$$
$$=F(0)N_{t}-F(t)N_0 + \int_0^{t} \Phi(t-s)N_sds = \int_0^{t} \Phi(t-s)N_sds.$$
So we obtain:
$$\mathbb{E}[N_{t}]=\mathbb{E}[\int_0^{t}\mu(s)ds] + \mathbb{E}[\int_0^{t} \Phi(t-s)N_sds].$$
Using once again Fubini's theorem:
\begin{equation}
\mathbb{E}[N_t]= \int_0^t \mu(s)ds + \int_0^t \Phi(t-s)\mathbb{E}[N_s]
\end{equation}

This is a classical renewal equation and the solution is given by (\ref{eq:mean_hawkes}). The interested reader can find more on renewal theory on the book of David Cox (\cite{Cox}).\\
\end{proof}

Let us now prove the first part of Proposition \ref{theprop}. In order to ease further
notation we take $t_0=0$. We apply the previous theorem in the particular case of the $2$-dimensional Hawkes process $(J^+,J^-)$ with $\Phi=\begin{pmatrix}
   0 & \varphi \\
   \varphi & 0
\end{pmatrix}$
and we successively compute:
\begin{itemize}
\item $\Phi^{\star n}=\begin{pmatrix}
   0 & \varphi^{\star n} \\
   \varphi^{\star n} & 0
\end{pmatrix}$ if $n$ is even and $\Phi^{\star n}=\begin{pmatrix}
   \varphi^{\star n} & 0 \\
   0 & \varphi^{\star n}
\end{pmatrix}$ if $n$ is odd.
\item $h(t)=(h_1(t),h_2(t))= (t\mu+\int_0^t (f(r)\star g^+)(u)du,t\mu+\int_0^t (f(r)\star g^-)(u)du)$
\item $\eta_t=\mathbb{E}[J^+_t-J^-_t] = (h_1(t)-h_2(t)) - \kappa \star (h_1-h_2)(t)$.
\end{itemize}
This proves Eq. \ref{eq:theoretical_mi_computation} since $h_1-h_2=f(r)\star G$.

\vskip .3cm
\noindent
In the case of impulsive-HIM, we set
$H(t) = G(t)-G\star \kappa(t)$ and $C_1 = C/||\varphi||_1$. Let us compute the derivative of $H$ for $t > 0$ : 
$$H'(t)=G'(t)-G'\star \kappa (t)= -C_1\varphi(t)-(\delta-C_1\varphi)\star \kappa(t) =  -C_1\varphi(t) -\kappa(t)
+C_1 \varphi \star \kappa(t)$$
Using:
$$\varphi\star\kappa = \varphi\star \sum_{n=1}^{\infty} (-1)^{n+1}\varphi^{(\star n)} = \sum_{n=2}^{\infty} (-1)^{n}\varphi^{(\star n)} = \varphi - \kappa,$$
we obtain
$$H'_t = -(1+C_1)\kappa(t).$$
This allows to find  $H(t) = 1 -(1+C_1) \int_0^t \kappa(s)ds$ which proves \eqref{res1}. Eq. \eqref{eq:permihim} 
is a direct consequence of this equation when the rate $r_t$ is constant. 

\vskip .5cm
\noindent
\textbf{Proof of Corollary \ref{lem:linl_phi_psi}} \\
This Corollary is a direct consequence of the following Lemma (which links the power-law exponent of $\varphi$ with the one of $\kappa$) and of the expression \eqref{res1}.
\begin{lem}
Let $p \in [0,1]$. Under the assumption $\varphi \geq \varphi^{(\star 2)}$,  $\int_{[0,\infty)} t^p\varphi(t)dt < \infty$ if and only if $\int_{[0,\infty)} t^p\kappa(t)dt < \infty$.
\end{lem}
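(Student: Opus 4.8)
The plan is to establish two pointwise facts about $\kappa$ — the sandwich $0\le\kappa\le\varphi$ and the resolvent expansion $\varphi=\sum_{n\ge1}\varphi$-free series $\sum_{n\ge1}\kappa^{(\star n)}$ — and then to transfer $p$-th moments across convolutions using the subadditivity of $t\mapsto t^p$ on $\mathbb{R}_+$, valid precisely because $p\le1$. Throughout I write $I_\varphi:=\int_0^\infty t^p\varphi(t)\,dt$ and $I_\kappa:=\int_0^\infty t^p\kappa(t)\,dt$, and I use the standing assumption $\|\varphi\|_1<1$.

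First I would use the hypothesis $\varphi\ge\varphi^{(\star2)}$ to show that the sequence $(\varphi^{(\star n)})_{n\ge1}$ is pointwise non-increasing: convolving $\varphi\ge\varphi^{(\star2)}$ with the nonnegative kernel $\varphi^{(\star m)}$ gives $\varphi^{(\star(m+1))}\ge\varphi^{(\star(m+2))}$ for every $m\ge0$. Since $\kappa=\sum_{n\ge1}(-1)^{n-1}\varphi^{(\star n)}$ is an alternating series of nonnegative, pointwise-decreasing terms, grouping consecutive pairs in the two possible ways yields $0\le\kappa\le\varphi$ pointwise. In particular $I_\kappa\le I_\varphi$, which already gives one implication, and $\|\kappa\|_1\le\|\varphi\|_1<1$.

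For the converse I would invert the identity $\varphi\star\kappa=\varphi-\kappa$ derived in the proof of Proposition \ref{theprop}. Rewriting it as $\varphi=\kappa+\kappa\star\varphi$ and iterating gives $\varphi=\sum_{n=1}^{N}\kappa^{(\star n)}+\kappa^{(\star N)}\star\varphi$, where the remainder is nonnegative with $L^1$ norm at most $\|\kappa\|_1^{N}\|\varphi\|_1\to0$; since all terms are nonnegative, monotone convergence gives $\varphi=\sum_{n\ge1}\kappa^{(\star n)}$ a.e. Now using that $t\mapsto t^p$ is subadditive, i.e. $(s_1+\cdots+s_n)^p\le s_1^p+\cdots+s_n^p$, one bounds $\int_0^\infty t^p\kappa^{(\star n)}(t)\,dt\le n\,\|\kappa\|_1^{\,n-1}I_\kappa$ by splitting the $n$ ways; summing over $n$ by Tonelli gives $I_\varphi\le I_\kappa\sum_{n\ge1}n\|\kappa\|_1^{\,n-1}=I_\kappa(1-\|\kappa\|_1)^{-2}<\infty$, completing the equivalence.

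The step I expect to be the main obstacle is this hard direction: one must recognize that controlling $I_\varphi$ in terms of $I_\kappa$ requires the \emph{full} resolvent expansion $\varphi=\sum_n\kappa^{(\star n)}$ (a single-term identity such as $\varphi=\kappa+\kappa\star\varphi$ only bounds $I_\varphi$ in terms of itself and is vacuous a priori), and then to dominate the $p$-th moment of an $n$-fold convolution — which the subadditivity of $t^p$ delivers, with the geometric factor $\|\kappa\|_1<1$ ensuring summability of the resulting series. The nonnegativity of $\kappa$, which is exactly where the assumption $\varphi\ge\varphi^{(\star2)}$ is consumed, is what legitimizes both the sandwich bound and the monotone-convergence manipulations above.
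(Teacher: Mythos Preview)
Your proof is correct and follows essentially the same route as the paper's: both obtain the sandwich $0\le\kappa\le\varphi$ from the hypothesis $\varphi\ge\varphi^{(\star2)}$ for the easy direction, both use the resolvent expansion $\varphi=\sum_{n\ge1}\kappa^{(\star n)}$ together with the subadditivity of $t\mapsto t^p$ to push the $p$-th moment across the convolutions, and both land on the same bound $I_\varphi\le I_\kappa(1-\|\kappa\|_1)^{-2}$. The only cosmetic differences are that the paper derives the resolvent expansion via Fourier inversion of $\hat\kappa=\hat\varphi/(1+\hat\varphi)$ whereas you iterate $\varphi=\kappa+\kappa\star\varphi$ with monotone convergence, and the paper states the moment estimate recursively as $I_{n+1}\le c^n I_1+cI_n$ while you give the equivalent closed form $I_n\le n c^{n-1}I_1$ directly.
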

\begin{proof}
We recall that $\kappa =\sum_{n=1}^{\infty} (-1)^{n+1}\varphi^{(\star n)}$. Which in the Fourier domain becomes:
$$\hat{\kappa}(\omega) =\sum_{n=1}^{\infty} (-1)^{n+1}\hat{\varphi}^n(\omega)=\frac{\hat{\varphi}(\omega)}{1+\hat{\varphi}(\omega)},$$
By inverting the previous formula we get:
$$\hat{\varphi}(\omega) = \frac{\hat{\kappa}(\omega)}{1 - \hat{\kappa}(\omega)} = \sum_{n=1}^{\infty} \hat{\kappa}^n(\omega).$$
Returning in the time-domain we have:
\begin{equation}
\varphi = \sum_{n=1}^{\infty} \kappa^{(\star n)}.
\end{equation} 
Let us remark the asumption $\varphi \geq \varphi^{(\star 2)}$ ensures that $\kappa$ is positive.  
We are ready for the proof
\begin{itemize}
\item Let us assume that $\int_0^{\infty} t^p \varphi(t)dt < \infty$. \\
Since $\kappa(t)\geq 0$, it is easy to see that $\kappa^{(\star n)}\geq 0$. So $0 \leq \kappa(t) \leq \varphi(t)$ and than $\int_0^{\infty} t^p \psi_d(t)dt < \infty$ is evident.
\item Let us assume that $\int_0^{\infty} t^p \kappa(t)dt < \infty$.\\
Let $I_n = \int_0^{\infty} t^p \kappa^{(\star n)}(t)dt$, $\forall n\geq 1$, and $\int_0^{\infty} \kappa(t)dt = c$, $c \in \mathbb{R}_+$. Using that the function $t^p$ is concave for $p\in[0,1]$, we get:
$$I_{n+1} = \int_0^{\infty} t^p \Big(\int_0^t \kappa(t-s)\kappa^{(\star n)}(s)ds\Big)dt = \int_0^{\infty}  \Big(\int_0^{\infty} (t+s)^p \kappa(t)dt\Big)\kappa^{(\star n)}(s)ds$$
$$\leq \int_0^{\infty}  \Big(\int_0^{\infty} (t^p+s^p) \kappa(t)dt\Big)\kappa^{\star(n)}(s)ds  = \int_0^{\infty}\Big( I_1 + cs^p \Big)\kappa^{(\star n)}(s)ds  = c^nI_1 + cI_n.$$
Therefore for all integer $N$:
$$\sum_{i=1}^N I_n \leq I_1 + \Big( \sum_{n=1}^{N-1} c^n \Big)I_1 + c \sum_{n=1}^{N-1} I_n.$$
And we easily obtain:
$$\sum_{n=1}^{N-1} \leq \frac{I_1}{(1-c)^2}$$
Thus, for $N \to \infty$:
$$\int_0^{\infty} t^p \varphi(t)dt = \sum_{n=1}^{\infty} I_n \leq \frac{I_1}{(1-c)^2} < \infty$$
\end{itemize}
\end{proof}

\section{Trading algorithms}
\label{sec:algonames}


\begin{itemize}
\item a PoV (i.e. Percentage of Volume) is a trading algorithm for which the volumes of the transactions stays in a narrow band (of width of the order of a few percents) around a constant chosen as a fixed fraction of the estimated daily volume. Thus, "smart" limit orders are sent as long as the integrated volume is in the band. When it is no longer the case, these limit orders are canceled and market orders are sent instead.
\item a VWAP (i.e. Volume Weighted Average Price) is a trading algorithm parameterized by a start time and an end time, which tries to make the integrated transaction volume  to be as close as possible to
the \emph{average intraday volume curve} of the traded security (i.e. the \emph{U-shaped pattern} on the US stocks for instance, see \cite{citeulike:12047995} Chapter 2.1. for details about fixing auctions and intraday volume curves around the world).
It means that the transaction volumes  are higher (resp. lower) during the period of high (resp. low) averaged activity.
\item a IS (Implementation Shortfall) is the typical implementation of
  an Almgren-Chriss like algorithm \cite{citeulike:10673860}. It targets to minimize the expected slippage between the average price and the decision price, including a penalization by a risk factor exposure (and using an instantaneous market impact model, usually in square root).
\end{itemize}
For more details about trading algorithms, see Chapter 3.3 of \cite{citeulike:12047995}.

\section{Databases}
\label{sec:datatables}

\begin{table}[h!]
  \centering
  \begin{tabular}{|c|c|c|c|c|} \hline
    Stock Exchange & $\Omega^{(day)}$ & $\Omega^{(te)}$ & $\Omega^{(tr)}$ & $\Omega^{(de)}$ \\\hline
Amsterdam & 4.89$\%$ & 5.43$\%$ & 5.18$\%$ & 5.65$\%$  \\\hline
Frankfurt  & 13.34$\%$ & 13.27$\%$ & 13.37$\%$ & 14.54$\%$  \\\hline
London  & 25.84$\%$ & 25.56$\%$ & 24.13$\%$ & 21.62$\%$  \\\hline
Madrid  & 4.71$\%$ & 5.02$\%$ & 5.05$\%$ & 5.21$\%$  \\\hline
Milan &  6.08$\%$ & 5.98$\%$ & 5.34$\%$ & 5.120$\%$  \\\hline
Paris & 22.64$\%$ & 26.04$\%$ & 27.94$\%$ & 29.90$\%$  \\\hline
Others & 22.50$\%$ & 18.67$\%$ & 18.95$\%$ & 17.92$\%$  \\\hline
  \end{tabular}
  \caption{Meta-orders distribution on different European Stock Exchanges for different databases}
  \label{tab:data:dist}
\end{table}

\begin{table}[!h]
\centering
	\begin{tabular}{|c|c|c|c|c|c|c|c|}\hline
Characteristic & Database & Mean & Q5\% & Q25\% & Q50\% & Q75\% & Q95\% \\\hline
\multirow{4}{*}{$N$} & $\Omega^{(day)}$ & 73 & 11 & 19 & 50 & 80 & 249\\
& $\Omega^{(te)}$ & 77 & 11 & 20 & 40 & 87 & 261\\
& $\Omega^{(tr)}$ & 100 & 13 & 28 & 57 & 117 & 326\\
& $\Omega^{(de)}$ & 78 & 12 & 24 & 47 & 94 & 248\\\hline

\multirow{4}{*}{$T$ (min)} & $\Omega^{(day)}$ & 130 & 1.6 & 13.4 & 52 & 262 & 415\\
& $\Omega^{(te)}$ & 122 & 1.1 & 11.1 & 49 & 239 & 410\\
& $\Omega^{(tr)}$ & 85 & 4.5 & 12.8 & 33 & 102 & 339\\
& $\Omega^{(de)}$ & 40 & 4.1 & 9.8 & 23 & 56 & 140\\\hline

\multirow{4}{*}{$R$} & $\Omega^{(day)}$ & 2.12$\%$ & 0.07$\%$ & 0.30$\%$ & 0.73$\%$ & 2.33$\%$ & 8.69$\%$\\
& $\Omega^{(te)}$ & 1.78$\%$ & 0.06$\%$ & 0.25$\%$ & 0.68$\%$ & 1.89$\%$ & 7.19$\%$\\
& $\Omega^{(tr)}$ & 2.09$\%$ & 0.17$\%$ & 0.48$\%$ & 1.15$\%$ & 2.56$\%$ & 7.21$\%$\\
& $\Omega^{(de)}$ & 1.20$\%$ & 0.14$\%$ & 0.32$\%$ & 0.70$\%$ & 1.51$\%$ & 4$\%$\\\hline

\multirow{4}{*}{$r$} & $\Omega^{(day)}$ & 14.25$\%$ & 0.43$\%$ & 3.35$\%$ & 12.11$\%$ & 22.57$\%$ & 34.12$\%$\\
& $\Omega^{(te)}$ & 13.82$\%$ & 0.38$\%$ & 3.12$\%$ & 11.48$\%$ & 21.93$\%$ & 34$\%$\\
& $\Omega^{(tr)}$ & 16.01$\%$ & 3.85$\%$ & 7.98$\%$ & 15.23$\%$ & 22.76$\%$ & 32.41$\%$\\
& $\Omega^{(de)}$ & 16.99$\%$ & 2.45$\%$ & 9.36$\%$ & 17$\%$ & 24.13$\%$ & 32.97$\%$\\\hline

\multirow{4}{*}{$\sigma$} & $\Omega^{(day)}$ & 30.32$\%$ & 11.27$\%$ & 17.94$\%$ & 24.89$\%$ & 35.56$\%$ & 63.18$\%$\\
& $\Omega^{(te)}$ & 28.10$\%$ & 10.65$\%$ & 16.91$\%$ & 23.69$\%$ & 33.50$\%$ & 58.98$\%$\\
& $\Omega^{(tr)}$ & 27.89$\%$ & 10.68$\%$ & 16.89$\%$ & 23.62$\%$ & 33.388$\%$ & 58.35$\%$\\
& $\Omega^{(de)}$ & 28.92$\%$ & 10.97$\%$ & 17.38$\%$ & 24.42$\%$ & 34.50$\%$ & 60.76$\%$\\\hline

\multirow{4}{*}{$\psi$ (bp)} & $\Omega^{(day)}$ & 15.56 & 3.57 & 6.27 & 10.29 & 16.83 & 44.92\\
& $\Omega^{(te)}$ & 12.08 & 3.41 & 5.57 & 9.01 & 13.58 & 31.97\\
& $\Omega^{(tr)}$ & 11.98 & 3.57 & 6.16 & 9.73 & 14.17 & 28.85\\
& $\Omega^{(de)}$ & 1.075 & 3.40 & 5.47 & 8.64 & 12.705 & 25.75\\\hline

\end{tabular}
\caption{Statistics (mean and quantiles) on the main characteristics distribution of meta-orders for different databases.}
  \label{tab:stat:all}
\end{table}


\end{document}